\newcommand{\real}{{\mathbb R}}
\def\st{\mathrm{\quad s.t.\quad}}
\def\half{\frac{1}{2}}
\def\X{\mathbf X}
\def\S{\mathbf S}
\def\bfA{{\bf A}}
\def\calA{\mathcal A}
\def\B{{\bf B}}
\def\C{{\bf C}}
\def\D{{\bf D}}
\def\U{{\bf U}}
\def\bSigma{\boldsymbol\Sigma}
\def\hSigma{\hat{\bSigma}}
\def\tSigma{\tilde{\bSigma}}
\def\bbSigma{\bar{\bSigma}}
\def\I{{\mathcal I}}
\def\diag{\mathrm{\bf diag}}
\def\diam{\mathrm{\bf diam}}
\def\supp{\mathrm{supp}}
\def\ggbgb{\text{GGB-global}}
\def\ggbvb{\text{GGB-local}}
\newtheorem{defn}{Definition}
\newtheorem{ass}{Assumption}
\newtheorem{lem}{Lemma}
\newtheorem{thm}{Theorem}
\newtheorem{theorem}{Theorem}
\newtheorem{prop}{Proposition}
\newtheorem{cor}{Corollary}
\newcommand{\V}[1][b]{{\bf V}^{(#1)}}
\newcommand{\hV}[1][b]{\hat{{\bf V}}^{(#1)}}
\newcommand{\shV}{\{\hV\}}
\newcommand{\sbV}{\{\bV\}}
\newcommand{\sV}[1][b]{\{\V[#1]\}}
\newcommand{\bV}[1][b]{\bar{{\bf V}}^{(#1)}}
\renewcommand{\P}{\mathbb P}
\def\pro{(\bSigma^*_{jj}\bSigma^*_{kk})^{1/2}}
\def\hpro{(\S_{jj}\S_{kk})^{1/2}}
\def\cA{{\mathcal A}}
\def\corr{{\bf R}^*}
\def\hcorr{{\hat{\bf R}}}
\def\trace{\mathrm{trace}}
\begin{document}

\bibliographystyle{agsm}

  \title{\bf Graph-Guided Banding of the
Covariance Matrix}
  \author{Jacob Bien\thanks{
    The author thanks Christian M\"uller for a useful conversation and acknowledges
the support of the NSF grant DMS-1405746 and NSF CAREER Award DMS-1748166.}\hspace{.2cm}\\
    Data Sciences and Operations\\
    Marshall School of Business\\
University of Southern California}
\date{}
  \maketitle

\begin{abstract}
Regularization has become a primary tool for developing
  reliable estimators of the covariance matrix in high-dimensional settings.  To curb the
  curse of dimensionality, numerous
  methods assume that the population
  covariance (or inverse covariance) matrix is sparse, while making no
  particular structural assumptions on the desired pattern of sparsity.
  A highly-related, yet complementary, literature studies the specific
  setting in which the measured
  variables have a known ordering, in which case a banded population
  matrix is often assumed.  While the banded
  approach is conceptually and computationally easier than asking for
  ``patternless sparsity,'' it is only
  applicable in very specific situations (such as when data are
  measured over time or one-dimensional space).  This work proposes a
  generalization of the notion of bandedness that greatly expands the
  range of problems in which banded estimators apply.

We develop convex regularizers occupying the broad middle ground
between the former approach of ``patternless sparsity'' and the latter
reliance on having a known ordering.  Our framework defines bandedness
with respect to a known graph on the measured variables.  Such a graph
is available in diverse situations, and we provide a theoretical, computational,
and applied treatment of two new estimators.  An R package, called
{\tt ggb}, implements these new methods.

\end{abstract}

\noindent

{\it Keywords:}  covariance estimation, high-dimensional, hierarchical
group lasso, network

\section{Introduction}
\label{sec:introduction}

Understanding the relationships among large numbers of
variables is a goal shared across many scientific areas.
Estimating the covariance matrix is perhaps the most basic step
toward understanding these relationships, and yet even this task
is far from simple when sample sizes are small relative to the number
of parameters to be estimated.  High-dimensional covariance
estimation is therefore an active area of research.  Papers in
this literature generally describe a set of structural assumptions
that effectively reduces the dimension of the parameter space to make
reliable estimation tractable given the available sample size.  Some
papers focus on eigenvalue-related structures and shrinkage 
\citep[e.g.,][among many others]{johnstone2001distribution,fan2008high,won2013condition,donoho2013optimal,
Ledoit04}; others assume that correlations are
constant or blockwise constant (such as in random effects models);
 and a very large
number of papers introduce sparsity assumptions on the covariance matrix
(or its inverse).  Still other papers suggest combinations of these
structures \citep{chandrasekaran2010latent,Bien11spcov,luo2011recovering,Rothman12,fan2013large,Liu13,xue2012positive}.

Given the large number of papers focused on the sparsity assumption,
it is perhaps surprising to note that nearly all such papers fall into
one of just two categories:
\begin{enumerate}
\item Methods that place {\bf no assumption on the pattern of
    sparsity} but merely require that the {\em number} of
nonzeros be small.  This assumption is placed on either the inverse
covariance matrix
\citep{Dempster72,MB2006,Yuan07,Rothman08,Banerjee08,glasso,peng2009partial,yuan2010high,cai2011constrained,khare2014convex}
or on the covariance matrix itself
\citep{Bickel08a,rothman2009generalized,Lam09,Bien11spcov,Rothman12,xue2012positive,Liu13}.

\item Methods that {\bf assume the variables have a known ordering}.
  These methods generally assume a banded (or tapered) structure for
  the covariance matrix, its inverse, or the Cholesky factor \citep{wu2003nonparametric,Huang06,Levina08,Bickel08band,wu2009banding,Rothman10,Cai10,Cai12,Bien15convex,yu2016learning}.
\end{enumerate}
We can view these two categories of sparsity as extremes on a
spectrum.  Methods
in the first category tackle a highly challenging problem that is
widely applicable.  Being completely neutral to the pattern of zeros
leads to an enormous number of sparsity patterns to choose from: there
are ${\smash{\binom{\binom{p}{2}}{s}}}$ for an $s$-sparse model.  By constrast,
methods in the second category address a more modest problem that
is much less widely applicable.  Using a simple banding estimator, there are
only $p$ sparsity patterns to choose from (one for each bandwidth) and
there are few examples other than data collected over time in which
such estimators apply.

This paper introduces a new category of sparsity patterns that aims to
bridge the wide divide between these two categories of sparsity
pattern.  The goal is to produce an estimator that combines the
problem-specific, targeted perspective of banded
covariance estimation with the wide applicability of the patternless
sparsity approaches.  The key idea is to generalize the notion of
banding beyond situations in which the variables have a known
ordering.  In particular, we replace the assumption that the variables ``have a known
ordering'' with the assumption that they ``lie on a known graph.''
The latter is a generalization because 
variables having a known ordering can be expressed as variables lying on a path graph.

To be precise, we will assume that we observe $n$ independent copies
of a random vector $\X\in\real^p$, where the $p$ variables are
associated with a known graph $G=([p],E)$.  We will refer to $G$ as
the {\em seed graph}.
\begin{figure}
  \centering
  \includegraphics[width=0.3\linewidth]{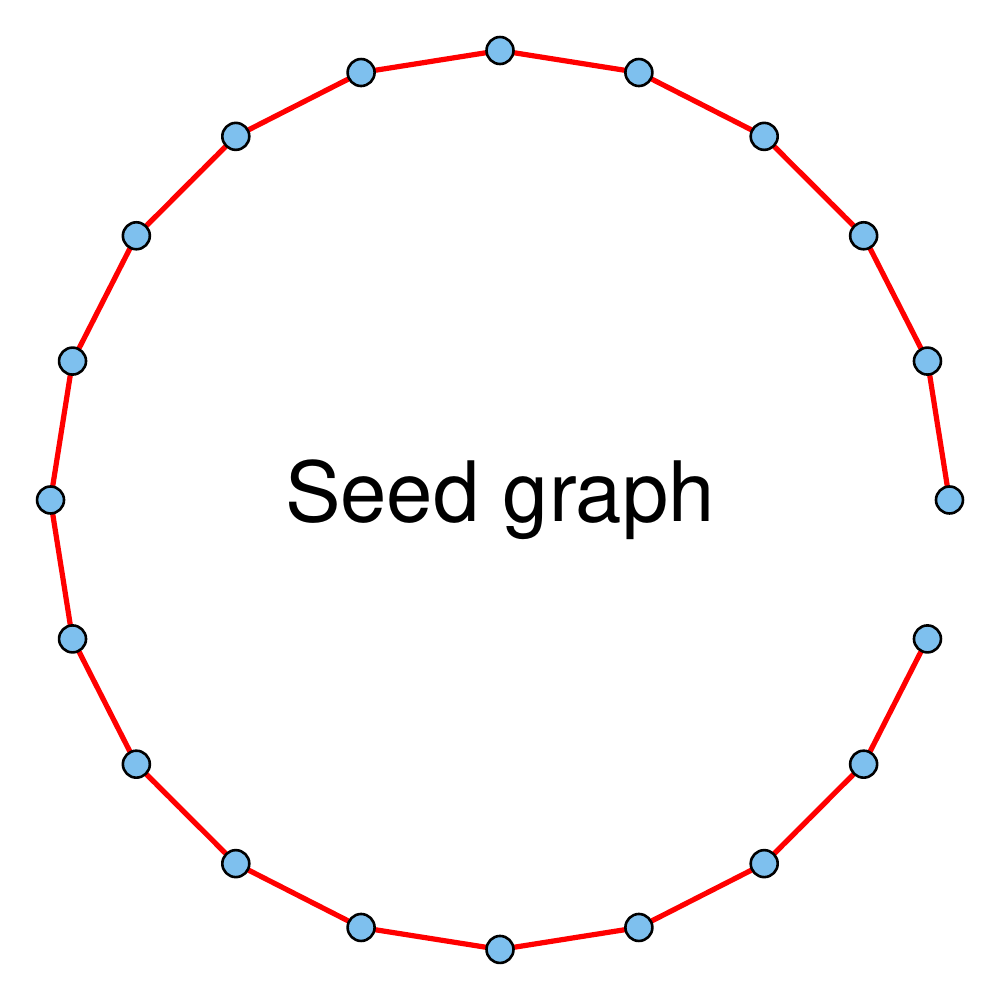}
  \includegraphics[width=0.3\linewidth]{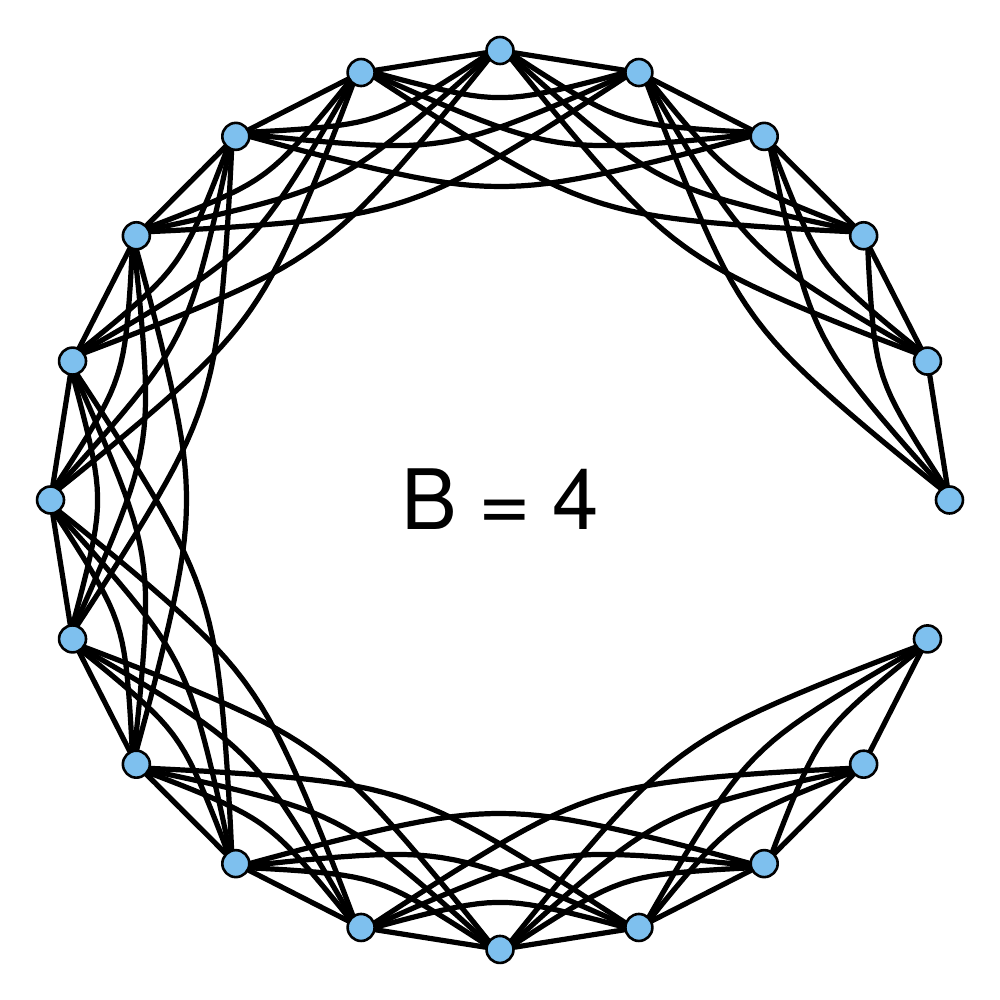}
  \includegraphics[width=0.3\linewidth]{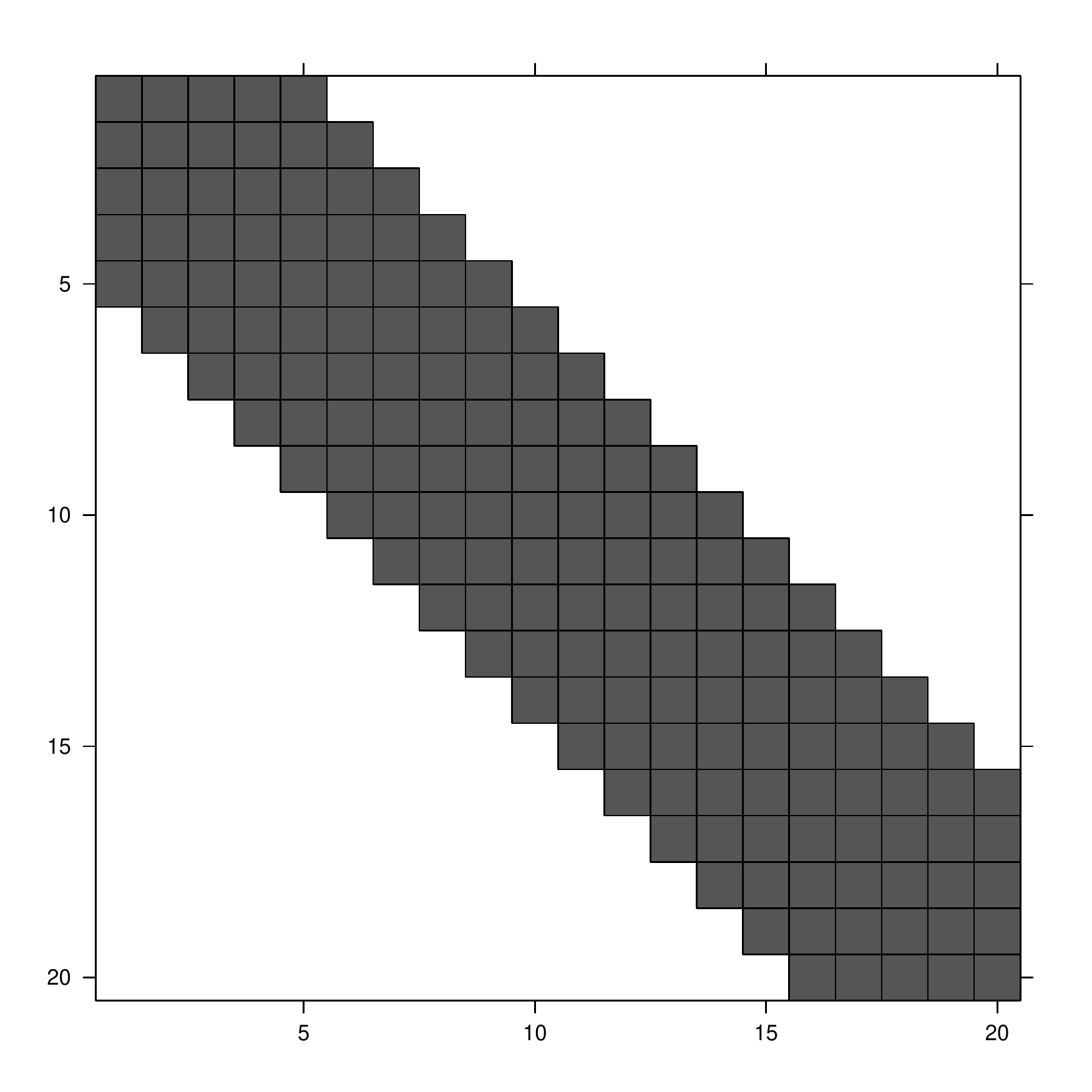}
  \caption{\small\em A matrix with bandwidth $B=4$ (right panel) can be thought
    of as the adjacency matrix of the $B$-th power (center panel) of a
    path graph
    $G$ (left panel).  To form the $B$-th power of a graph $G$, one connects any pair of nodes that are within $B$ hops of each other on $G$.}
  \label{fig:banded}
\end{figure}
The right panel of Figure \ref{fig:banded} shows a traditional
$B$-banded matrix.  The term ``banded'' refers to the diagonal strip
of nonzero elements that are within some distance $B$ (called the ``bandwidth'') of
the main diagonal.  The middle panel of Figure \ref{fig:banded} shows
the graph with this adjacency matrix. Observing that this graph,
denoted $G^B$, is the
$B$-th power of the path graph shown in the left panel of Figure \ref{fig:banded} suggests the following
generalization of the term ``banded.''  We will use 
$d_G(j,k)$ to denote the length of the shortest path between nodes $j$
and $k$ on $G$.
\begin{defn}
  We say that a matrix $\bSigma$ is {\bf $B$-banded with respect to a graph
    $G$} if $\supp(\bSigma)\subseteq E(G^B)$, that is $\bSigma_{jk}=0$
  if $d_G(j,k)> B$.
\label{def:banded}
\end{defn}
\citet{bickel2012approximating} have a similar notion, for a metric
$\rho$, which they call
{\em $\rho$-generalized banded operators}.

Figure \ref{fig:ggb} shows the sparsity pattern of a $4$-banded matrix
with respect to a seed graph.  The bandwidth $B$ can vary from $B=0$,
which corresponds to $\bSigma$ being a diagonal matrix, to $B$ being
the diameter of $G$, which corresponds to $\bSigma$ being a dense
matrix when $G$ is a connected graph (when $G$ is not connected,
the densest $\bSigma$ attainable is, up to permutation of the variables, a block-diagonal matrix
with completely dense blocks).
\begin{figure}
  \centering
  \includegraphics[width=0.3\linewidth]{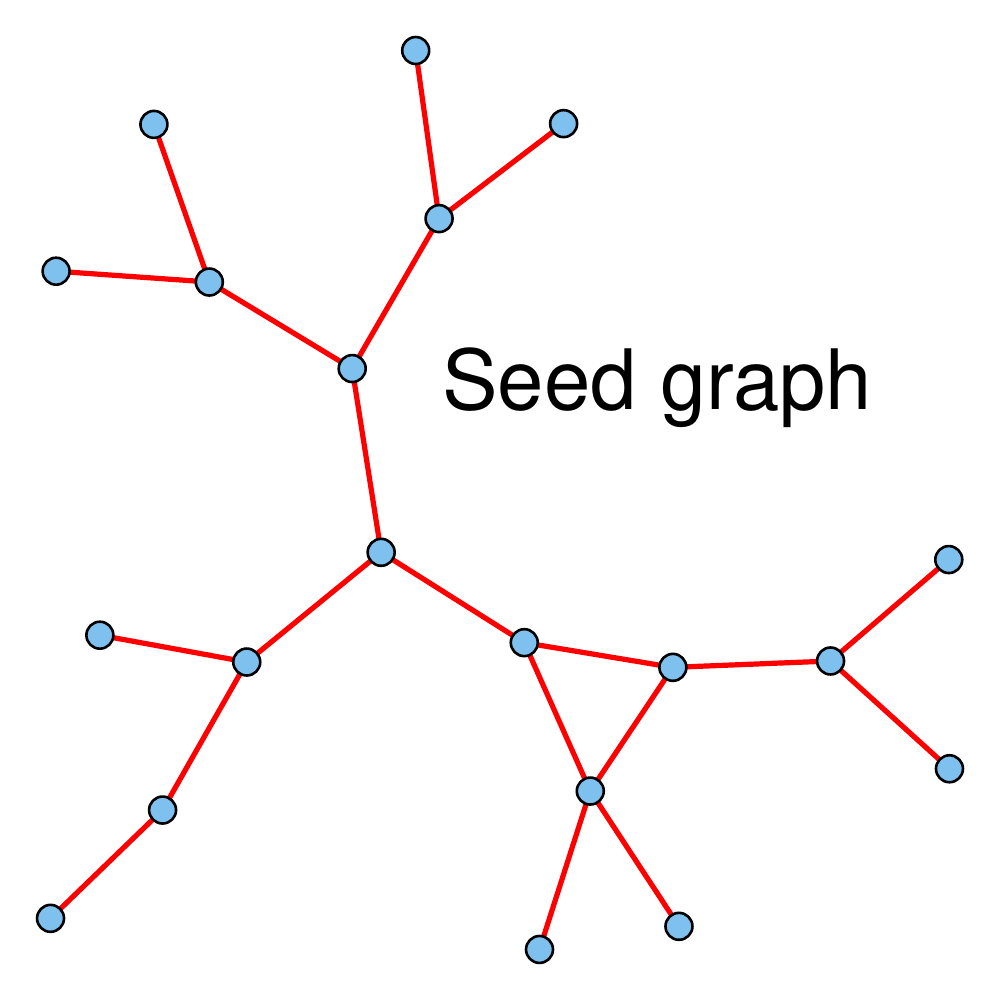}
  \includegraphics[width=0.3\linewidth]{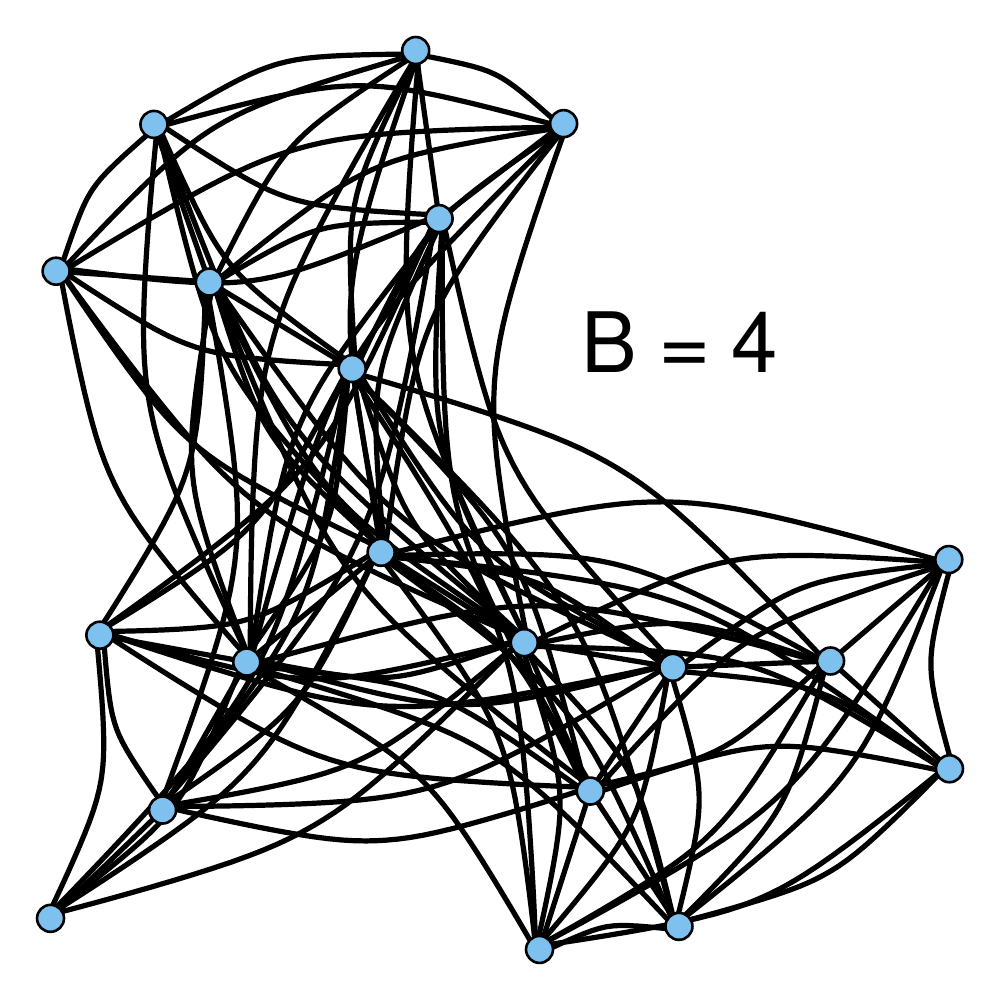}
  \includegraphics[width=0.3\linewidth]{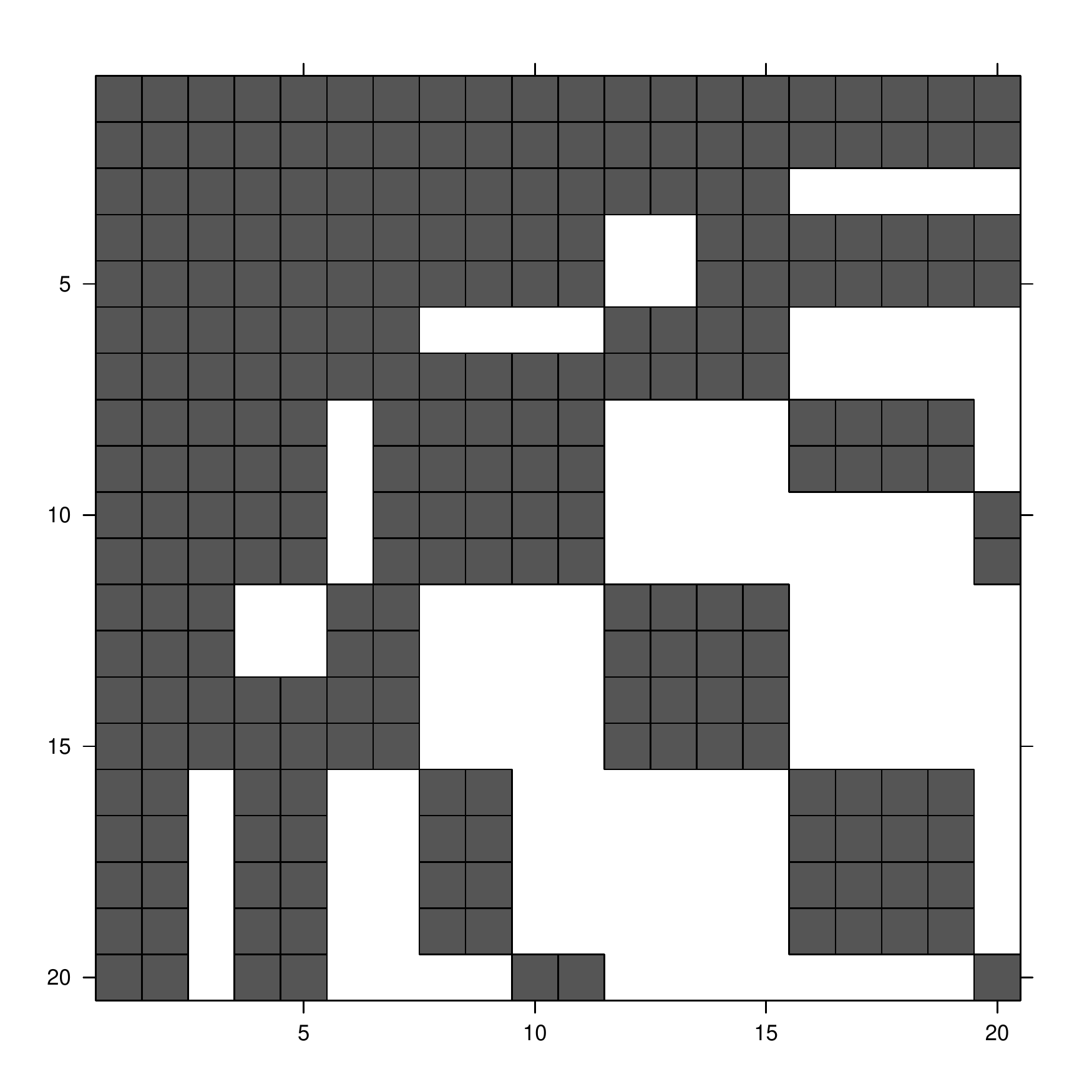}
  \caption{\small\em Given a seed graph $G$ (left panel), one forms $G^B$ ($B=4$
    case shown in center panel) by connecting any pair of nodes within
    $B$ hops of each other in $G$.  Definition \ref{def:banded} refers
    to the resulting adjacency matrix (right panel) as $B$-banded with
  respect to the graph $G$.}
  \label{fig:ggb}
\end{figure}
Despite the simple interpretation of taking powers of a seed graph,
the resulting sparsity patterns can be quite complex and varied
depending on the seed graph.  
There are many applications in which measured variables lie on a known
graph that would be relevant to incorporate when estimating the
covariance matrix $\bSigma$:
\begin{itemize}
\item {\em Image data:} Consider a $p_1\times p_2$ pixeled image.  The
  $p=p_1p_2$ pixels lie on a two-dimensional lattice graph.  Nearby
  pixels would be expected to be correlated.
\item {\em fMRI data:} One observes three-dimensional images of the
  brain over time.  The {\em voxels} (three-dimensional pixels) lie on
  a three-dimensional lattice.
\item {\em Social network data:} A social media company wishes to
  estimate correlations between users' purchasing behaviors (e.g., do
  they tend to buy the same items?).  The social network of users is
  the known seed graph.
\item {\em Gene expression data:} Genes that encode proteins that are
  known to interact may potentially be more likely to be co-expressed
  than genes whose proteins do not interact. Biologists have constructed protein
  interaction networks \citep{das2012hint}, which may be taken as seed
  graphs, leveraging protein interaction data for better estimation of the covariance matrix of gene expression data.
\end{itemize}
In each example described above, the naive approach would be to
vectorize the data and ignore the graph structure when estimating the
covariance matrix.  However, doing so when the truth is banded with respect to the known
graph would be unnecessarily working in the patternless
sparsity category when in fact a much simpler approach would work.

In some situations, a known seed graph is available, but the
assumption of a single bandwidth $B$ applying globally is
unrealistic.  We can therefore extend the definition of bandedness
with respect to a graph as follows:
\begin{defn}
  We say that a matrix $\bSigma$ is {\bf $(B_1,\ldots,B_p)$-banded with respect to a graph
    $G$}, if $\bSigma_{jk}=0$
  for all $(j,k)$ satisfying $d_G(j,k)> \max\{B_j,B_k\}$.
\label{def:banded-vb}
\end{defn}
In words, this definition says that each variable has its own
bandwidth, which defines a neighborhood size, and it cannot be correlated with
any variables beyond this neighborhood.  For example, on a
social network, the bandwidth can be thought informally as having to
do with the influence of a person.  A very influential friend might
be correlated with not just his friends but with his friends' friends (this
would correspond to $B_j=2$).  By contrast, a ``lone wolf'' might be
uncorrelated even with his own friends ($B_j=0$).  Figure
\ref{fig:ggb-vb} shows an example of a sparsity pattern emerging from
the same seed graph as in Figure \ref{fig:ggb} but with variable
bandwidths.

\begin{figure}
  \centering
  \includegraphics[width=0.3\linewidth]{figs/exampleseed.pdf}
  \includegraphics[width=0.3\linewidth]{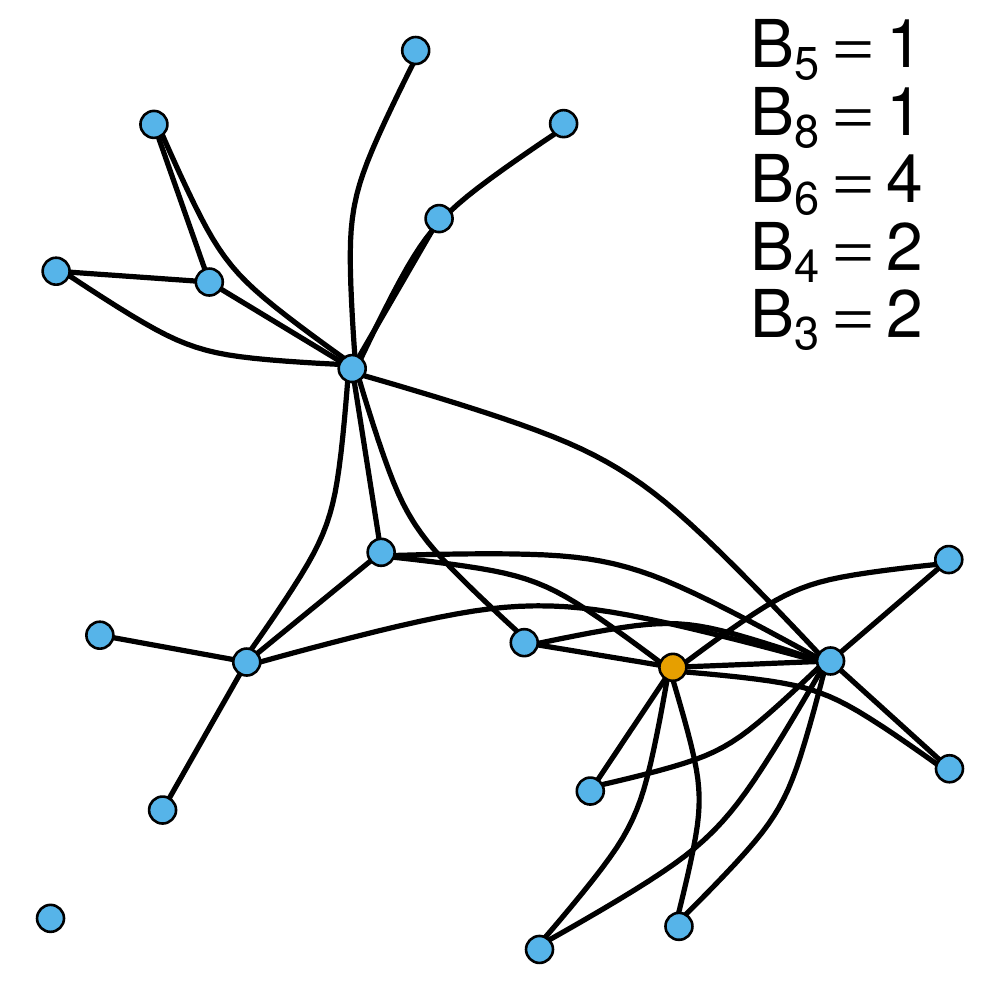}
 \includegraphics[width=0.3\linewidth]{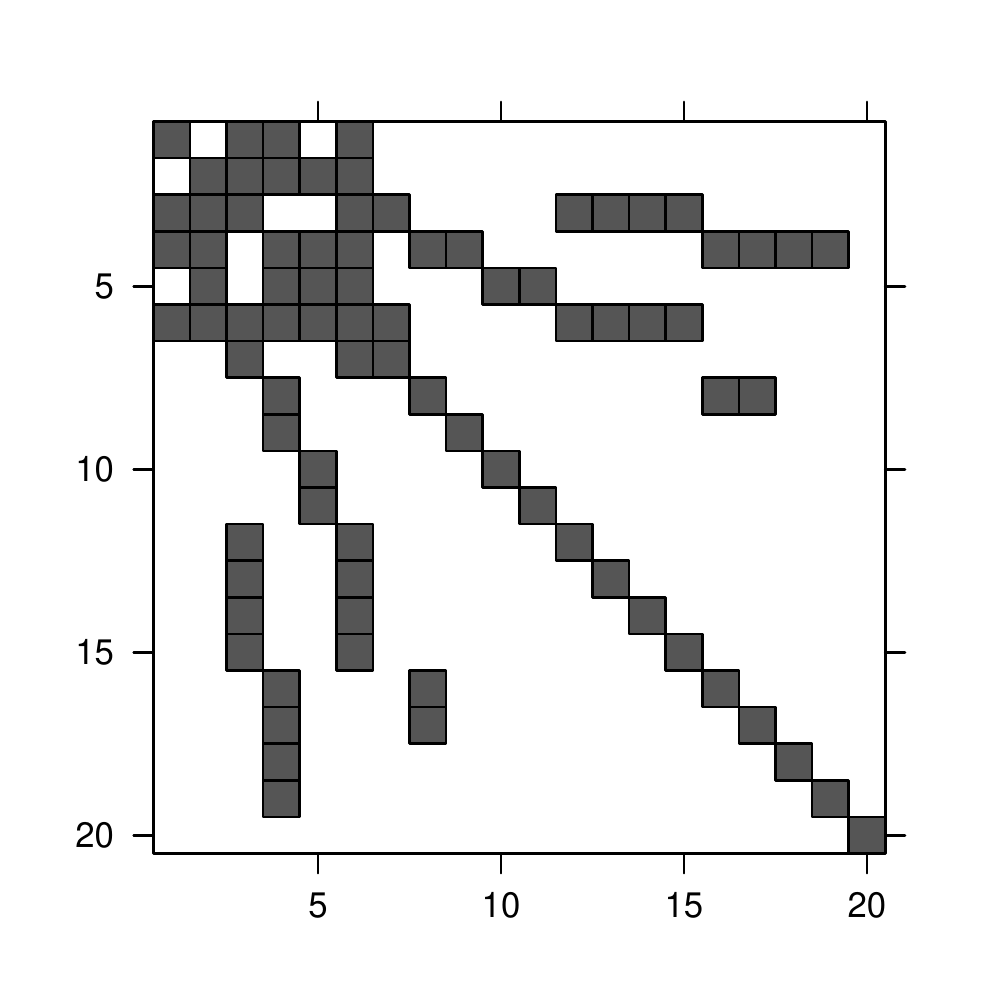}
  \caption{\em Given a seed graph $G$ (left panel) and a set of
    node-specific bandwidths, one forms a new graph (shown in center
    panel) by connecting any pair of nodes within
    $B_j$ hops of node $j$ in $G$ (five of the 20 nodes have
    nonzero bandwidths with values shown in figure; the remaining 15
    nodes have zero bandwidths).  Definition \ref{def:banded-vb} refers
    to the resulting adjacency matrix (right panel) as $(B_1,\ldots,B_p)$-banded with
  respect to the graph $G$.}
  \label{fig:ggb-vb}
\end{figure}

We propose two estimators in this paper that make use of a {\em
  latent overlapping group lasso} penalty
\citep{jacob2009group,Obozinski11}, which is used in situations in which one desires a sparsity pattern that is a
union of a set of predefined groups.  Both Definitions \ref{def:banded} and
\ref{def:banded-vb} can be described as such, so the difference
between the two methods lies in the choice of groups. 
In Section \ref{sec:ggb}, we introduce and study the {\em graph-guided
  banding} estimator, which produces estimates of the covariance matrix
that are banded in the sense of Definition \ref{def:banded}.
The limited number of sparsity patterns attainable by this estimator
makes it potentially undersirable in certain situations, and we therefore introduce and study (in Section
\ref{sec:ggb-vb}) the {\em graph-guided banding with local
  bandwidths} estimator, which attains bandedness in the more flexible sense of
Definition \ref{def:banded-vb}.  Section \ref{sec:empirical-study} compares
these methods empirically both in simulation and in naturally
occurring data.  Section \ref{sec:discussion} concludes with a
discussion of related work, insights learned, and future directions.

To our knowledge, there is little work on using known graphs in
high-dimensional covariance estimation.  \citet{cai2015minimax} study minimax lower bounds and
adaptive estimation of covariance matrices when variables lie on a
$d$-dimensional grid.  \citet{deng2009large} consider inverse covariance estimation in the setting where
  variables lie in a metric space.  Their approach uses a non-negative Garrotte semidefinite program formulation to
  perform inverse covariance estimation in which inverse covariance
  decays with increasing distance between variables.  The method
  requires an initial estimate of the inverse covariance matrix (when
  $n>p$, one takes $S^{-1}$) and then shrinks so that variables
  farther apart have smaller partial
  covariances.

\noindent{\bf Notation:} Given a $p\times p$ matrix $\bfA$ and a set
of indices $g\in \{1,\ldots, p\}^2$, we write $\bfA_g$ to denote the $p\times p$
matrix that agrees with $\bfA$ on $g$ and is zero on $g^c$, and we
write $|g|$ for the cardinality of the set $g$.  
The graph $G^B$ is the $B$th power of $G$, meaning
the graph in which any nodes that are within $B$ of each other in $G$
are connected.
  The constraint $\bSigma\succeq\delta I_p$ means that
the smallest eigenvalue of $\bSigma$ is at least $\delta$.  Given a
square matrix $\bSigma$, the matrix $\bSigma^-=\bSigma-\diag(\bSigma)$
agrees with $\bSigma$ on off-diagonals but is zero on the
diagonal. We write $a_n\lesssim b_n$ to mean that there exists a
constant $C>0$ for which $a_n\le Cb_n$ for $n$ sufficiently large.

\section{Graph-guided banding with global bandwidth}
\label{sec:ggb}

Definition \ref{def:banded} introduces the simplest notion of
bandedness with respect to a graph.  In particular, a single bandwidth
$B$ is postulated to apply globally across the entire graph.  This
means that the support of $\bSigma$ is given by one of the following
groups of parameters:
\begin{align}
  g_b=\{jk:1\le d_G(j,k)\le b\}\qquad\text{for }b=1,\ldots, \diam(G),\label{eq:groups-global}
\end{align}
where $\diam(G)$ is the diameter of the seed graph $G$.  The group
lasso \citep{Yuan06} is a well-known extension of the lasso
\citep{Tibshirani96} that is useful for designing estimators in which
groups of variables are simultaneously set to zero.   Given a sample
of $n$ independent random vectors $\X_1,\ldots, \X_n\in\real^p$, form
the sample covariance matrix, $\S=\sum_{i=1}^n(\X_i-\bar
\X_n)(\X_i-\bar \X_n)^T$.  We define the
{\em graph-guided banding estimator with global bandwidth} (\ggbgb) as
\begin{align}
  \hSigma_\lambda=\arg\min_{\bSigma\succeq\delta
    I_p}\left\{\half\|\S-\bSigma\|_F^2+\lambda P(\bSigma;G)\st [\bSigma^-]_{g_M^c}=0 \right\},\label{eq:ggb}
\end{align}
where
$$
P(\bSigma;G)=\min_{\{\V\in\real^{p\times
    p}\}}\left\{\sum_{b=1}^Mw_b\|\V\|_F\st\bSigma_{g_M}=\sum_{b=1}^M\V_{g_b}\right\}
$$
is the latent overlapping group lasso penalty \citep{jacob2009group}
with group structure
given by \eqref{eq:groups-global}.  Each $\V_{g_b}$ matrix has
the ``graph-guided'' sparsity pattern $g_b$, so heuristically, this penalty
favors $\hSigma_\lambda$ that can be written as the sum of a small
number of such sparse matrices.  While the penalty is itself defined by an optimization problem, in
practice we can solve \eqref{eq:ggb} directly without
needing to evaluate $P(\bSigma;G)$, as described in the next section.

The parameter $M$ will typically
be taken to be $\diam(G)$; however, in certain applications in which
$\diam(G)$ is very large and a reasonable upper bound on $B$ is available, it may be computationally expedient to
choose a much smaller $M$.  The constraint $[\bSigma^-]_{g_M^c}=0$
only has an effect when $M < \diam(G)$.  Of course, one must be
cautious not to inadvertently choose $M < B$ since then
the resulting estimates will always be sparser than the true
covariance matrix.  Thus, if one does not have a reliable upper bound
for $B$, one should simply take $M=\diam(G)$.  The tuning parameter $\lambda\ge 0$
controls the sparsity level of $\hSigma_\lambda$ (with large values of
$\lambda$ corresponding to greater sparsity).  The $w_g\ge0$ are
weights that control the relative strength of the individual $\ell_2$
norms in the penalty.  In studying the theoretical properties of the
estimator (Section \ref{sec:theory-ggb}), it becomes clear that
$w_b=|g_b|^{1/2}$ is a good choice.  The parameter $\delta$ is a
user-specified lower bound on the eigenvalues of $\hSigma_\lambda$.
While our algorithm is valid for any value of $\delta$, the most
common choice is $\delta=0$, which ensures that $\hSigma_\lambda$ is a
positive semidefinite matrix.  In certain applications, a positive
definite matrix may be required, in which case the user may specify a
small positive value for
$\delta$.
 
In the special case that $G$ is a path graph (and dropping the
eigenvalue contraint), the \ggbgb~estimator
reduces to the estimator introduced in \citet{yan2015hierarchical}.
In the next section, we 
show that \eqref{eq:ggb} can be efficiently computed.

\subsection{Computation}
\label{sec:computation-ggb}

The challenge of solving \eqref{eq:ggb} lies in the combination of a
nondifferentiable penalty and a positive semidefinite constraint.
Without the penalty term, $\hSigma_\lambda$ could be computed with a
single thresholding of the eigenvalues; without the eigenvalue
constraint, $\hSigma_\lambda$ is nothing more than the proximal
operator of $\lambda P(\cdot;G)$ evaluated at $\S$.
The following lemma shows that we can solve \eqref{eq:ggb} by alternate application of the proximal operator of
 $\lambda P(\cdot;G)$ and a thresholding of the eigenvalues.  We
 present this result for general convex penalties since the same
 result will be used for the estimator introduced in Section \ref{sec:ggb-vb}.
 \begin{lem}
   Consider the problem
   \begin{align}
     \hSigma=\arg\min_{\bSigma\succeq\delta
       I_p}\left\{\half\|\S-\bSigma\|_F^2+\lambda\Omega(\bSigma)\right\},\label{eq:general}
   \end{align}
where $\Omega$ is a closed, proper convex function, and let
$$
\text{Prox}_{\lambda\Omega}({\bf M}) = \arg\min_{\bSigma}\left\{\half\|{\bf M}-\bSigma\|_F^2+\lambda\Omega(\bSigma)\right\}
$$
denote the proximal operator of $\lambda\Omega(\cdot)$ evaluated at a
matrix ${\bf M}$.
Algorithm \ref{alg:bcd-dual}
converges to $\hSigma$.
\label{lem:bcd-for-psd}
 \end{lem}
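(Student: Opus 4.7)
The plan is to recognize \eqref{eq:general} as the evaluation at $\S$ of the proximal operator of $f+g$ for $f:=\lambda\Omega$ and $g:=\iota_C$, where $C=\{\bSigma:\bSigma\succeq\delta I_p\}$, and then to identify Algorithm~\ref{alg:bcd-dual} as a splitting scheme that alternates the individual proximal operators of these two terms. The proximal operator of $f$ is exactly $\mathrm{Prox}_{\lambda\Omega}$ by definition, while the proximal operator of $g$ is the Euclidean projection of a symmetric matrix ${\bf M}$ onto $C$, performed by eigendecomposing ${\bf M}$ and replacing each eigenvalue $\mu$ by $\max(\mu,\delta)$---precisely the eigenvalue-thresholding step named in the lemma.

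To carry this out, I would introduce an auxiliary variable ${\bf T}$ and rewrite \eqref{eq:general} as
\begin{equation*}
\min_{\bSigma,\,{\bf T}}\ \tfrac12\|\S-\bSigma\|_F^2 + \lambda\Omega(\bSigma) + \iota_C({\bf T})\quad\text{s.t.}\quad\bSigma = {\bf T},
\end{equation*}
attach a symmetric Lagrange multiplier $\bfA$, and derive the dual. Minimizing out the primal variables shows that the two block minimizations of the dual---over the components of $\bfA$ conjugate to $\bSigma$ and to ${\bf T}$---reduce in closed form to $\mathrm{Prox}_{\lambda\Omega}$ applied to a residual matrix and to projection onto $C$, respectively. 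Because the primal objective is strongly convex in $\bSigma$ and the feasible set has nonempty relative interior, strong duality holds, the primal minimizer $\hSigma$ is unique, and the reconstruction map from dual optima to the primal minimizer is continuous.

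Convergence of Algorithm~\ref{alg:bcd-dual} then follows from classical results for two-block coordinate descent applied to the dual objective: strong convexity of the primal quadratic makes the dual differentiable with Lipschitz gradient, each block minimization is solved exactly (by one of the two named operations), and the standard BCD convergence theorem yields convergence of the dual iterates to an optimum; continuity of the primal-recovery map then transfers this to convergence of the primal iterates to $\hSigma$.

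The main obstacle is the bookkeeping required to verify that the updates prescribed by Algorithm~\ref{alg:bcd-dual} coincide with the $\bSigma$- and ${\bf T}$-block minimizations derived above. This amounts to completing the square to identify Moreau envelopes, carefully handling matrix symmetry, and confirming that no spurious scaling factors appear that would misalign the updates with the intended $\mathrm{Prox}_{\lambda\Omega}$ and $\delta$-thresholding operations. Once this identification is in place, the lemma's convergence claim follows from standard splitting-method theory.
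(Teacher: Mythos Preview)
Your high-level strategy---dualize, run two-block coordinate descent on the dual, recover the primal---is the same as the paper's. The gap is in the specific splitting you set up. With a single auxiliary variable ${\bf T}$ and a single multiplier $\bfA$ for the constraint $\bSigma={\bf T}$, the dual is a function of \emph{one} matrix variable $\bfA$. There is no meaningful decomposition of $\bfA$ into ``the component conjugate to $\bSigma$'' and ``the component conjugate to ${\bf T}$'': both primal minimizations involve the same $\bfA$, so two-block coordinate descent has nothing to alternate over. Worse, minimizing the Lagrangian over ${\bf T}\in C$ gives $-\sup_{{\bf T}\succeq\delta I_p}\langle\bfA,{\bf T}\rangle$, which is $-\infty$ unless $-\bfA\succeq0$; the resulting one-variable constrained dual does not decompose into the two clean proximal steps of Algorithm~\ref{alg:bcd-dual}.

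The paper's dualization differs in exactly the place you flagged as ``the main obstacle.'' It introduces the auxiliary $\tilde\bSigma$ for the $\Omega$ term \emph{and} dualizes the constraint $\bSigma\succeq\delta I_p$ directly with a second multiplier $\C\succeq0$. This produces a dual in two matrix variables $(\B,\C)$:
\[
\min_{\B,\,\C\succeq0}\ \tfrac12\|\B-\C-\S\|_F^2+[\lambda\Omega]^*(\B)-\delta\langle I_p,\C\rangle,
\]
and blockwise minimization over $(\B,\C)$ is precisely Algorithm~\ref{alg:bcd-dual}: the $\B$-step is the proximal operator of $[\lambda\Omega]^*$, rewritten via the Moreau decomposition as $\S+\C-\mathrm{Prox}_{\lambda\Omega}(\S+\C)$, and the $\C$-step is the eigenvalue thresholding of $\B-\S$. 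Convergence is then a direct citation of Tseng's block coordinate descent theorem, and $\hSigma=\S-\hat\B+\hat\C$ recovers the primal. So the repair to your argument is to dualize the semidefinite constraint as well, giving the second dual block that the algorithm actually alternates over.
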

 \begin{proof}
   See Appendix \ref{supp-sec:bcd-derivation}.
 \end{proof}
\begin{algorithm}
  \caption{\small\em Algorithm for solving \eqref{eq:general}}
  \begin{itemize}
  \item Initialize $\C$.
  \item Repeat until convergence:
    \begin{enumerate}
    \item $\B\leftarrow
      (\S+\C)-\text{Prox}_{\lambda\Omega}(\S+\C)$
    \item $\C\leftarrow\U\cdot\diag([\Lambda_{ii}+\delta]_+)\cdot\U^T$
      where  $\B-\S=\U\Lambda\U^T$ is the eigendecomposition. 
    \end{enumerate}
  \item Return $\hSigma\leftarrow \S-\B+\C$.
  \end{itemize}
\label{alg:bcd-dual}
\end{algorithm}
 In Appendix \ref{supp-sec:bcd-derivation}, we show that Algorithm \ref{alg:bcd-dual}
corresponds to
blockwise coordinate ascent on the dual of \eqref{eq:general}.  While the
{\em alternating direction method of multipliers} (ADMM) would be an
alternative approach to this problem, we prefer Algorithm
\ref{alg:bcd-dual} because, unlike ADMM, its use does not require
choosing a suitable algorithm-specific parameter.
Invoking the lemma, all
that remains for solving \eqref{eq:ggb} is to specify an algorithm for evaluating $\text{Prox}_{\lambda
  P(\cdot;G)}$.  The group structure \eqref{eq:groups-global} used in this estimator
is special in that
$$
g_1\subset g_2\subset\cdots\subset g_M.
$$
\citet{yan2015hierarchical} study the use of the latent overlapping group lasso with
such hierarchically nested group structures and
show that, for group structures of this form, the proximal operator can be evaluated essentially in closed
form (which is not true more generally).  Algorithm 3 of \citet{yan2015hierarchical}
shows
that evaluating the proximal operator amounts to identifying a
sequence of ``breakpoints'' $k_0<k_1<\ldots<k_m$ and then applying
groupwise soft-thresholding to the elements in $g_{k_{i+1}}-g_{k_i}$
for $i=0,\ldots,m-1$.

In practice, we solve \eqref{eq:general} for a decreasing sequence of
values of $\lambda$.  We initialize $\C$ in Algorithm
\ref{alg:bcd-dual} as ${\bf 0}$ for the first
value of $\lambda$ and then for each subsequent value of $\lambda$ we use
the value of $\C$ from the previous
$\lambda$ as a warm start. 

\subsection{Theory}
\label{sec:theory-ggb}

In this section, we study the statistical properties of the GGB
estimator \eqref{eq:ggb}.  For simplicity, we will focus on the
special case in which $\delta=-\infty$ (or, equivalently, we
drop the eigenvalue constraint); however, in Section \ref{sec:eigen}
we provide conditions under which our results apply to the $\delta=0$ case.  We introduce two conditions
under which we prove these results.
\begin{ass}[distribution of data]
We draw $n$ i.i.d. copies of $\X\in\real^p$, having mean ${\bf 0}$ and
covariance $\bSigma^*$.  The random vector has bounded variances,
$\max_j|\bSigma^*_{jj}|\le \bar\kappa$, and sub-Gaussian marginals: $\mathbb
E[e^{t\X_{j}/\sqrt{\bSigma^*_{jj}}}]\le e^{Ct^2}$ for all $t\ge0$.
Both $C$ and $\bar\kappa$ are constants.
\label{ass:distribution}
\end{ass}
\begin{ass}[relation between $n$ and $p$]
We require that $p\le e^{\gamma_1 n}$ for some $\gamma_1>0$.  
\label{ass:scaling}
\end{ass}
\noindent Assumptions \ref{ass:distribution} and \ref{ass:scaling} are
sufficient for establishing that
the sample covariance matrix $\S$ concentrates around $\bSigma^*$ with high probability for large $n$.  These
assumptions are made in \citet{Bien15convex}, and similar assumptions appear throughout the literature.

\subsubsection{Frobenius norm error}
\label{sec:frobenius-norm-error}

We begin by producing the rate of convergence of our estimator in
Frobenius norm.
  \begin{thm}
  Suppose $\bSigma^*$ is $B^*$-banded with respect to the graph $G$
  (in the sense of Definition \ref{def:banded}) and Assumptions
  \ref{ass:distribution}--\ref{ass:scaling} hold.
  Then with $\lambda=x\sqrt{\frac{\log(\max\{p,n\})}{n}}$ (for $x$
  sufficiently large), $w_b=|g_b|^{1/2}$,
  $\delta=-\infty$, and $M=\diam(G)$, there exists a constant $c>0$
  such that
$$
\|\hat\bSigma_\lambda - \bSigma^*\|_F^2\lesssim \frac{(p+|g_{B^*}|)\log(\max\{p,n\})}{n}
$$
with probability at least $1-c/\max\{p,n\}$.
\label{thm:ggb}
\end{thm}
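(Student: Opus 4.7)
The plan is the standard template for penalized $M$-estimators: derive a basic inequality from the optimality of $\hSigma_\lambda$, control the noise in the dual norm of the penalty, and then exploit the nested structure of the groups to turn a slow rate into a fast rate.

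First I would separate the diagonal from the off-diagonal. Because every group $g_b$ in \eqref{eq:groups-global} excludes diagonal indices, the penalty $P(\cdot\,;G)$ depends only on $\bSigma^-$; with $\delta=-\infty$, the optimality conditions for \eqref{eq:ggb} decouple and give $\hSigma_{jj}=\S_{jj}$. Assumption \ref{ass:distribution} then provides a sub-Gaussian tail bound for each $\S_{jj}-\bSigma^*_{jj}$, and a union bound over $j=1,\ldots,p$ yields $\sum_{j=1}^p(\S_{jj}-\bSigma^*_{jj})^2\lesssim p\log\max\{p,n\}/n$ with probability at least $1-c_1/\max\{p,n\}$. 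This accounts for the $p$ term in the stated rate.

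For the off-diagonal part, set $\hat\Delta=\hSigma^--\bSigma^{*-}$ and $W=\S^--\bSigma^{*-}$. Optimality of $\hSigma$ in \eqref{eq:ggb} compared to the feasible $\bSigma^*$ gives the basic inequality
\[
\half\|\hat\Delta\|_F^2 \;\le\; \langle W,\hat\Delta\rangle + \lambda\bigl[P(\bSigma^*;G)-P(\hSigma;G)\bigr].
\]
The dual norm of the latent overlapping group lasso associated with \eqref{eq:groups-global} is $\Omega^*(W)=\max_{b=1,\ldots,M}\|W_{g_b}\|_F/w_b$. Under Assumptions \ref{ass:distribution}--\ref{ass:scaling}, entrywise Bernstein-type concentration on products of sub-Gaussians (of the kind used in \citet{Bien15convex}) yields $\max_{j\ne k}|W_{jk}|\lesssim \sqrt{\log\max\{p,n\}/n}$ with high probability, so with $w_b=|g_b|^{1/2}$ we obtain $\Omega^*(W)\lesssim\sqrt{\log\max\{p,n\}/n}$. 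Choosing $x$ large enough that $\lambda\ge 2\Omega^*(W)$ holds on this event is the standard tuning that lets the penalty dominate the noise.

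The crux of the argument, and the step I expect to be the main obstacle, is converting this into the fast rate $\lambda^2|g_{B^*}|$ rather than the slow $\lambda\sqrt{|g_{B^*}|}\,\|\bSigma^*\|_F$ that a blunt $\langle W,\hat\Delta\rangle\le\Omega^*(W)P(\hat\Delta)$ would produce. I would exploit the nested group structure $g_1\subset\cdots\subset g_M$. Let $\{\hV\}$ attain $P(\hSigma;G)$ and take the specific decomposition of $\bSigma^{*-}$ given by $\Vstar[B^*]=\bSigma^{*-}$ and $\Vstar=0$ for $b\ne B^*$, which is valid because $\bSigma^*$ is $B^*$-banded and hence supported on $g_{B^*}$. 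Then
\[
\langle W,\hat\Delta\rangle \;=\; \sum_{b=1}^{M}\langle W_{g_b},\hV-\Vstar\rangle \;\le\; \frac{\lambda}{2}\sum_{b=1}^{M} w_b\|\hV-\Vstar\|_F,
\]
while the reverse triangle inequality gives $P(\bSigma^*;G)-P(\hSigma;G)\le w_{B^*}\|\hV[B^*]-\Vstar[B^*]\|_F-\sum_{b\ne B^*}w_b\|\hV\|_F$. Plugging these into the basic inequality and cancelling yields $\|\hat\Delta\|_F^2 \le 3\lambda w_{B^*}\|\hV[B^*]-\Vstar[B^*]\|_F$. Closing the loop requires bounding $\|\hV[B^*]-\Vstar[B^*]\|_F$ by a constant multiple of $\|\hat\Delta\|_F$, which is the delicate point; I would attempt this by exploiting the explicit breakpoint structure of the optimal nested decomposition (Algorithm 3 of \citet{yan2015hierarchical}) together with the negative $-\tfrac\lambda2\sum_{b\ne B^*}w_b\|\hV\|_F$ term that the cancellation above leaves behind. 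Once such a bound is in hand, $\|\hat\Delta\|_F\lesssim\lambda w_{B^*}$, hence $\|\hat\Delta\|_F^2\lesssim |g_{B^*}|\log\max\{p,n\}/n$, and combining with the diagonal contribution establishes the theorem.
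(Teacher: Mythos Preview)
Your route diverges from the paper's, and the step you yourself flag as ``delicate'' is a genuine gap that the sketch does not close.

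The paper does \emph{not} run the basic-inequality/cone-condition template. It uses a primal--dual witness argument: solve the \emph{restricted} problem in which only groups $b=1,\ldots,B^*$ are allowed, call the solution $\bbSigma$, and then verify that on $\calA_\lambda=\{\|\S-\bSigma^*\|_\infty<\lambda\}$ the padded collection (with $\bV=0$ for $b>B^*$) satisfies the full KKT conditions. The only nontrivial check is, for $b>B^*$,
\[
\|(\S-\bbSigma)_{g_b}\|_F^2=\|(\S-\bbSigma)_{g_{B^*}}\|_F^2+\|\S_{g_b\setminus g_{B^*}}\|_F^2\le \lambda^2 w_{B^*}^2+|g_b\setminus g_{B^*}|\cdot\|\S-\bSigma^*\|_\infty^2<\lambda^2 w_b^2,
\]
using $w_b^2=|g_b|$. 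This already gives $\supp(\hSigma^-)\subseteq g_{B^*}$ (and in fact is what later yields Theorem~\ref{thm:bandwidth-recovery}). The Frobenius bound then falls out in one line from the KKT inequality $\|(\S-\hSigma)_{g_{B^*}}\|_F\le\lambda w_{B^*}$ and the triangle inequality, giving $\|(\hSigma-\bSigma^*)_{g_{B^*}}\|_F<2\lambda w_{B^*}$. No latent variables enter the error analysis at all.

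Your argument, by contrast, stops at $\|\hat\Delta\|_F^2\le 3\lambda w_{B^*}\|\hV[B^*]-\Vstar[B^*]\|_F$ and needs $\|\hV[B^*]-\Vstar[B^*]\|_F\lesssim\|\hat\Delta\|_F$. Since $\hV[B^*]-\Vstar[B^*]=\hat\Delta-\sum_{b\ne B^*}\hV$, this amounts to controlling $\sum_{b\ne B^*}\|\hV\|_F$. The leftover cone inequality $\sum_{b\ne B^*}w_b\|\hV\|_F\le 3w_{B^*}\|\hV[B^*]-\Vstar[B^*]\|_F$ handles the $b>B^*$ terms (there $w_b\ge w_{B^*}$), but for $b<B^*$ the weights $w_b$ can be arbitrarily small relative to $w_{B^*}$, so the unweighted sum $\sum_{b<B^*}\|\hV\|_F$ is not controlled. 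The appeal to the breakpoint algorithm of \citet{yan2015hierarchical} is not a fix as stated: that algorithm describes the prox output, not a comparison between an optimal $\hV[B^*]$ and an externally chosen $\Vstar[B^*]$. So as written your proposal establishes only the slow rate; to get the fast rate you would essentially have to prove support recovery first, which is exactly what the paper's witness argument does directly.
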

\begin{proof}
  See Appendix \ref{supp-app:proof-ggb}.
\end{proof}
The term $p+|g_{B^*}|$ is the sparsity level of $\bSigma^*$ (with the
initial $p$ representing the diagonal elements).  Theorem
\ref{thm:ggb} shows that when $p<n$ and moreover $p+|g_{B^*}|$
is $o(n/\log n)$, the GGB estimator is consistent in Frobenius norm.  By
contrast, consistency of the sample covariance matrix $\S$ in this
setting requires that $p^2$ be $o(n/\log n)$.  Thus, when $|g_{B^*}|\ll p^2$, GGB is expected to
substantially outperform the sample covariance matrix.  When $p>n$, the theorem
does not establish consistency of GGB.  This fact is not, however, a shortcoming of
the GGB method; rather, it is a reflection of the difficulty of the
problem and the fact that the assumed sparsity
level is in this case high relative to $n$.  Even when $B^*=0$ (i.e.,
$\bSigma^*$ is a diagonal matrix), we still must estimate $p$ free
parameters on the basis of only $n$ observations.  More generally,
we know that the rate of convergence in Theorem
\ref{thm:ggb} cannot, up to log factors, be improved by any other
method.  To see this, note that in the special case that $G$ is a
path graph, \citet{Bien15convex} prove (in Theorem 7 of their paper) that the optimal rate is
$B^*p/n$ over the class of $B^*$-banded matrices (assuming
$B^*\le\sqrt{n}$).  Since $|g_{B^*}|\sim B^*p$ for a path graph, we
see that the GGB estimator is within a log factor of the optimal rate
in Frobenius norm.

If we are interested in estimating the correlation matrix $\corr=\diag(\bSigma^*)^{-1/2}\bSigma^*\diag(\bSigma^*)^{-1/2}$ instead of
the covariance matrix, then we can apply our estimator to the sample
correlation matrix
$\hcorr=\diag(\S)^{-1/2}\cdot\S\cdot\diag(\S)^{-1/2}$.  Because the
diagonal entries of the correlation matrix do not need to be
estimated, we can show (under one additional assumption) that the rate of
convergence has $|g_{B^*}|$ in place of $p+|g_{B^*}|$.

\begin{ass}[bounded non-zero variances]
Suppose that there exists a constant $\kappa>0$ (i.e., not depending on $n$ or $p$) such that
$\min_j\bSigma^*_{jj}\ge \kappa$.

\label{ass:bounded}
\end{ass}

  \begin{thm}
  Suppose $\bSigma^*$ is $B^*$-banded with respect to the graph $G$
  (in the sense of Definition \ref{def:banded}) and Assumptions
  \ref{ass:distribution}, \ref{ass:scaling}, and \ref{ass:bounded} hold.
  Then there exist positive constants $x$ and $c$ for which taking $\lambda=x\sqrt{\frac{\log(\max\{p,n\})}{n}}$, $w_b=|g_b|^{1/2}$,
  $\delta=-\infty$, and $M=\diam(G)$, 
$$
\|\hat\bSigma_\lambda(\hcorr) - \corr\|_F^2\lesssim \frac{|g_{B^*}|\log(\max\{p,n\})}{n}
$$
with probability at least $1-c/\max\{p,n\}$.
\label{thm:ggb-corr}
\end{thm}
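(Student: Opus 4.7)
The plan is to reduce Theorem \ref{thm:ggb-corr} to the proof structure of Theorem \ref{thm:ggb} by treating $\hcorr$ as a ``sample covariance'' whose population version is $\corr$. Note that $\corr$ inherits $B^*$-bandedness with respect to $G$ from $\bSigma^*$, since multiplication by the positive diagonal matrix $\diag(\bSigma^*)^{-1/2}$ preserves the sparsity pattern. With this identification, $\hSigma_\lambda(\hcorr)$ solves \eqref{eq:ggb} with $\S$ replaced by $\hcorr$, so most of the argument used for Theorem \ref{thm:ggb} transfers verbatim provided we have an appropriate concentration bound for $\hcorr-\corr$.

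First, I would establish an entrywise deviation bound of the form
\begin{align*}
\max_{1\le j,k\le p}|\hcorr_{jk}-\corr_{jk}|\lesssim \sqrt{\frac{\log(\max\{p,n\})}{n}}
\end{align*}
with probability at least $1-c/\max\{p,n\}$. Under Assumption \ref{ass:distribution} this bound already holds for $\S-\bSigma^*$ (this is precisely the concentration statement used in the proof of Theorem \ref{thm:ggb}). Assumption \ref{ass:bounded} is what lets me transfer it to the correlation scale: since $\min_j\bSigma^*_{jj}\ge\kappa$, the quantities $\S_{jj}$ concentrate multiplicatively around $\bSigma^*_{jj}$, so $\S_{jj}^{-1/2}$ is uniformly bounded away from infinity with high probability. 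A first-order expansion of $\hcorr_{jk}=\S_{jk}/\hpro$ around $\corr_{jk}=\bSigma^*_{jk}/\pro$, combined with the entrywise bound on $\S-\bSigma^*$ and the fact that $|\corr_{jk}|\le 1$, then yields the stated rate.

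Next I would replay the deterministic portion of the proof of Theorem \ref{thm:ggb} with $(\S,\bSigma^*)$ replaced by $(\hcorr,\corr)$. Because the basic inequality derived from \eqref{eq:ggb} depends on the input matrix only through its entrywise deviation from the target and through the support of the target, the same dual-norm and support-decomposition argument produces a Frobenius-norm error of order $\lambda^2$ times the effective sparsity of $\corr$. The key simplification appears on the diagonal: the objective in \eqref{eq:ggb} places no penalty on diagonal entries, so $\diag(\hSigma_\lambda(\hcorr))=\diag(\hcorr)=I_p=\diag(\corr)$. Thus the diagonal contributes exactly zero to $\|\hSigma_\lambda(\hcorr)-\corr\|_F^2$, and the ``$p$'' term in Theorem \ref{thm:ggb} (which there accounted for $\|\diag(\S)-\diag(\bSigma^*)\|_F^2$) is absent, leaving only the off-diagonal sparsity term of order $|g_{B^*}|\log(\max\{p,n\})/n$.

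The main obstacle is the concentration step: converting the sub-Gaussian entrywise bound on $\S-\bSigma^*$ into a bound on $\hcorr-\corr$ of the same order requires a careful use of Assumption \ref{ass:bounded} to rule out small-denominator issues in the ratio $\S_{jk}/\hpro$ and to control the cross term coming from simultaneous deviation of $\S_{jk}$ and $\S_{jj},\S_{kk}$. Once that entrywise bound is in place, the remainder is essentially a transcription of the proof of Theorem \ref{thm:ggb} with the diagonal bookkeeping simplified.
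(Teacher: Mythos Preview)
Your proposal is correct and follows essentially the same approach as the paper: first establish an entrywise concentration bound $\|\hcorr-\corr\|_\infty\lesssim\sqrt{\log(\max\{p,n\})/n}$ (the paper isolates this as a separate lemma, using Assumption~\ref{ass:bounded} exactly as you describe to control the denominators $\hpro$), then rerun the deterministic argument from the proof of Theorem~\ref{thm:ggb} with $(\S,\bSigma^*)$ replaced by $(\hcorr,\corr)$, observing that the diagonal contribution vanishes because $\diag(\hcorr)=I_p=\diag(\corr)$, which removes the $p$ term and leaves only $|g_{B^*}|\log(\max\{p,n\})/n$.
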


When $G$ has a small number of edges (e.g., with all but a small
number of nodes having 0 degree), then we can have $|g_{B^*}|\ll n <
p$, in which case the above result establishes the consistency of the
graph-guided estimator even when $p>n$. 

\subsubsection{Bandwidth recovery}
\label{sec:bandwidth-recovery}

The remaining properties of this section can be established under a
condition on the minimum signal strength (analogous to the
``beta-min'' conditions arising in the regression context).
\begin{ass}[minimum signal strength]
The nonzero elements of $\bSigma^*$ must be sufficiently large:
 $$
\min_{jk\in g_{B^*}}|\bSigma^*_{jk}| > 2\lambda.
$$
\label{ass:signal-strength}
\end{ass}
One can think of $\lambda$ as on the order of
$\sqrt{\frac{\log(\max\{p,n\})}{n}}$ and measuring the ``noise level''
of the data, in particular $\|\S-\bSigma^*\|_\infty$. Thus, this
assumption ensures that the nonzeros stand out from the noise.  Under
this additional assumption, our estimator's bandwidth, $B(\hSigma_\lambda)$, matches the unknown
bandwidth $B^*$ with high probability.
\begin{thm}
Adding Assumption \ref{ass:signal-strength} to the conditions of
Theorem \ref{thm:ggb}, there exists a constant $c>0$ such that
$
B(\hSigma_\lambda) = B^*
$
with probability at least $1-c/\max\{p,n\}$.
\label{thm:bandwidth-recovery}
\end{thm}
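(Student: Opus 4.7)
The plan is to prove $B(\hSigma_\lambda) = B^*$ via two one-sided bounds: $B(\hSigma_\lambda) \leq B^*$ (no overshoot) and $B(\hSigma_\lambda) \geq B^*$ (no undershoot). Because $\delta = -\infty$, problem \eqref{eq:ggb} reduces to evaluating the proximal operator of $\lambda P(\cdot; G)$ applied to $\S^-_{g_M}$ (with $\hSigma_{\lambda,jj} = \S_{jj}$ on the diagonal). Thanks to the hierarchical nesting $g_1 \subset g_2 \subset \cdots \subset g_M$, this proximal operator is given essentially in closed form by Algorithm 3 of \citet{yan2015hierarchical} in terms of a sequence of breakpoints $k_0 < k_1 < \cdots < k_m$, with the resulting off-diagonal sparsity pattern equal to $g_{k_m}$. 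Hence $B(\hSigma_\lambda) = k_m$, and the task reduces to showing $k_m = B^*$ on the high-probability event of interest.

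For the upper bound $k_m \leq B^*$, I would invoke the entrywise concentration estimate $\|\S - \bSigma^*\|_\infty \leq \lambda/2$ that holds with probability at least $1 - c/\max\{p,n\}$ under Assumptions \ref{ass:distribution}--\ref{ass:scaling}; this bound is already the driving input in the proof of Theorem \ref{thm:ggb}. Since $\bSigma^*$ is $B^*$-banded, every off-diagonal entry of $\S$ at graph distance exceeding $B^*$ satisfies $|\S_{jk}| \leq \lambda/2$. With $w_b = |g_b|^{1/2}$, the Frobenius mass of $\S$ restricted to any annular shell $g_b \setminus g_{b-1}$ with $b > B^*$ is therefore strictly smaller than $\lambda w_b$, so the selection rule of \citet{yan2015hierarchical} rejects every candidate index above $B^*$. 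For the lower bound $k_m \geq B^*$, Assumption \ref{ass:signal-strength} combined with the same concentration bound yields $|\S_{jk}| \geq 3\lambda/2$ for every pair with $d_G(j,k) \leq B^*$, and in particular at least one such pair exists at distance exactly $B^*$ because $B^*$ is the true bandwidth. The Frobenius mass on $g_{B^*} \setminus g_{B^*-1}$ then strictly exceeds its $\lambda w_{B^*}$ threshold, forcing the procedure to retain a breakpoint at some index $\geq B^*$.

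The main obstacle is translating the informal principle ``index $b$ is retained as a breakpoint iff the annular Frobenius mass exceeds $\lambda w_b$'' into the precise inequality used by Algorithm 3 of \citet{yan2015hierarchical}, whose update is phrased in terms of partial cumulative sums of thresholded residuals rather than a single-annulus comparison. Once that interface is set up, the two probabilistic ingredients---entrywise concentration of $\S$ around $\bSigma^*$, and the minimum signal condition---slot in cleanly and deliver $k_m = B^*$ on the same high-probability event used in Theorem \ref{thm:ggb}, hence $B(\hSigma_\lambda) = B^*$ with probability at least $1 - c/\max\{p,n\}$. Because $\delta = -\infty$, there is no PSD projection to reconcile with the thresholded iterate, which would otherwise introduce an additional layer of technical work.
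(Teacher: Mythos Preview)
Your two-sided strategy is right, but the route you take---opening up the closed-form prox of \citet{yan2015hierarchical} and arguing via single-annulus comparisons---differs substantially from the paper's, and as written it has a real gap in the lower-bound step.

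The paper never touches the algorithmic breakpoint description. For the upper bound $B(\hSigma_\lambda)\le B^*$, it simply reuses a proposition already established inside the proof of Theorem~\ref{thm:ggb}: on $\calA_\lambda=\{\|\S-\bSigma^*\|_\infty<\lambda\}$, a primal--dual witness argument (solve the problem restricted to groups $b\le B^*$, then verify the full KKT conditions with strict inequality for $b>B^*$) shows that every optimal latent block $\hV$ with $b>B^*$ vanishes. For the lower bound, the paper uses only the KKT inequality $\|(\S-\hSigma)_{g_b}\|_F\le\lambda w_b$, which holds for \emph{every} $b$, together with concentration on $\calA_\lambda$ to get $\|(\hSigma-\bSigma^*)_{g_b}\|_F<2\lambda w_b$ for all $b\le B^*$. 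One more triangle inequality against the minimum-signal condition $\|\bSigma^*_{g_b}\|_F/w_b>2\lambda$ (implied by Assumption~\ref{ass:signal-strength}) gives $\|\hSigma_{g_b}\|_F>0$. No interface with Algorithm~3 is needed.

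Your lower-bound heuristic, by contrast, compares the annular mass $\|\S_{g_{B^*}\setminus g_{B^*-1}}\|_F$ to the threshold $\lambda w_{B^*}=\lambda|g_{B^*}|^{1/2}$. But Assumption~\ref{ass:signal-strength} only yields $\|\S_{g_{B^*}\setminus g_{B^*-1}}\|_F\gtrsim\lambda|g_{B^*}\setminus g_{B^*-1}|^{1/2}$, and there is no reason for $|g_{B^*}\setminus g_{B^*-1}|$ to be a fixed fraction of $|g_{B^*}|$; on many seed graphs the outermost shell is tiny relative to the full ball. So the comparison you propose can fail even when the theorem holds. (Your upper-bound heuristic has an analogous mismatch, though it happens to go the right direction there.) You correctly flag the interface with \citet{yan2015hierarchical} as the main obstacle, but that obstacle is not merely bookkeeping: the selection rule in Algorithm~3 is phrased in cumulative quantities, and your single-annulus inequalities do not recover it. The paper's KKT-based route sidesteps all of this; it is also the reason the event is $\{\|\S-\bSigma^*\|_\infty<\lambda\}$ rather than the $\lambda/2$ you use.
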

\begin{proof}
  See Appendix \ref{supp-app:bandwidth-recovery}.
\end{proof}

\subsubsection{Positive semidefinite constraint}
\label{sec:eigen}

The results presented above are for $\delta=-\infty$.  The next result provides
an additional condition under which the previous results hold for
the $\delta=0$ case (i.e., the situation in which the estimator is
forced to be positive semidefinite).

Suppose all nonzero elements of $\bSigma^*$ are within a factor of
$\tau$ of each other:
$$
\frac{\max_{jk\in g_{B^*}}|\bSigma^*_{jk}|}{\min_{jk\in
    g_{B^*}}|\bSigma^*_{jk}|}\le \tau.
$$

\begin{ass}[minimum eigenvalue]
The minimum eigenvalue of $\bSigma^*$ must be sufficiently large:
 $$
\lambda_{\min}(\bSigma^*)> 4x\tau\sqrt{\log(p)/n}\max_j|g_{jB^*}|.$$
\label{ass:eigenvalue}
\end{ass}
The quantity $\max_j|g_{jB^*}|$ is the maximal degree of the true
covariance graph (defined by the sparsity pattern of $\bSigma^*$).
\begin{thm}
Adding Assumptions \ref{ass:signal-strength} and \ref{ass:eigenvalue} to the conditions of Theorem \ref{thm:ggb},
Theorems \ref{thm:ggb} and \ref{thm:bandwidth-recovery} hold for $\delta=0$.
\label{thm:op}
\end{thm}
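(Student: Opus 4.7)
The plan is to show that, under the additional Assumptions \ref{ass:signal-strength} and \ref{ass:eigenvalue}, the unconstrained optimum of \eqref{eq:ggb} (with $\delta=-\infty$) is already positive semidefinite on a high-probability event. Once this is established, that same matrix is feasible for, and minimizes the identical objective of, the problem with $\delta=0$, so the two solutions coincide and the conclusions of Theorems \ref{thm:ggb} and \ref{thm:bandwidth-recovery} transfer verbatim (after intersecting the relevant high-probability events).

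I would carry this out in three steps. First, I would invoke Theorem \ref{thm:bandwidth-recovery}: because Assumption \ref{ass:signal-strength} is now in force, on an event of probability at least $1-c/\max\{p,n\}$ one has $B(\hSigma_\lambda)=B^*$, and hence $\supp(\hSigma_\lambda^-)\subseteq g_{B^*}$. In particular every row of $\hSigma_\lambda-\bSigma^*$ has at most $\max_j|g_{jB^*}|+1$ nonzero entries. Second, I would derive a uniform entrywise bound of the form $\|\hSigma_\lambda-\bSigma^*\|_\infty\lesssim \tau\lambda$ on the same event. The sub-Gaussian concentration already used to prove Theorem \ref{thm:ggb} yields $\|\S-\bSigma^*\|_\infty\le\lambda$; on the diagonal $\hSigma_{jj}=\S_{jj}$, so the diagonal error is controlled automatically. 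For off-diagonal entries I would track the entrywise shrinkage of the hierarchical latent overlapping group lasso proximal operator (the nested groupwise soft-thresholding described after Algorithm \ref{alg:bcd-dual}), whose total effect at any $jk\in g_{B^*}$ is a telescoping sum of block shrinkages bounded by a constant multiple of $\tau\lambda$, with the factor $\tau$ entering precisely because each latent block $\V[b]_{g_b\setminus g_{b-1}}$ contains entries whose moduli vary by at most this ratio.

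Third, I would combine these two steps with the standard row-sum bound on the operator norm:
\begin{equation*}
\|\hSigma_\lambda-\bSigma^*\|_{op}\;\le\;\bigl(\max_j|g_{jB^*}|+1\bigr)\,\|\hSigma_\lambda-\bSigma^*\|_\infty\;\lesssim\;\tau\,\lambda\,\max_j|g_{jB^*}|.
\end{equation*}
Recalling $\lambda = x\sqrt{\log(\max\{p,n\})/n}$, the right-hand side is strictly less than $\lambda_{\min}(\bSigma^*)$ by Assumption \ref{ass:eigenvalue}, so Weyl's inequality gives $\lambda_{\min}(\hSigma_\lambda)>0$. Thus $\hSigma_\lambda\succeq 0$ on the intersection of the events above, and the coincidence of the $\delta=-\infty$ and $\delta=0$ minimizers finishes the argument.

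The principal obstacle is the entrywise bound in the second step. Because $\text{Prox}_{\lambda P(\cdot;G)}$ is not a coordinatewise soft-threshold, the shrinkage at index $jk$ depends on the Frobenius norms of all latent blocks $\V[b]$ that contain $jk$, and these norms scale with the largest entries of $\bSigma^*$ over those groups. Keeping the per-entry shrinkage from dominating the smallest nonzero $|\bSigma^*_{jk}|$ is exactly where the dynamic-range ratio $\tau$ must appear, and the nested structure $g_1\subset\cdots\subset g_M$ of the groups is what keeps the bookkeeping manageable.
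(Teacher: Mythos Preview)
The paper does not include a proof of Theorem~\ref{thm:op}; the statement is asserted in Section~\ref{sec:eigen} without an accompanying appendix entry. So there is no ``paper's own proof'' to compare against, only the form of Assumption~\ref{ass:eigenvalue}, which strongly signals the intended route.

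Your overall strategy---show that on a high-probability event the unconstrained minimizer $\hSigma_\lambda$ (with $\delta=-\infty$) is already positive semidefinite, so that it coincides with the $\delta=0$ solution---is correct and is clearly what the authors had in mind. The ingredients $\tau$ and $\max_j|g_{jB^*}|$ in Assumption~\ref{ass:eigenvalue} are precisely the constants one obtains from an $\ell_\infty\!\to\!\ell_\infty$ (row-sum) bound on the operator norm combined with an elementwise bound on $\hSigma_\lambda-\bSigma^*$, exactly as you set up in Step~3.

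Your Step~2 is indeed the crux, and your description of it is slightly off. The shrinkage at an index $jk\in g_{B^*}$ is not a ``telescoping sum of block shrinkages,'' and the latent variables $\V$ are supported on $g_b$, not on the annuli $g_b\setminus g_{b-1}$. The cleaner way to get the bound is via the explicit structure of $\text{Prox}_{\lambda P(\cdot;G)}$ for nested groups (Algorithm~3 of \citealt{yan2015hierarchical}, referenced just after Algorithm~\ref{alg:bcd-dual}): the prox acts as group soft-thresholding on annuli determined by breakpoints, so on each annulus $A$ one has $\hSigma_{jk}=\S_{jk}\cdot(1-\theta/\|\S_A\|_F)_+$. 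This gives
\[
|\hSigma_{jk}-\S_{jk}|\;\le\;\theta\,\frac{|\S_{jk}|}{\|\S_A\|_F}\;\le\;\theta\,\frac{\max_{A}|\S_{jk}|}{|A|^{1/2}\min_{A}|\S_{jk}|}.
\]
On $\calA_\lambda$ with Assumption~\ref{ass:signal-strength}, $\min_{jk\in g_{B^*}}|\S_{jk}|>\lambda$ and $\max|\S_{jk}|/\min|\S_{jk}|\lesssim\tau$, and one checks that $\theta/|A|^{1/2}\lesssim\lambda$ from the form of the thresholds. This yields $\|[\hSigma-\S]_{g_{B^*}}\|_\infty\lesssim\tau\lambda$ and hence $\|\hSigma-\bSigma^*\|_\infty\lesssim\tau\lambda$, after which your Step~3 and Weyl's inequality finish the argument with the constant $4$ in Assumption~\ref{ass:eigenvalue} absorbing the implicit constants.
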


\subsection{Discussion \& the small world phenomenon}
\label{sec:discussion}

The theorems above assume that $M=\diam(G)$, but the results in fact
apply more generally so long as $M\ge B^*$.  While in general we may not
know $B^*$, there are some applications in which one may know enough
about the mechanism of the graph-based dependence to provide an upper
bound on $B^*$.  In such situations, choosing an $M$ that is much
smaller than $\diam(G)$ may lead to substantial computational
improvements (both in terms of timing and memory).

Examining the Frobenius-norm 
bound above, we see that the
structure of the seed graph $G$ plays an important role in the rates
of convergence.  While it is tempting to think of $B^*$ as controlling
the sparsity level (and therefore the difficulty of the problem), it is
important to observe that this quantity is in fact a function of both $B^*$ and the graph $G$.
For example, in the Frobenius-norm bound, the quantity $|g_{B^*}|$ is
the number of pairs of variables that are within $B^*$ hops of each other in the
graph $G$.  In conventional banded estimation, where $G$ is a path
seed graph, the dependence on $B^*$ is linear: $|g_{B^*}|\sim B^*p$.
More generally, when $G$ is a $d$-dimensional lattice graph
$|g_{B^*}|\sim (B^*)^dp$ (when $B^*$ is small enough to avoid edge
effects).  However, intuition provided by lattice graphs does not
extend well to more general graphs.  For a particularly extreme example, when $G$ is a complete
graph, then $|g_1|={p \choose 2}$.  Many large graphs in nature have small
diameter \citep{watts1998collective}, which means that $|g_{B^*}|$ may
be quite large even when $B^*$ is small.  The idea of ``six degrees of
separation'' \citep{travers1969experimental} means that when using a
social network as a seed graph, $|g_6|\sim p^2$ (in fact, recent work
suggests six might even be an overestimate,
\citealt{backstrom2011four}).  

When $\diam(G)$ is very small, Definition \ref{def:banded} is
restricted to a very limited class of sparsity patterns.  In the next
section, we consider a much larger class of sparsity patterns, using
the notion of local bandwidths introduced in Definition \ref{def:banded-vb}.

\section{Graph-guided banding with local bandwidths}
\label{sec:ggb-vb}
Definition \ref{def:banded-vb} describes a different kind of
graph-guided banding, in which each variable $j$ has an associated
bandwidth $B_j^*$, which can be thought of as a neighborhood size.
Consider sets of the form
$$
g_{jb}=\{k:1\le d_G(j,k)\le b\}, \text{ for } 1\le b\le M_j,\quad 1\le j\le p,
$$
which is the set of variables within $b$ hops of $j$ in the seed graph
$G$.  We can describe a sparsity pattern of the type in Definition
\ref{def:banded-vb} as
$$  S^*=\bigcup_{1\le j\le p}\left[\{j\}\times g_{jB_j^*}\right]\cup \left[g_{jB_j^*}\times \{j\}\right].$$
To see this, note that $jk\in S^*$ is equivalent to $d_G(j,k)\le \max\{B_j^*,B_k^*\}$. 
We desire an estimator whose sparsity patterns are unions of the
$\{g_{jb}\}$, so we use the latent overlapping group lasso penalty
\citep{jacob2009group}, which is designed for such situations.  We define the
{\em graph-guided banding estimator with local bandwidths} (\ggbvb)
as 
\begin{align}
  \tSigma_\lambda=\arg\min_{\bSigma\succeq\delta
    I_p}\left\{\half\|\S-\bSigma\|_F^2+\lambda \tilde P(\bSigma;G) \right\},\label{eq:ggb-vb}
\end{align}
where
$$
\tilde P(\bSigma;G)=\min_{\{\V[jb]\in\real^{p\times
    p}\}}\left\{\sum_{j=1}^p\sum_{b=1}^{M_j}w_{jb}\|\V[jb]\|_F\st\bSigma_{g_{jM_j}}=\sum_{b=1}^{M_j}\V[jb]_{g_{jb}}\right\}.
$$
The only difference between this estimator and the \ggbgb~estimator of
\eqref{eq:ggb} is in the choice of groups.  For a fixed $j$, the
groups are nested, $g_{j1}\subseteq\cdots\subseteq g_{j M_j}$.  When
$M_j$ is the diameter of the connected component that contains $j$,
then 
$g_{j M_j}$ represents all variables that are connected to variable
$j$.  For $j\neq k$, we can have $g_{jb}\cap g_{kb}\neq\emptyset$ without one group contained completely in the other.  The fact that this
group structure is not simply
hierarchically nested, as in the global bandwidth estimator, complicates
both the computation and the theoretical analysis of the estimator.

\subsection{Computation}
\label{sec:computation-ggb-vb}

By Lemma \ref{lem:bcd-for-psd}, we can solve \eqref{eq:ggb-vb} by
Algorithm \ref{alg:bcd-dual}.  The challenge of evaluating the
\ggbvb~estimator lies therefore in efficiently evaluating the proximal
operator $\text{Prox}_{\lambda\tilde P}$.  Viewing this problem as
 an optimization problem over the latent variables $\sV[jb]$ suggests a
 simple blockwise coordinate descent approach, which we adopt here.  

\subsection{Theory}
\label{sec:theory-ggb-vb}

In Theorem \ref{thm:ggb}, we saw that the Frobenius-norm convergence
depends on the number of nonzero off-diagonal elements, $|g_{B*}|$, a
quantity that we observed in Section
\ref{sec:discussion} could be quite large---as large as $O(p^2)$---even when $B^*$ is very small.  In other words, for such
``small-world'' seed graphs $G$, the assumption of graph-guided
bandedness with a small global bandwidth does not necessarily correspond to
$\bSigma^*$ being a sparse matrix.  The notion of graph-guided
bandedness with {\em local} bandwidths, as in Definition \ref{def:banded-vb},
provides finer control, allowing us to describe sparse $\bSigma^*$
even with ``small-world'' seed graphs $G$.  The purpose of the next theorem
is to show that, under such a notion of sparsity, the \ggbvb~estimator
can attain Frobenius-norm consistency when $\bSigma^*$ is sparse. 
\begin{thm}
Suppose $\bSigma^*$ is $(B_1^*,\ldots,B_p^*)$-banded with respect to a
graph $G$ (in the sense of Definition \ref{def:banded-vb}) and
Assumptions \ref{ass:distribution}-\ref{ass:scaling} hold. 
Then with $\lambda=x\sqrt{\frac{\log(\max\{p,n\})}{n}}$ (for $x$
  sufficiently large), $w_{jb}=|g_{jb}|^{1/2}$,
  $\delta=-\infty$, and $M_j=\max_k d_G(j,k)$,
$$
\|\tSigma_\lambda-\bSigma^*\|_F^2\lesssim|S^*|\sqrt{\frac{\log(\max\{p,n\})}{n}}+\frac{p\log(\max\{p,n\})}{n}
$$
 with probability at least $1-c/\max\{p,n\}$.
\label{thm:slowrate}
\end{thm}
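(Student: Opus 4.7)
\textbf{Proof plan for Theorem \ref{thm:slowrate}.}
The plan is to follow the standard ``slow-rate'' argument for regularized $M$-estimators, adapted so that (i) the dual norm of the latent overlapping group penalty can be controlled by $\|\S-\bSigma^*\|_\infty$, and (ii) the unpenalized diagonal is handled separately. Let $E=\tSigma_\lambda-\bSigma^*$ and $W=\S-\bSigma^*$, and write $E=E_d+E_o$, $W=W_d+W_o$ for their diagonal/off-diagonal parts (note $\tilde P$ only sees the off-diagonal). The first-order optimality of $\tSigma_\lambda$ in \eqref{eq:ggb-vb} (with $\delta=-\infty$) together with convexity yields the basic inequality
\begin{equation*}
\tfrac{1}{2}\|E\|_F^2 \;\le\; \langle W_d,E_d\rangle + \langle W_o,E_o\rangle + \lambda\bigl[\tilde P(\bSigma^*;G)-\tilde P(\tSigma_\lambda;G)\bigr].
\end{equation*}

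The second step is to control the stochastic contributions. The dual norm of the latent overlapping group lasso with weights $w_{jb}$ is $\tilde P^*(Y)=\max_{j,b}\|Y_{g_{jb}}\|_F/w_{jb}$; with $w_{jb}=|g_{jb}|^{1/2}$ this is bounded above by $\|Y\|_\infty$. Under Assumptions \ref{ass:distribution}--\ref{ass:scaling}, the sub-Gaussian concentration of sample covariances gives $\|W\|_\infty\lesssim\sqrt{\log(\max\{p,n\})/n}$ and $\|W_d\|_F^2\lesssim p\log(\max\{p,n\})/n$, each on an event of probability at least $1-c/\max\{p,n\}$; these are the same concentration facts invoked in Theorem \ref{thm:ggb}. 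Choosing the constant $x$ large enough that $\lambda\ge 2\tilde P^*(W_o)$ on this event, I apply the dual-norm inequality and the triangle inequality $\tilde P(E_o)\le\tilde P(\tSigma_\lambda)+\tilde P(\bSigma^*)$ to get $\langle W_o,E_o\rangle\le\frac{\lambda}{2}[\tilde P(\tSigma_\lambda)+\tilde P(\bSigma^*)]$, which after substitution cancels $\tilde P(\tSigma_\lambda)$ and leaves $\tfrac{1}{2}\|E\|_F^2\le \langle W_d,E_d\rangle+\tfrac{3\lambda}{2}\tilde P(\bSigma^*;G)$. Splitting $\langle W_d,E_d\rangle\le\tfrac{1}{4}\|E\|_F^2+\|W_d\|_F^2$ absorbs the diagonal cross-term into the LHS.

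The remaining, and most delicate, step is to show $\tilde P(\bSigma^*;G)\lesssim |S^*|$, which is nontrivial because the groups $\{g_{jb}\}$ overlap and $\bSigma^*$ must be \emph{constructively} decomposed among them. I propose the following assignment: for each off-diagonal pair $(j,k)$ with $\bSigma^*_{jk}\neq 0$, let $\ell(j,k)\in\arg\max\{B_j^*,B_k^*\}$ (breaking ties arbitrarily), and place the entry in $\V[\ell(j,k),B_{\ell(j,k)}^*]$. Since $d_G(j,k)\le\max\{B_j^*,B_k^*\}=B_{\ell(j,k)}^*$, this respects the group supports, and the nonzero indices of $\V[\ell B_\ell^*]$ are contained in $g_{\ell B_\ell^*}$. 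Using $|\bSigma^*_{jk}|\le\bar\kappa$ from Assumption \ref{ass:distribution} and Cauchy--Schwarz, $w_{\ell B_\ell^*}\|\V[\ell B_\ell^*]\|_F\le\bar\kappa\,|g_{\ell B_\ell^*}|$. A counting argument shows $\sum_\ell|g_{\ell B_\ell^*}|\le 2|S^*|$, because every index pair $(a,b)\in S^*$ is captured by the group corresponding to whichever endpoint has the larger bandwidth and can be double-counted only when $B_a^*=B_b^*$. Therefore $\tilde P(\bSigma^*;G)\le 2\bar\kappa |S^*|$, and combining all pieces gives
\begin{equation*}
\|E\|_F^2 \;\lesssim\; \|W_d\|_F^2 + \lambda|S^*| \;\lesssim\; \frac{p\log(\max\{p,n\})}{n} + |S^*|\sqrt{\frac{\log(\max\{p,n\})}{n}},
\end{equation*}
as claimed. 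The main obstacle is the combinatorial bookkeeping in the final step: identifying an explicit feasible $\{\V[jb]\}$ whose cost is $O(|S^*|)$ despite the heavy overlap among $\{g_{jb}\}$. Everything else is analogous to standard slow-rate lasso arguments and to the sub-Gaussian concentration already used in Theorem \ref{thm:ggb}.
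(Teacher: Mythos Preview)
Your proposal is correct and follows essentially the same route as the paper's proof: basic inequality from optimality, control of the cross term via the dual norm $\tilde P^*(Y)=\max_{jb}w_{jb}^{-1}\|Y_{g_{jb}}\|_F\le\|Y\|_\infty$, triangle inequality to cancel $\tilde P(\tSigma_\lambda)$, and a constructive decomposition of $\bSigma^{*-}$ into the groups $\{g_{jB_j^*}\}$ to show $\tilde P(\bSigma^*;G)\lesssim\|\bSigma^*\|_\infty|S^*|$ via the counting bound $\sum_j|g_{jB_j^*}|\le 2|S^*|$.

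The only differences are cosmetic. First, since the penalty ignores the diagonal and $\delta=-\infty$, one has $\diag(\tSigma_\lambda)=\diag(\S)$ exactly, so $E_d=W_d$ and $\langle W_d,E_d\rangle=\|W_d\|_F^2\le p\lambda^2$ on $\calA_\lambda$ directly; your Young-inequality detour is unnecessary but harmless. Second, the paper's explicit decomposition places $\tfrac12\bSigma^*_{jk}$ in each of $\V[jB_j^*]$ and $\V[kB_k^*]$ when $d_G(j,k)\le\min\{B_j^*,B_k^*\}$, and the full entry in the larger-bandwidth group otherwise, whereas you always assign to the $\arg\max$ endpoint; both are feasible and yield the same bound $\sum_j w_{jB_j^*}\|\V[jB_j^*]\|_F\le\|\bSigma^*\|_\infty\sum_j|g_{jB_j^*}|$. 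Your justification of $\sum_\ell|g_{\ell B_\ell^*}|\le 2|S^*|$ is phrased in terms of your assignment rule, but the cleaner reason (used in the paper) is simply that any $jk\in S^*$ lies in at most the two sets $g_{jB_j^*}$ and $g_{kB_k^*}$.
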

\begin{proof}
  See Appendix \ref{supp-sec:slowrate}.
\end{proof}
\noindent The two terms on the right-hand side of Theorem \ref{thm:slowrate}
represent estimation of the off-diagonal and diagonal elements of
$\bSigma^*$.  If $\bSigma^*$ is sparse enough---in particular that
$|S^*|\le p \sqrt{\frac{\log(\max\{p,n\})}{n}}$, then estimating all
of $\bSigma^*$ is as difficult as estimating its diagonal elements
alone (i.e., just the $p$ variances).  A simple example of this
situation is when
$\bSigma^*$ is  $(B_1^*,\ldots,B_p^*)$-banded with respect to a
graph $G$ and $B_j^*>0$ for at most 
$\sqrt{\frac{\log(\max\{p,n\})}{n}}$ variables $j$.  Another example
is when all $B_j^*\le1$ and the sum of degrees of all $j$ with
$B_j^*=1$ is bounded by $p\sqrt{\frac{\log(\max\{p,n\})}{n}}$.

Theorem \ref{thm:slowrate} suggests that consistency in Frobenius norm
might not be achieved in the $p>n$ regime even in the sparsest situation of a
diagonal matrix.  As noted earlier, this is not a shortcoming of our
method but rather a reflection of the
difficulty of the task, which requires estimating $p$ independent
parameters even in the sparsest situation of a diagonal matrix.  

As noted previously, if instead we were interested in estimating the
correlation matrix, we can attain a rate as in Theorem \ref{thm:slowrate}
but without the second term by applying our estimator to the sample
correlation matrix.
\begin{thm}
Suppose $\bSigma^*$ is $(B_1^*,\ldots,B_p^*)$-banded with respect to a
graph $G$ (in the sense of Definition \ref{def:banded-vb}) and
Assumptions \ref{ass:distribution}, \ref{ass:scaling}, and \ref{ass:bounded} hold. 
Then there exists a constant $x$ for which taking $\lambda=x\sqrt{\frac{\log(\max\{p,n\})}{n}}$, $w_{jb}=|g_{jb}|^{1/2}$,
  $\delta=-\infty$, and $M_j=\max_k d_G(j,k)$,
$$
\|\tSigma_\lambda(\hcorr)-\corr\|_F^2\lesssim|S^*|\sqrt{\frac{\log(\max\{p,n\})}{n}}
$$
 with probability at least $1-c/\max\{p,n\}$.
\label{thm:slowrate-corr}
\end{thm}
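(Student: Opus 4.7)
The plan is to mirror the proof of Theorem~\ref{thm:slowrate} with two substitutions---$\S\to\hcorr$ and $\bSigma^*\to\corr$---and then explain why the $p\log(\max\{p,n\})/n$ term drops out. In the proof of Theorem~\ref{thm:slowrate}, that term arose from estimating the $p$ diagonal entries of $\bSigma^*$ via the sub-Gaussian sample variances $\S_{jj}$. In the correlation setting $\diag(\hcorr)=\diag(\corr)=I_p$ exactly, so there is no diagonal to estimate. Moreover, because each group $g_{jb}$ in $\tilde P$ excludes the index $j$ itself, the penalty acts only on off-diagonals, and the minimizer satisfies $\diag(\tSigma_\lambda)=\diag(\hcorr)=I_p$. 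Consequently $\Delta:=\tSigma_\lambda-\corr$ has zero diagonal, and the entire error is off-diagonal.

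The key new probabilistic ingredient is a uniform concentration bound
\[
\|\hcorr-\corr\|_\infty \;\lesssim\; \sqrt{\tfrac{\log(\max\{p,n\})}{n}}
\]
with probability at least $1-c/\max\{p,n\}$. This follows by expanding $\hcorr_{jk}=\S_{jk}/\sqrt{\S_{jj}\S_{kk}}$ around $\corr_{jk}=\bSigma^*_{jk}/\sqrt{\bSigma^*_{jj}\bSigma^*_{kk}}$ to first order. Assumption~\ref{ass:distribution} supplies sub-Gaussian concentration of $\S_{jj},\S_{kk},\S_{jk}$ to their population counterparts at rate $\sqrt{\log(\max\{p,n\})/n}$, uniformly over all $O(p^2)$ pairs via a union bound (using Assumption~\ref{ass:scaling} to absorb the $\log p^2$ factor), and Assumption~\ref{ass:bounded} bounds $1/\sqrt{\bSigma^*_{jj}\bSigma^*_{kk}}\le 1/\kappa$, making the Jacobian of the ratio map uniformly bounded. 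This is the same bound implicitly required for Theorem~\ref{thm:ggb-corr}.

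Given this, the slow-rate argument is routine. Optimality of $\tSigma_\lambda(\hcorr)$ together with the feasibility of $\corr$ yields the basic inequality
\[
\tfrac{1}{2}\|\Delta\|_F^2 \;\le\; \langle\hcorr-\corr,\Delta\rangle + \lambda\bigl[\tilde P(\corr;G)-\tilde P(\tSigma_\lambda;G)\bigr].
\]
The dual of $\tilde P$ under the weights $w_{jb}=|g_{jb}|^{1/2}$ is dominated by $\|\cdot\|_\infty$, so H\"older's inequality and the choice $\lambda\ge 2\|\hcorr-\corr\|_\infty$ absorb the inner product into $(\lambda/2)\tilde P(\Delta)$. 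A triangle-inequality step then gives $\|\Delta\|_F^2 \lesssim \lambda\,\tilde P(\corr;G)$. Exhibiting the explicit decomposition that assigns to $\V[jB_j^*]$ the entries of $\corr$ on the $j$th ``star'' of radius $B_j^*$, and using $|\corr_{jk}|\le 1$ together with the fact that $\sum_j|g_{jB_j^*}|$ double-counts each off-diagonal nonzero, gives $\tilde P(\corr;G)\lesssim |S^*|$. Combining yields $\|\Delta\|_F^2 \lesssim |S^*|\sqrt{\log(\max\{p,n\})/n}$.

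The main obstacle is the combinatorial bookkeeping behind the latent overlapping group decomposition of $\corr$ that produces the tight bound $\tilde P(\corr;G)\lesssim |S^*|$ rather than something like $|S^*|^{1/2}$ times a graph-dependent factor; but this is identical to the argument that already underlies Theorem~\ref{thm:slowrate} and ports over verbatim. The new concentration step for $\hcorr-\corr$ is the only genuinely new calculation, and it is standard given Assumption~\ref{ass:bounded}.
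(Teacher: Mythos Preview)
Your proposal is correct and follows essentially the same route as the paper. The paper's proof likewise replaces $\S$ by $\hcorr$ and $\bSigma^*$ by $\corr$ in the argument of Theorem~\ref{thm:slowrate}, observes that the $\lambda^2 p$ diagonal term in \eqref{eq:innerwithdiag} vanishes because $\diag(\hcorr)=\diag(\corr)$, and invokes a separate lemma (the paper's Lemma~\ref{lem:cor}) establishing $\|\hcorr-\corr\|_\infty\lesssim\sqrt{\log(\max\{p,n\})/n}$ under Assumptions~\ref{ass:distribution}, \ref{ass:scaling}, and \ref{ass:bounded}; your sketch of that concentration step and the combinatorial bound $\tilde P(\corr;G)\lesssim|S^*|$ match the paper's arguments, with only inessential differences in constants (e.g., your $\lambda\ge 2\|\hcorr-\corr\|_\infty$ versus the paper's $\lambda>\|\hcorr-\corr\|_\infty$).
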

Consistency follows even when $p\gg n$ as long as the right-hand side
approaches 0.

\section{Empirical study}
\label{sec:empirical-study}

\subsection{Simulations}
\label{sec:simulations}

 All simulations presented in this paper were performed using the {\tt
 simulator} package \citep{bien2016simulator,simulator} in {\tt R} \citep{R}. 
 We investigate the advantages of exploiting a known
graph structure (both in the global and local bandwidths settings),
and we also study our two methods' robustness to graph misspecification.
We compare the two GGB methods to each other and to two other
methods, which represent the ``patternless sparsity'' and traditional
banding approaches.

Perhaps the simplest and most commonly used sparse covariance
estimator is to simply soft-threshold each off-diagonal element of the
sample covariance matrix:
$$
\hSigma^{\text{soft}}_{jk}=\text{sign}(\S_{jk})\left[|\S_{jk}|-\lambda\right]_+.
$$
This method \citep{rothman2009generalized}, along with its close relatives \citep{Rothman12,Liu13,xue2012positive}, do not make
use of the graph $G$ and corresponds to \eqref{eq:general} with
$\Omega$ being the $\ell_1$ norm and $\delta=-\infty$.
Thus, comparisons to this method allow us to
understand what is gained and what is lost by using the graph $G$
versus seeking ``patternless sparsity.''

We also wish to compare our new estimators to a more traditional
(i.e., non-graph-guided) banding approach.
Graph-guided banding is not the only way one might incorporate the graph
$G$.  A different approach would be to find an ordering of the nodes
based on $G$ and then apply a traditional banding estimator with respect
to this ordering.  To this end, we seek an ordering of the variables that travels
along $G$ as much as it can, making jumps to another
part of $G$ as sparingly as possible.  We express this as a traveling salesman problem (TSP),
which we then solve using an off-the-shelf TSP solver (see Appendix
\ref{supp-sec:tsp} for details and for images of the learned paths).
We then apply the {\em convex banding} method using this ordering \citep{Bien15convex,hierband}.

While each of these methods can be solved with a positive semidefinite
constraint, we compare their performances
without such a constraint.  Doing so allows us to observe the
fraction of the time each unconstrained estimator is positive semidefinite: Table \ref{supp-tab:psd} in Appendix
\ref{supp-sec:psd} shows that the banding estimators are observed in
the study to always be positive semidefinite, whereas this is not the
case for the soft-thresholding estimator.  It follows that the
reported results for the banding methods in fact also pertain to these
estimators with $\delta=0$.

To generate a covariance matrix that is $(B_1^*,\ldots,B_p^*)$ banded with
respect to a graph $G$, we take
$$
\bSigma^*_{jk}=
\begin{cases}
 \frac1{d_G(j,k)}1\left\{d_G(j,k)\le\max(B^*_j,B^*_k)\right\}&\text{ if }j\neq k\\
a&\text{ if }j=k,
\end{cases}
$$
where $a$ is chosen to ensure that the minimum eigenvalue of $\bSigma$
is at least $\sigma^2$ (we take $\sigma=0.01$ throughout).

Initially, we compare the methods under four GGB structures:
two-dimensional lattice with global bandwidth,
two-dimensional lattice with local bandwidths,
scale-free graph with global bandwidth, and
scale-free graph with local bandwidths.
In the first two scenarios, we take $G$ to be a two-dimensional square
lattice having 20 variables per side (so $p=400$).  The global
bandwidth is taken to be $B^*=4$, and the local bandwidths are
generated according to 
$$
B^*_j=\begin{cases}
  0&\text{ with probability } 0.90\\
  1&\text{ with probability } 0.06\\
  4&\text{ with probability } 0.04
\end{cases}
$$
(see Figure \ref{fig:lattice}). 
\begin{figure}
  \centering
  \includegraphics[width=0.25\linewidth]{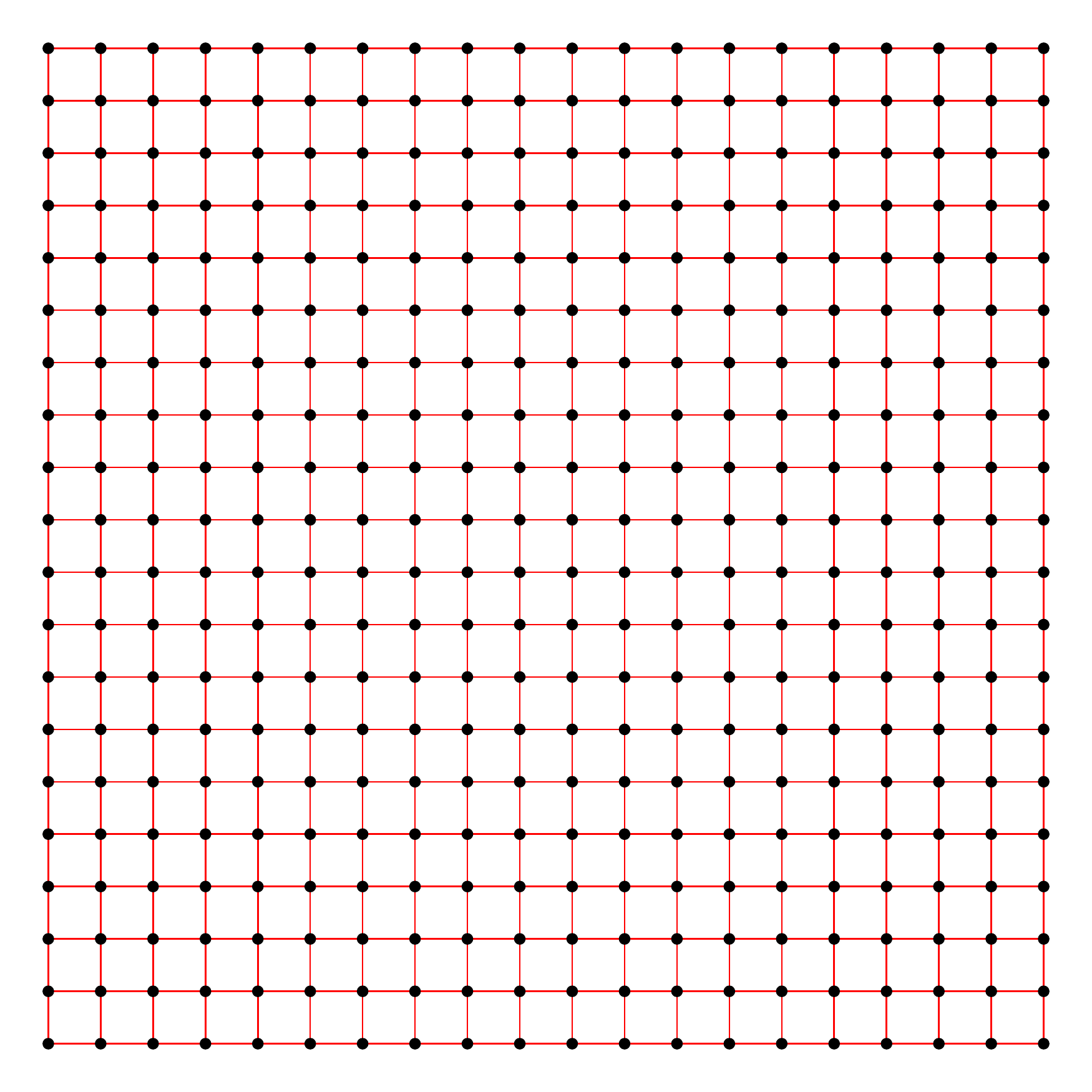}
  \includegraphics[width=0.25\linewidth]{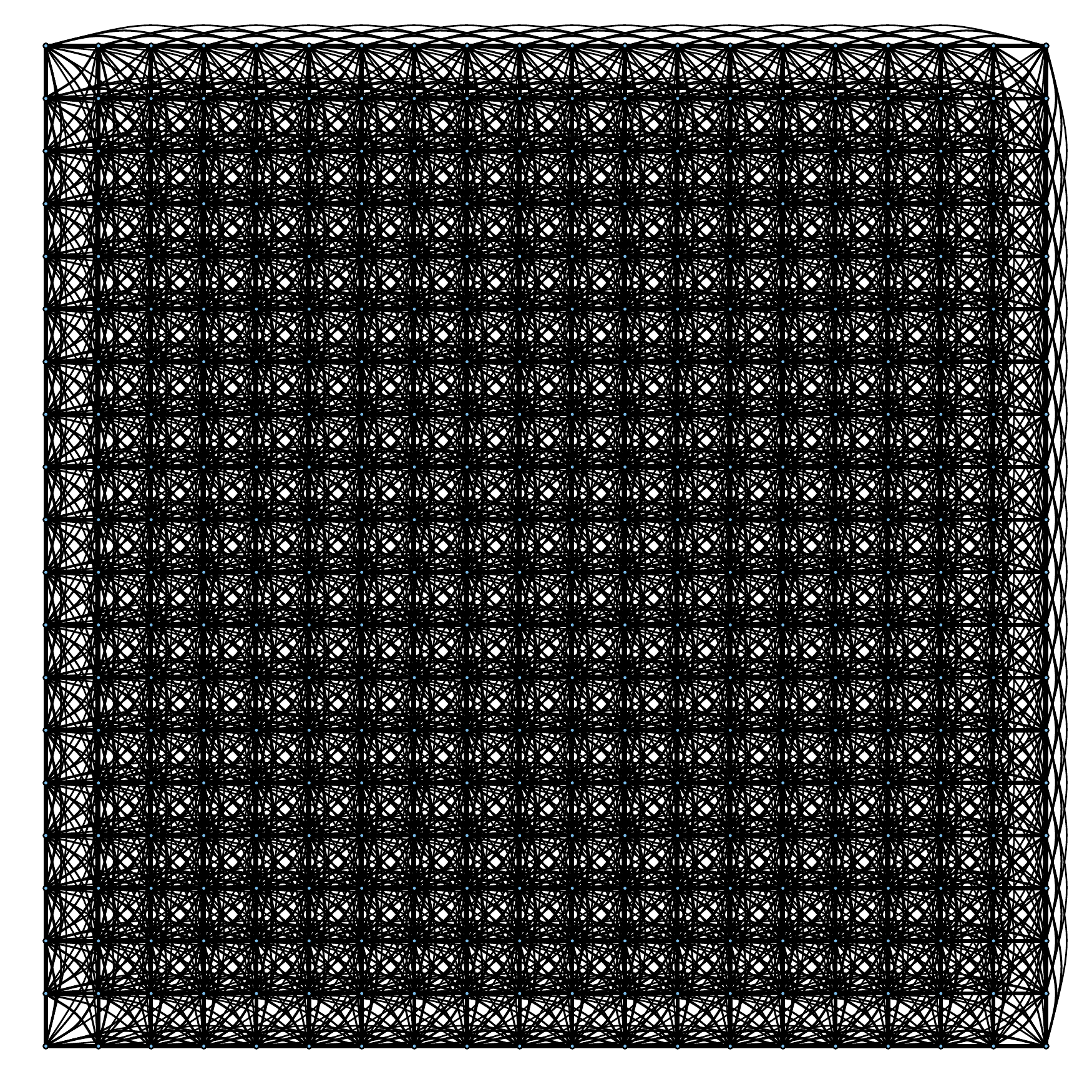}
  \includegraphics[width=0.25\linewidth]{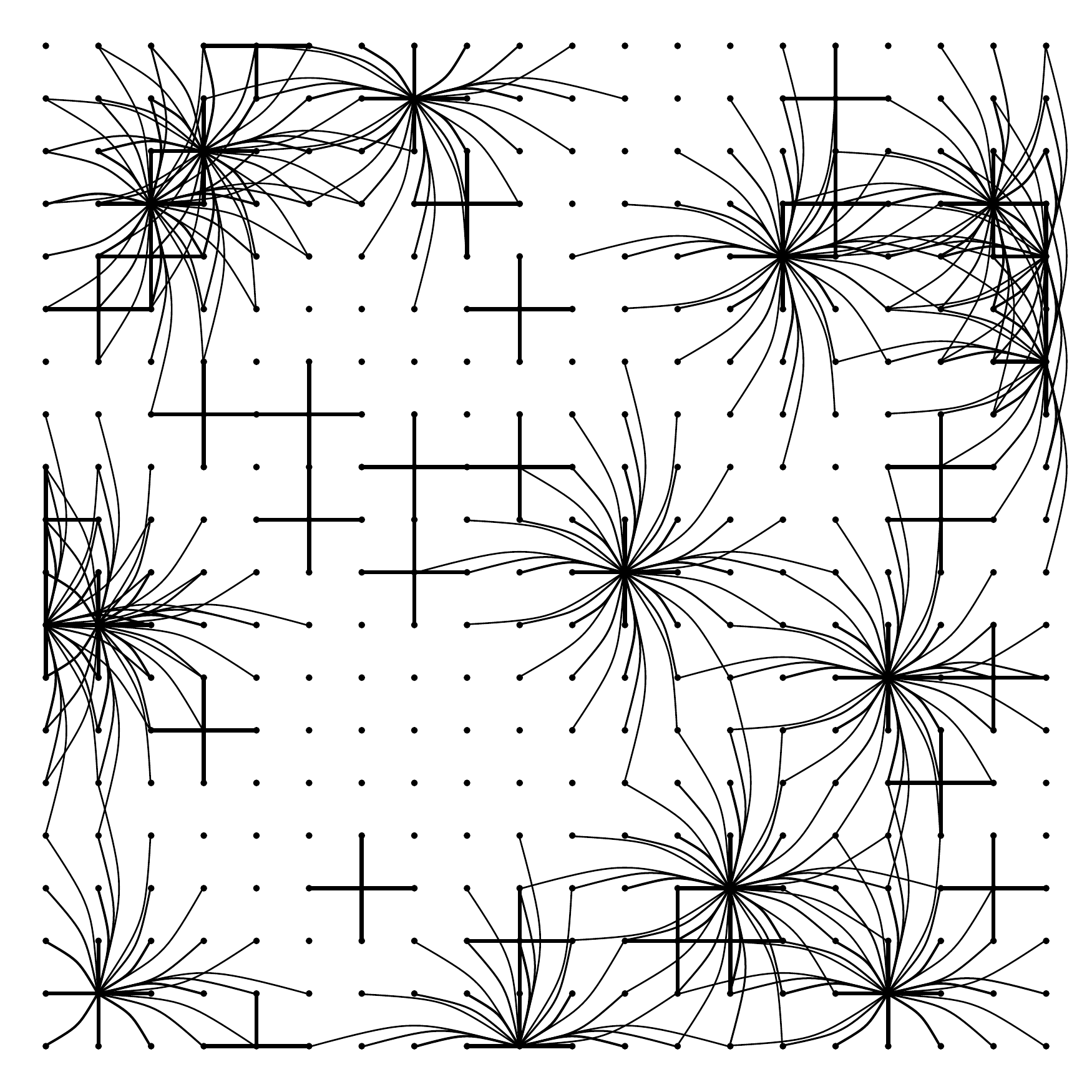}
  \caption{\small\em Two-dimensional lattice graph; seed graph (left
    panel), global bandwidth (center panel), local bandwidths
    (right panel)}
  \label{fig:lattice}
\end{figure}
\begin{figure}
  \centering
  \includegraphics[width=0.25\linewidth]{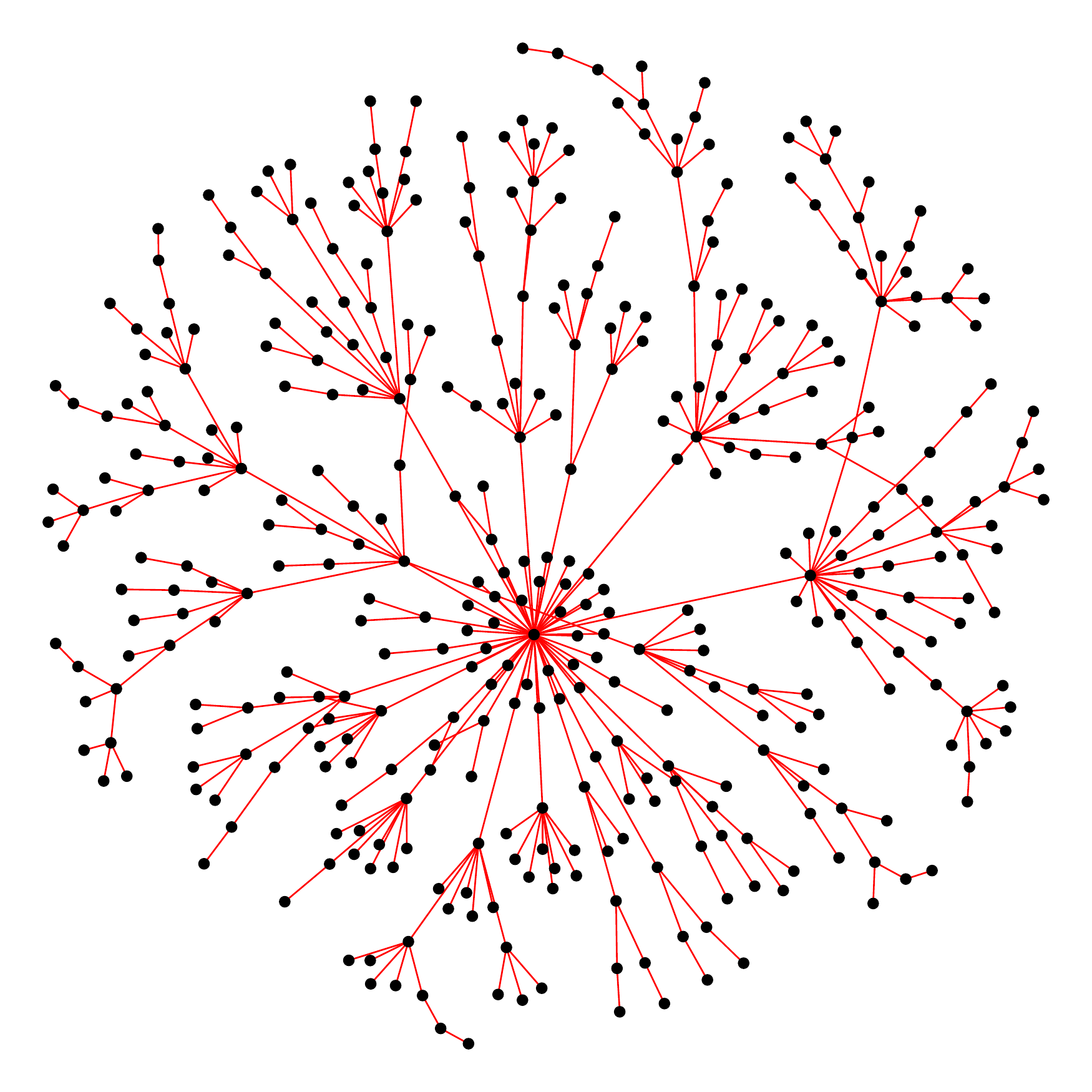}
  \includegraphics[width=0.25\linewidth]{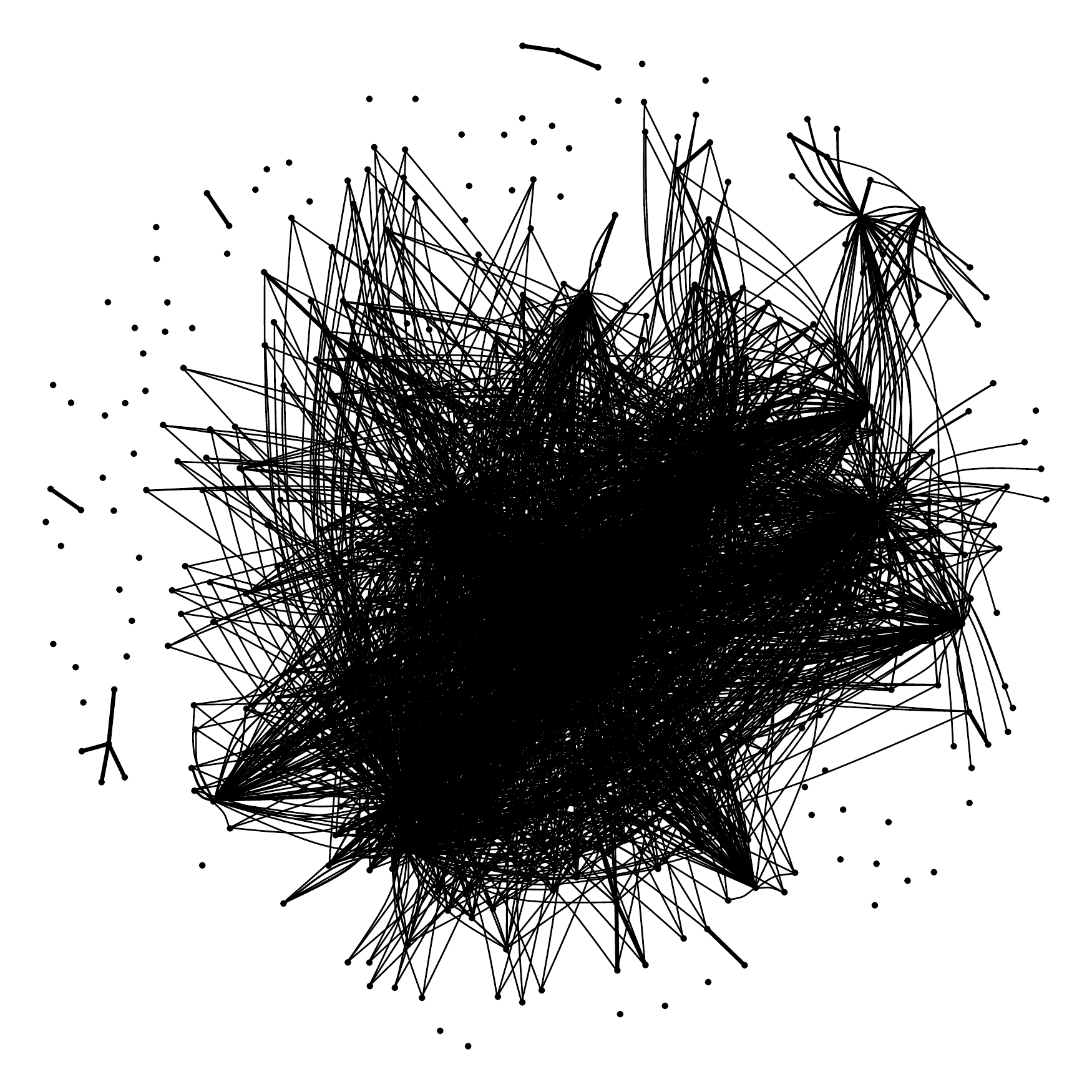}
  \caption{\small\em Scale-free graph; seed graph (left panel), 
local bandwidths (right panel)}
  \label{fig:scale-free}
\end{figure}
The resulting $\bSigma^*$ has roughly 10\%
of elements nonzero in the global case and about 1\% in the local
case.  In the third and fourth scenarios, we take $G$ to be a scale
free network (as generated by the function {\tt sample\_pa} in the
{\tt R}
package {\tt igraph}, \citealt{igraph}) with
$p=400$ nodes.  The same two choices of global and local bandwidths
lead to $\bSigma^*$ having, respectively, about 75\% and 10\% nonzeros
(see Figure \ref{fig:scale-free}).  The marked difference in sparsity
levels is a manifestation of the differing neighborhood structures
between these two graph types.

\subsubsection{Lattice versus scale-free graphs and global versus local bandwidths}
\label{sec:lattice-versus-scale}

We generate samples of size $n=300$ under the four scenarios (and
repeat 200 times).  Tables \ref{tab:cvfrob} and \ref{tab:cvop}
present Frobenius-norm error and operator-norm errors, $\|\hSigma-\bSigma^*\|_F$
and $\|\hSigma-\bSigma^*\|_{op}$, respectively; table \ref{tab:cventropy} presents
the entropy error, $-\log\det(\hSigma\bSigma^{*-1})+\trace(\hSigma\bSigma^{*-1})-p$.  Each method has a single tuning
parameter, and we report the performance of the method
with the value of the tuning parameter chosen by 5-fold cross validation.  In
particular, we randomly partition the training data into 5 folds and, for an estimator $\hSigma_\lambda$, choose
$\hat\lambda_{\text{CV-best}}=\arg\min_\lambda\frac1{5}\sum_{k=1}^5\|\S^{(k)}-\hSigma_\lambda(\S^{(-k)})
\|_F$, where $\S^{(k)}$ is the sample covariance matrix in
which only the $k$th fold is included and $\S^{(-k)}$ is the sample covariance matrix in
which the $k$th fold is omitted.  In the 2-d lattice model, \ggbgb~does best when the true model has global bandwidth,
as we would expect it to.  Interestingly, \ggbvb~does best not just
when the true model has local bandwidths but also in the scale-free
model with a global bandwidth.  This is a reminder of the important role that
the seed graph's structure plays in the statistical properties of the estimators.  Finally, Figure \ref{fig:roc}
shows ROC curves for the four methods. The results are consistent with
the results from the tables.  In particular, \ggbgb~does best
in the first scenario, and \ggbvb~does best in the other scenarios.
Interestingly, {\tt TSP + hiernet} does not appear to be an effective
approach.  Unlike the graph-guided approaches, {\tt TSP + hiernet}
does worse than soft-thresholding in nearly every scenario.  For
example, in Figure
\ref{fig:roc}, we see that  {\tt TSP + hiernet}'s ability to identify
nonzero elements of $\bSigma^*$ is no better than random guessing.
This poor performance is to be expected in this scenario since such a
network cannot be well captured by an ordering of the nodes.  For the
two-dimensional lattice graph scenarios, we find that such an approach
is at least partially successful, which makes sense since the ordering
captures some aspects of the underlying seed graph.

\begin{table}

\caption{\label{tab:cvfrob}A comparison of mean Frobenius error at CV-best (averaged over 200 replicates).}
\centering
\scriptsize
\begin{tabular}[t]{l|l|l|l|l}
\hline
  & \ggbgb & \ggbvb & Soft thresholding & TSP + hierband \\
\hline
2-d lattice (20 by 20), b = 4 & {\bf  7.605} (0.015) & 8.381 (0.031) & 11.634 (0.014) & 11.562 (0.031)\\
\hline
2-d lattice (20 by 20) w/ var-bw & 4.244 (0.002) & {\bf  3.170} (0.008) & 4.347 (0.007) & 4.779 (0.003)\\
\hline
Scale free (p = 400), b = 4 & 6.740 (0.009) & {\bf  6.543} (0.015) & 7.655 (0.004) & 7.727 (0.009)\\
\hline
Scale free (p = 400) w/ var-bw & 3.273 (0.002) & {\bf  2.867} (0.007) & 3.552 (0.005) & 3.661 (0.002)\\
\hline
\end{tabular}
\end{table}

\begin{table}

\caption{\label{tab:cvop}A comparison of mean operator error at CV-best (averaged over 200 replicates).}
\scriptsize
\centering
\begin{tabular}[t]{l|l|l|l|l}
\hline
  & \ggbgb & \ggbvb & Soft thresholding & TSP + hierband\\
\hline
2-d lattice (20 by 20), b = 4 & {\bf 2.185} (0.007) & 2.428 (0.015) & 3.224 (0.010) & 3.149 (0.016)\\
\hline
2-d lattice (20 by 20) w/ var-bw & 0.989 (0.003) & {\bf 0.756} (0.005) & 0.997 (0.004) & 1.024 (0.003)\\
\hline
Scale free (p = 400), b = 4 & 4.350 (0.010) & {\bf 4.110} (0.016) & 4.695 (0.004) & 4.709 (0.009)\\
\hline
Scale free (p = 400) w/ var-bw & 1.160 (0.001) & {\bf 1.008} (0.005) & 1.212 (0.002) & 1.238 (0.001)\\
\hline
\end{tabular}
\end{table}

\begin{table}

\caption{\label{tab:cventropy}A comparison of mean entropy error at CV-best (averaged over 200 replicates).}
\scriptsize
\centering
\begin{tabular}[t]{l|l|l|l|l}
\hline
  & \ggbgb & \ggbvb & Soft thresholding & TSP + hierband \\
\hline
2-d lattice (20 by 20), b = 4 & {\bf  7719} ( 24) & 7819 ( 69) & 10234 ( 34) & 11870 ( 73)\\
\hline
2-d lattice (20 by 20) w/ var-bw & 25823 ( 55) & {\bf 18088} (119) & 25358 ( 89) & 27188 ( 62)\\
\hline
Scale free (p = 400), b = 4 & 72606 (146) & {\bf 68087} (191) & 68654 ( 62) & 68518 (119)\\
\hline
Scale free (p = 400) w/ var-bw & 52047 ( 70) & {\bf 44188} (186) & 53332 ( 86) & 54304 ( 76)\\
\hline
\end{tabular}
\end{table}

\begin{figure}
  \centering
  \includegraphics[width=0.75\linewidth]{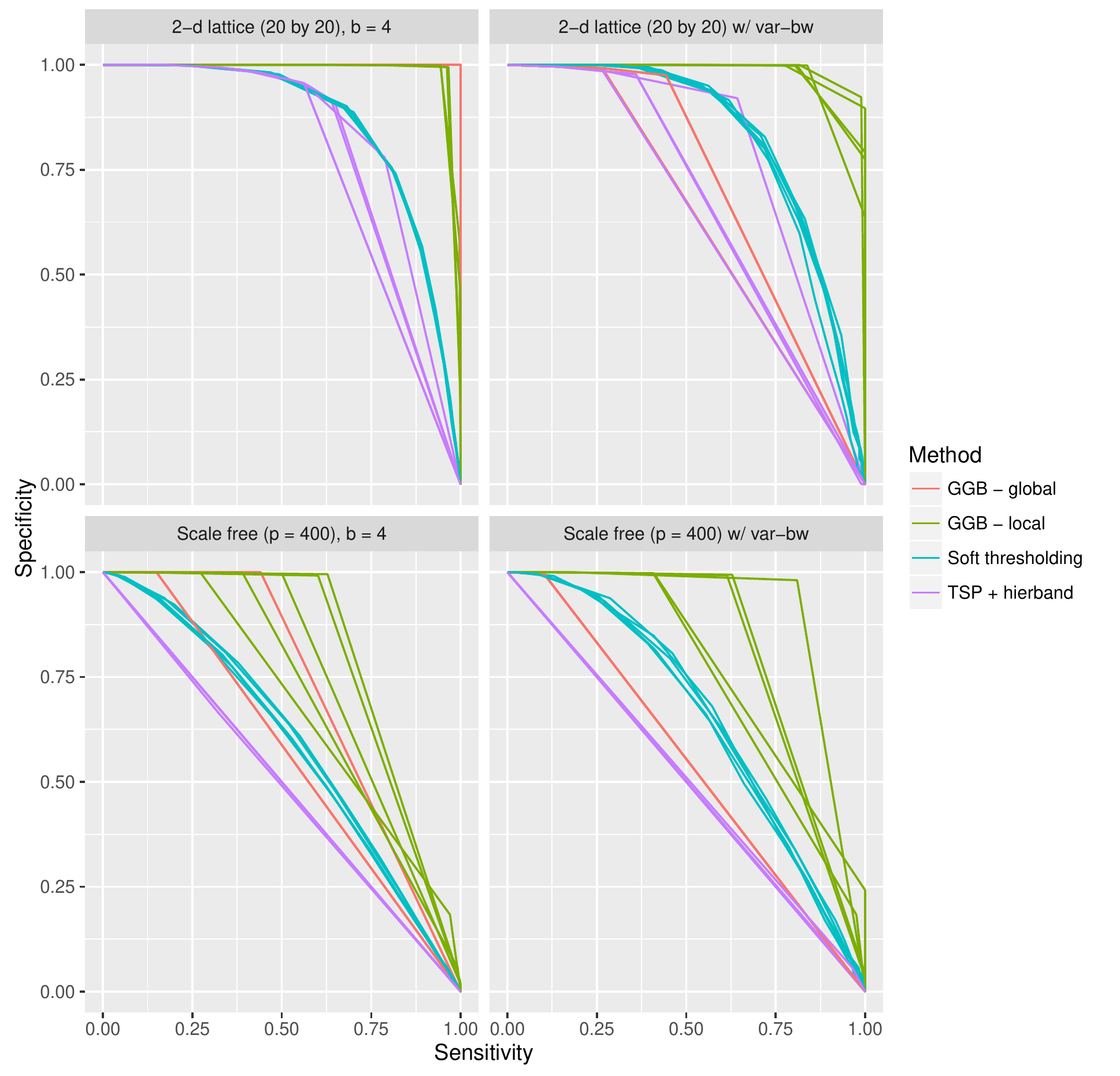}
  \caption{\em ROC curves (formed by varying the tuning parameter for each
    method), show the sensitivity and specificity of identifying the
    nonzero elements of $\bSigma^*$.  For visibility, only five
    of the 200 replicates are shown. (Plot made using {\tt R} package {\tt ggplot2}, \citealt{ggplot2}.)}
  \label{fig:roc}
\end{figure}

\subsubsection{Effect of graph misspecification}
\label{sec:effect-graph-missp}

The previous simulation studies establish the rather obvious fact that when one
knows the true graph $G$ with respect to which $\bSigma^*$ is banded,
one can better estimate $\bSigma^*$ with an estimator that uses $G$.
In practice, however, one might typically use a graph $G'$ that is
(hopefully) close to $G$ but not exactly identical.  This section
seeks to understand the effect of graph misspecification on the
effectiveness of graph-guided banding.  We simulate data as above but
imagine that we only have access to a corrupted version of the
underlying graph $G$.

Let $\pi\in[0,1]$ control the degree of corruption to $G$.  Of course
there are many ways to corrupt a graph, but we choose a scheme that
will preserve the properties of the graph and the sparsity level of the covariance matrix (so as not
to confound graph distortion with other factors such as sparsity
level, number of triangles, etc.).  In particular, we take $G$ to be
an Erdos-Renyi graph with $p=400$ vertices and $760$
edges (this is chosen to match the number of edges in the previous
examples).  We take $\bSigma^*$ to have global graph-guided bandwidth
$B^*=4$ with respect to this graph $G$.
For each edge in $G$, with probability $\pi$ we remove it and then connect two
currently unconnected nodes chosen at random.  We choose $G$ to be an
Erdos-Renyi graph so that the distorted graph $G'$ is itself Erdos-Renyi with the
same number of edges.  By keeping all graph properties fixed, we can be
confident that as we vary $\pi$, any differences in performance of the
methods will be attributable to $G$ and $G'$ having different edges.

Figure \ref{fig:distortion} shows the performance of the three methods
as the probability of reassigning an edge is varied from 0 (no distortion of $G$) to 1 (all edges are
distorted).  As expected, we see both GGB methods deteriorating as
the distortion level of the seed graph $G'$ is increased.  When
$\pi=0$ (i.e., $G=G'$) $\ggbgb$ is best (which makes sense since
$\bSigma^*$ has a global bandwidth).  When approximately $\pi=0.25$ of
the edges are distorted, the best obtainable operator-norm error by
the GGB methods becomes worse than that of soft-thresholding.
Soft-thresholding is, of course, unaffected by graph distortion since
it does not use $G'$ at all.  Surprisingly, in Frobenius-norm error the two GGB
methods remain better than soft-thresholding even at very high levels
of distortion.  A deeper examination of this simulation reveals that
the GGB methods are more biased but less variable than the
soft-thresholding estimates, leading to a smaller Frobenius-norm
error.  We do expect this behavior to be problem-specific and in
general would expect GGB methods with highly distorted graphs to
perform worse than soft-thresholding. 

\begin{figure}
  \centering
  \includegraphics[width=0.45\linewidth]{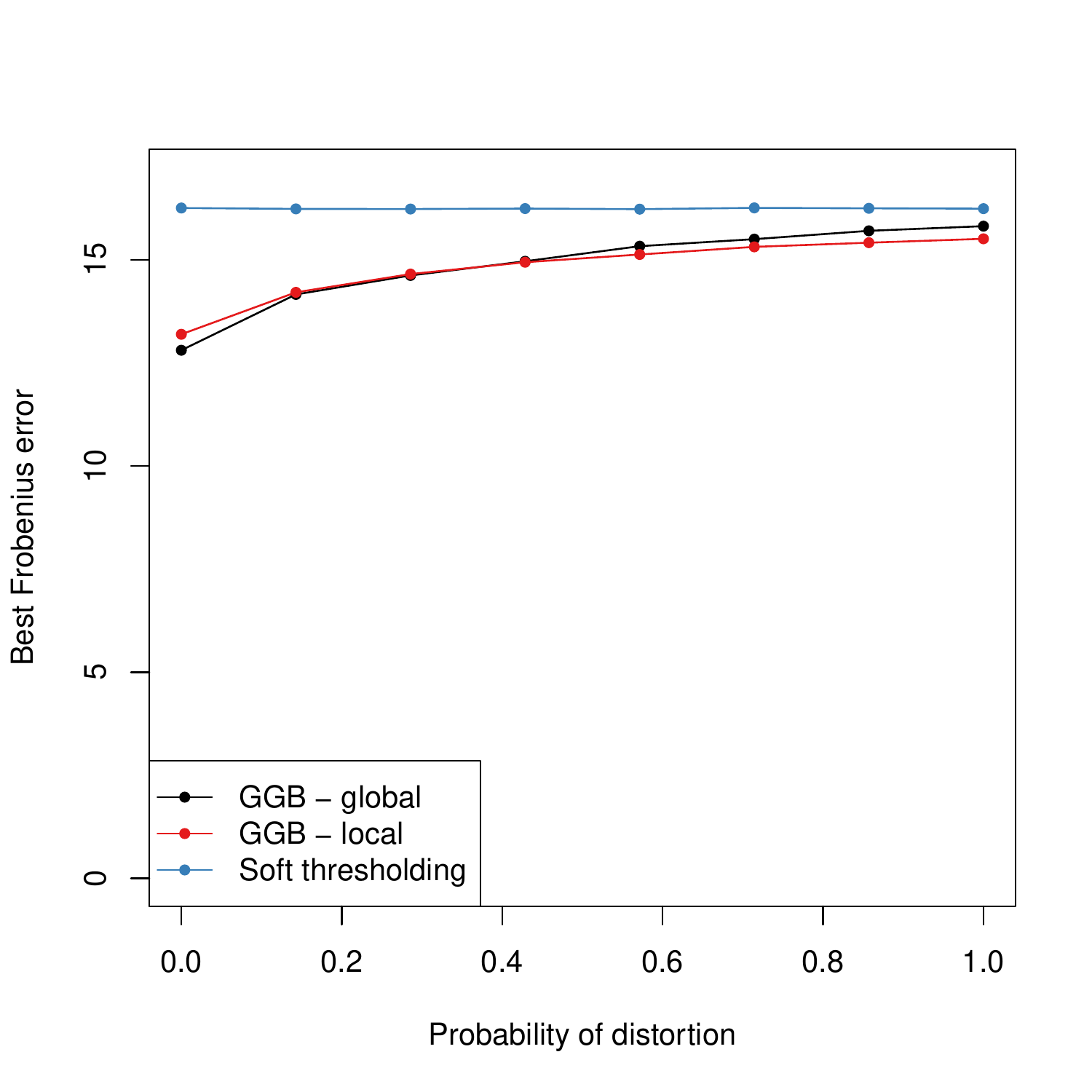}
  \includegraphics[width=0.45\linewidth]{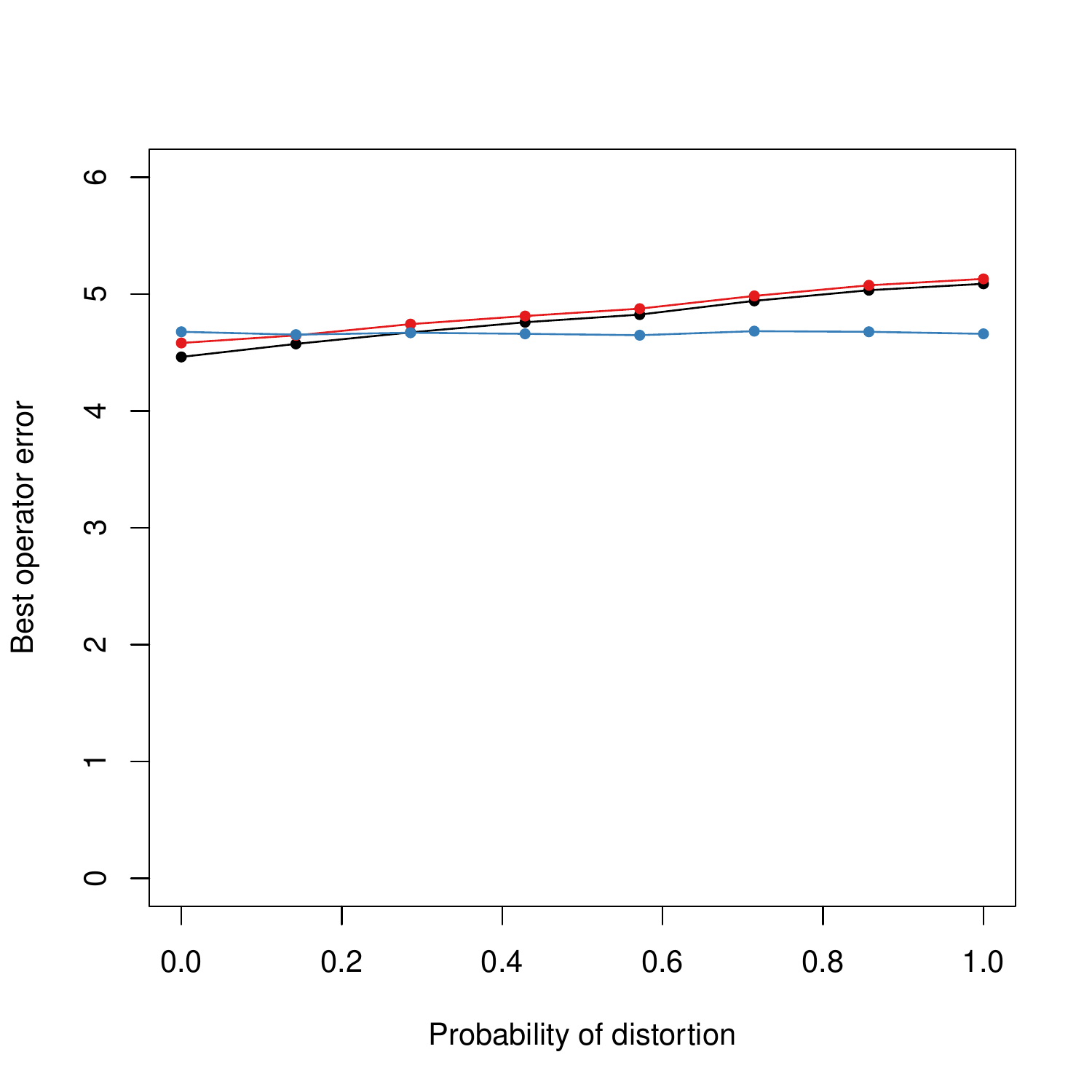}
  \caption{\small\em The effect of graph misspecification on the performance of
    the global and local graph-guided banding methods in terms of
    Frobenius and operator norms. 
}
  \label{fig:distortion}
\end{figure}

\subsection{Application to images of handwritten digits}
\label{sec:image-data}

In this section, we apply our estimators to a dataset of ($16\times16$)-pixel images of handwritten digits
\citep{le1990handwritten,ESL}\footnote{available at \url{http://statweb.stanford.edu/~tibs/ElemStatLearn/datasets/zip.digits/}}.  There are between (roughly) 700 and 1500 images per
digit type.  For each digit type, we take a random $n = 90$ images to form the ``training set'' sample covariance matrix $\S_{\text{train}}$ and use the
remaining data to form the ``test set'' sample covariance matrix
$\S_{\text{test}}$ (our results are based on 20 replications of this process).  A natural seed graph in this situation is the
two-dimensional lattice graph formed by connecting adjacent pixels.
We compare our two GGB estimators to the soft-thresholding estimator.
 We fit each covariance estimator to the
training data and compute its best attainable (over choice of tuning parameter) distance in Frobenius norm to $\S_{\text{test}}$,
$$
\min_{\lambda\ge0}\|\hSigma_\lambda(\S_{\text{train}}) - \S_{\text{test}}\|_F^2.
$$
Table \ref{tab:digits} reports this quantity, averaged over the 20 random
train-test splits of the dataset, for each digit. \ggbvb~has a lower test error than
  soft-thresholding in all 20 replicates for the digits 2-9;
  \ggbgb~has a lower test error than
  soft-thresholding in all 20 replicates for all digits other than 0, 1, and
  6. In fact, there is no digit for which soft-thresholding wins in a
  majority of replicates.
This indicates that exploiting the neighborhood structure
inherent to image data is useful for covariance estimates in this context.

\begin{table}
\scriptsize
\caption{\small\em A comparison of best attainable Frobenius-norm test error
  (averaged over 20 replicates).  Standard errors are reported in
  parentheses.   For all but two of the digits, a GGB method has lower test
  error than soft-thresholding in all 20 out of 20 replications.}
\centering
\begin{tabular}[t]{l|l|l|l}
\hline
  & \ggbgb & \ggbvb & Soft thresholding\\
\hline
Digit 0 & 10.8 (0.1) & 10.8 (0.1) & 11.0 (0.1)\\
\hline
Digit 1 & 2.6 (0.1) & 2.6 (0.1) & 2.6 (0.1)\\
\hline
Digit 2 & 10.9 (0.2) & 10.9 (0.2) & 11.3 (0.1)\\
\hline
Digit 3 & 8.9 (0.1) & 8.9 (0.1) & 9.3 (0.1)\\
\hline
Digit 4 & 9.4 (0.1) & 9.5 (0.1) & 9.9 (0.1)\\
\hline
Digit 5 & 10.7 (0.2) & 10.8 (0.2) & 11.2 (0.2)\\
\hline
Digit 6 & 9.2 (0.2) & 9.1 (0.2) & 9.2 (0.2)\\
\hline
Digit 7 & 7.3 (0.1) & 7.4 (0.1) & 7.5 (0.1)\\
\hline
Digit 8 & 9.3 (0.1) & 9.4 (0.1) & 9.8 (0.1)\\
\hline
Digit 9 & 7.8 (0.2) & 7.9 (0.2) & 8.0 (0.2)\\
\hline
\end{tabular}
\label{tab:digits}
\end{table}

\section{Discussion}
\label{sec:discussion-1}

This paper generalizes the notion of banded covariance estimation to
situations in which there is a known underlying graph structure to the
variables being measured.  The traditional notion of a banded matrix
corresponds to the special case where this graph is a path encoding
the ordering of the variables.  We propose two new ``graph-guided'' estimators, based on
convex optimization problems: $\ggbgb$ enforces a single bandwidth
across all variables while $\ggbvb$ allows each variable to have
its own bandwidth.  There are many interesting avenues of future research.  In this paper, we assume the seed graph is given, and the
challenge is in determining the bandwidth(s).  One could imagine
situations in which the seed graph itself is unavailable (but in which
we do believe that such a graph exists).  In such a setting, can one
identify the smallest seed graph $G$ for which the covariance matrix's
sparsity pattern is banded with respect to $G$?  This generalizes the
problem of discovering an ordering for which one can apply a
traditional known-ordering estimator considered by
\citet{wagaman2009discovering,zhu2009estimating}.  A simple heuristic
strategy might involve forming a minimum spanning tree based on the
complete graph with edge weights given by the sample covariance
matrix.  The graph-guided banding estimators would then have sparsity
patterns with this minimum spanning tree seed graph serving as backbone.  Another future avenue of research is
in applying the graph-guided structure to the inverse covariance matrix.
 Finally, we have seen (e.g., in Theorem~\ref{thm:ggb} and in the
 empirical study) that the
 properties of the seed graph $G$ influence the success of the GGB
 estimators.  More generally understanding how the graph properties of $G$ affect
 the difficulty of the estimation problem itself would be a particularly interesting direction of
 future study.

\appendix
\section{Proof of Lemma \ref{lem:bcd-for-psd} of main paper}
\label{supp-sec:bcd-derivation}

 \begin{lem}
   Consider the problem
   \begin{align*}
     \hSigma=\arg\min_{\bSigma\succeq\delta
       I_p}\left\{\half\|\S-\bSigma\|_F^2+\lambda\Omega(\bSigma)\right\},
   \end{align*}
where $\Omega$ is a closed, proper convex function, and let
$$
\text{Prox}_{\lambda\Omega}({\bf M}) = \arg\min_{\bSigma}\left\{\half\|{\bf M}-\bSigma\|_F^2+\lambda\Omega(\bSigma)\right\}
$$
denote the proximal operator of $\lambda\Omega(\cdot)$ evaluated at a
matrix ${\bf M}$.
Algorithm \ref{alg:bcd-dual}
converges to $\hSigma$.
\label{lem:bcd-for-psd2}
 \end{lem}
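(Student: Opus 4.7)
The plan is to derive a Fenchel-type dual of \eqref{eq:general} in which the two nonsmooth terms (the penalty $\lambda\Omega$ and the indicator of the PSD cone shifted by $\delta I_p$) become separable, and then recognize Algorithm \ref{alg:bcd-dual} as exact two-block coordinate descent on this dual. Set $F=\lambda\Omega$ and $G=I_{\mathcal{K}}$, where $\mathcal{K}=\{\bSigma:\bSigma\succeq \delta I_p\}$, so \eqref{eq:general} reads $\min_\bSigma\{\half\|\S-\bSigma\|_F^2+F(\bSigma)+G(\bSigma)\}$. Using $F(\bSigma)=\sup_\B\{\langle\B,\bSigma\rangle-F^*(\B)\}$ and $G(\bSigma)=\sup_{-\C}\{-\langle\C,\bSigma\rangle-G^*(-\C)\}$ and swapping $\min$ and $\sup$ (justified by Sion/strong duality, since Slater holds trivially: $\bSigma=(\delta+1)I_p$ is strictly feasible and $F,G$ are closed proper convex), the inner $\min$ over $\bSigma$ of a strictly convex quadratic is attained at $\bSigma=\S-\B+\C$. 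Substituting and using $G^*(-\C)=-\delta\trace(\C)+I_{\{\C\succeq 0\}}(\C)$, the dual becomes
\begin{equation*}
\min_{\B,\,\C\succeq 0}\;\Phi(\B,\C):=F^*(\B)+\tfrac12\|\B-\C\|_F^2-\langle\S,\B-\C\rangle-\delta\trace(\C),
\end{equation*}
with the primal-dual relation $\hSigma=\S-\B^\star+\C^\star$ at any optimum.

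Next I would verify that Algorithm \ref{alg:bcd-dual} is exact BCD on $\Phi$ with blocks $\B$ and $\C$. Completing the square shows
\begin{equation*}
\arg\min_\B\Phi(\B,\C)=\arg\min_\B\left\{F^*(\B)+\tfrac12\|\B-(\S+\C)\|_F^2\right\}=\text{Prox}_{F^*}(\S+\C),
\end{equation*}
and Moreau's identity gives $\text{Prox}_{F^*}(\S+\C)=(\S+\C)-\text{Prox}_{\lambda\Omega}(\S+\C)$, which is step 1. Similarly, completing the square in $\C$ yields
\begin{equation*}
\arg\min_{\C\succeq 0}\Phi(\B,\C)=\arg\min_{\C\succeq 0}\tfrac12\|\C-(\B-\S+\delta I_p)\|_F^2=\Pi_{\succeq 0}(\B-\S+\delta I_p),
\end{equation*}
and if $\B-\S=\U\Lambda\U^T$ is the eigendecomposition, $\B-\S+\delta I_p=\U\operatorname{diag}(\Lambda_{ii}+\delta)\U^T$, whose PSD projection is $\U\operatorname{diag}([\Lambda_{ii}+\delta]_+)\U^T$. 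This is step 2.

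Now I would invoke a standard BCD convergence theorem (e.g., Tseng 2001) for the problem $\min\{h(\B,\C)+r_1(\B)+r_2(\C)\}$ with $h(\B,\C)=\tfrac12\|\B-\C\|_F^2-\langle\S,\B-\C\rangle-\delta\trace(\C)$ smooth convex, and $r_1=F^*$, $r_2=I_{\{\C\succeq 0\}}$ closed proper convex. Because $h$ is quadratic and strongly convex in each block with the other fixed, each block subproblem has a unique minimizer, so the hypotheses of the BCD theorem hold and the iterates $\{(\B_t,\C_t)\}$ converge to a minimizer $(\B^\star,\C^\star)$ of $\Phi$. Strong duality then gives $\S-\B^\star+\C^\star=\hSigma$, so $\S-\B_t+\C_t\to\hSigma$, which is exactly the output of the algorithm.

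The step I expect to be the most delicate is verifying that the $(\B,\C)$ iterates converge (not merely subsequentially) to a minimizer of $\Phi$: the BCD theorem requires that each block minimizer be unique and that the sublevel sets be appropriate, both of which follow here from the quadratic coupling $\tfrac12\|\B-\C\|_F^2$ and the coercivity it induces given the linear $-\langle\S,\B-\C\rangle-\delta\trace(\C)$ terms. All other pieces---the dual derivation, the identification of each algorithmic step as an exact block minimizer via Moreau's identity and PSD projection, and the primal recovery via strong duality---are essentially verifications.
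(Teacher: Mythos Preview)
Your proposal is correct and follows essentially the same route as the paper's proof: derive a dual in the variables $(\B,\C)$ with objective $\half\|\B-\C-\S\|_F^2+(\lambda\Omega)^*(\B)-\delta\,\trace(\C)$ subject to $\C\succeq 0$, identify Algorithm~\ref{alg:bcd-dual} as exact two-block coordinate descent on this dual (with the $\B$-update via Moreau's identity and the $\C$-update via eigenvalue thresholding), cite \citet{Tseng01} for convergence, and recover $\hSigma=\S-\hat\B+\hat\C$. The only cosmetic difference is that the paper reaches the dual by introducing a copy $\tSigma=\bSigma$ and taking the Lagrangian, whereas you go through Fenchel conjugates directly; the resulting dual, block updates, and convergence argument are identical.
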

 \begin{proof}
Problem \eqref{eq:general} is equivalent to 
   \begin{align*}
     \min_{\bSigma,\tSigma}\left\{\half\|\S-\bSigma\|_F^2+\lambda\Omega(\tSigma)\st\bSigma=\tSigma,\quad\bSigma\succeq\delta
       I_p \right\},
   \end{align*}
which has Lagrangian 
$$
L(\bSigma,\tSigma;\B,\C)=\half\|\S-\bSigma\|_F^2+\lambda\Omega(\tSigma)+\langle\bSigma-\tSigma,\B\rangle+\langle\delta I_p-\bSigma,\C\rangle.
$$ 
The dual function is given by minimizing over $\bSigma$ and $\tSigma$:
\begin{align*}
  \inf_{\bSigma,\tSigma}L(\bSigma,\tSigma;\B,\C)&=\delta\langle
  I_p,\C\rangle+\inf_{\bSigma}
  \left\{\half\|\S-\bSigma\|_F^2+\langle\bSigma,\B-\C\rangle\right\}+\inf_{\tSigma}\left\{\lambda\Omega(\tSigma)-\langle\tSigma,\B\rangle\right\}\\
&=\delta\langle
  I_p,\C\rangle-\half\|\B-\C-\S\|_F^2+\half\|\S\|_F^2-[\lambda\Omega]^*(\B)
\end{align*}
where $[\lambda\Omega]^*$ is the convex conjugate of $\lambda\Omega$. Also,
$$
\hSigma=\S-\hat{\B}+\hat{\C}.
$$
The dual optimal $(\hat\B,\hat\C)$ is given by solving
$$
\min_{\B,\C}\half\|\B-\C-\S\|_F^2+[\lambda\Omega]^*(\B)-\delta\langle
  I_p,\C\rangle\st \C\succeq 0,
$$
which is a convex problem.  Furthermore, by \citet{Tseng01}, it can be
solved by alternately optimizing over $\B$ (holding $\C$ fixed) and
over $\C$ (holding $\B$ fixed).  The solution to the optimization over
$\B$ is the proximal operator of $[\lambda\Omega]^*$ evaluated at $\C+\S$.  The Moreau
decomposition \citep[see, e.g.,][]{parikh2014proximal} for the closed,
proper convex function $\lambda\Omega$ gives us 
$$
\hat\B(\C)=\text{Prox}_{[\lambda\Omega]^*}({\bf \S+\C})=\S+\C -\text{Prox}_{\lambda\Omega}({\bf \S+\C}).
$$
Optimizing over $\C$ while holding $\B$ fixed requires solving
$$
\hat\C(\B)=\min_{\C}\half\|\B-\C-\S\|_F^2-\delta\langle
  I_p,\C\rangle\st \C\succeq 0,
$$
which in terms of the eigenvalue decomposition of $\B-\S=\U\Lambda\U^T$
becomes
$$
\hat\C(\B)=\min_{\C}\half\|\C-\U\Lambda\U^T\|_F^2-\delta\langle
  I_p,\C\rangle\st \C\succeq 0
$$
or $\hat\C(\B)=\U\hat\D(\B)\U^T$ where
$$
\hat\D(\B)=\min_{\D}\half\|\D-\Lambda\|_F^2-\delta\langle
  I_p,\D\rangle\st \D\succeq 0,
$$
which is solved by 
$$
\hat\D(\B)=\diag([\Lambda_{ii}+\delta]_+).
$$
 \end{proof}

\section{Proof of Theorem \ref{thm:ggb} of main paper (Frobenius
  error)}
\label{supp-app:proof-ggb}
\begin{proof}

Since $\delta=-\infty$ and $M=\diam(G)$, the estimator is more simply
\begin{align}
\hSigma:=\hSigma_\lambda=\arg\min_{\bSigma}\left\{\half\|\S-\bSigma\|_F^2+\lambda P(\bSigma;G)\right\}
\label{eq:pr-pre}
\end{align}
where
$$
P(\bSigma;G)=\min_{\{\V[b]\in\real^{p\times
    p}\}}\left\{\sum_{b=1}^Mw_b\|\V[b]\|_F\st\bSigma^-=\sum_{b=1}^M\V[b]_{g_b}\right\}.
$$
In what follows, whenever we write a sum with subscript $b$, it is
understood to run along $b=1,\ldots,M$; this also holds for the $\sV$ notation, which is
short for $\sV_{b=1}^M$.

We begin by observing that \eqref{eq:pr-pre} is a strongly convex problem in
$\bSigma$, and thus there is a unique minimizer $\hSigma$, although the
minimizing set of $\hV[b]$'s may not be unique.  Also, since
$\diag(\bSigma)$ does not appear in the penalty, we see that
$\diag(\hSigma)=\diag(\S)$.  Thus, we can write \eqref{eq:pr-pre} as
\begin{align}
\hSigma^-=\arg\min_{\bSigma}\left\{\half\|\S^--\bSigma\|_F^2+\lambda P(\bSigma;G)\right\}.
\label{eq:pr0}
\end{align}
We can write \eqref{eq:pr0} in terms of the $\V[b]$'s:
\begin{align}
\shV\in\arg\min_{\V[b]}\left\{\half\|\S-\sum_{b}\V\|_F^2+\lambda\sum_{b}w_{b}\|\V\|_F\st\V_{g_{b}^c}=0~\forall
b\right\}.
  \label{eq:pr}
\end{align}
Note that we have replaced $\S^-$ by $\S$ (for notational convenience)
since $g_b$ does not include diagonal elements.
The lemma below provides necessary and sufficient conditions for
$\shV$ to be a solution to \eqref{eq:pr}.
\begin{lem}
Writing $\hSigma=\sum_{b}\hV$, $\shV$ is a solution to \eqref{eq:pr}
iff. for each $b\in\{1,\ldots,M\}$, $\hV_{g_{b}^c}=0$ and 
$$
[\S-\hSigma]_{g_{b}}=\frac{\lambda w_{b} \hV_{g_{b}}}{\|\hV\|_F}\quad\text{ if }\hV\neq0
$$
and 
$
\|[\S-\hSigma]_{g_{b}}\|_F\le\lambda w_{b}
$
otherwise.
\label{lem:kkt}
\end{lem}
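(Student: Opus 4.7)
The plan is to derive necessary and sufficient KKT conditions for the convex program \eqref{eq:pr}. The objective is the sum of a differentiable convex quadratic and a finite sum of nonsmooth but convex Frobenius-norm terms, and the feasible set is the linear subspace $\{\sV : \V_{g_b^c} = 0 \text{ for all } b\}$. Convexity of the objective and feasible set ensures that $\shV$ is optimal if and only if $0$ lies in the subdifferential of the objective at $\shV$ relative to the feasible set, which I will write out blockwise in the $\V$ variables.

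To carry this out, I would first take a partial subgradient with respect to each $\V$ separately, treating the other $\V[c]$ as fixed. The quadratic term contributes the gradient $-(\S - \sum_c \V[c]) = -(\S - \hSigma)$ in each block, the penalty term contributes $\lambda w_b \partial \|\hV\|_F$, and the linear support constraint $\V_{g_b^c}=0$ contributes the normal cone consisting of all matrices supported on $g_b^c$. The optimality condition $0 \in \partial_{\V} [\text{objective}]$ therefore says there exist subgradients $\G[b]\in \partial\|\hV\|_F$ and matrices $\N[b]$ supported on $g_b^c$ such that $-(\S-\hSigma) + \lambda w_b \G[b] + \N[b] = 0$. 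Since $\N[b]$ vanishes on $g_b$, this is equivalent to the restricted equation $[\S-\hSigma]_{g_b} = \lambda w_b \G[b]_{g_b}$ (on the complement $g_b^c$ the equation just defines $\N[b]$ and places no constraint).

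Next, I would invoke the standard characterization of the subdifferential of the Frobenius norm: $\partial \|\V\|_F = \{\V/\|\V\|_F\}$ if $\V\neq 0$, and $\partial \|\V\|_F = \{\Z : \|\Z\|_F \le 1\}$ if $\V = 0$. Substituting into the restricted equation and using that $\hV$ (and hence any subgradient aligned with it) is already supported on $g_b$, the nonzero case yields exactly $[\S-\hSigma]_{g_b} = \lambda w_b \hV_{g_b}/\|\hV\|_F$, while in the zero case the condition becomes the existence of some $\Z$ with $\|\Z\|_F\le 1$ and $\Z_{g_b} = [\S-\hSigma]_{g_b}/(\lambda w_b)$, which is feasible iff $\|[\S-\hSigma]_{g_b}\|_F \le \lambda w_b$.

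The argument is essentially routine convex subdifferential calculus; the one point that requires a moment of care is the interaction between the support constraint and the subgradient equation, namely recognizing that the presence of the normal-cone term is exactly what makes the KKT equation restrict to indices in $g_b$ (rather than being imposed on all of $[p]^2$). Once that is handled, the stated conditions follow, and both directions (necessity and sufficiency) are immediate from convexity of \eqref{eq:pr}.
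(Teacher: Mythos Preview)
Your proposal is correct and follows the same approach as the paper, which simply asserts that \eqref{eq:pr} is a non-overlapping group lasso and that the stated conditions are its well-known KKT conditions. You have written out in full the subdifferential calculus that the paper leaves implicit; in particular, your treatment of the support constraint via the normal cone (so that the stationarity equation need only hold on $g_b$) is exactly the right way to make the derivation precise.
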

\begin{proof}
Problem \eqref{eq:pr} is equivalent to a non-overlapping group lasso
problem, and this lemma provides the KKT conditions of this problem,
which are well-known.
\end{proof}

\begin{cor}
Let $\shV$ be a solution to \eqref{eq:pr}.  Then,
$$
\|[\S-\hSigma]_{g_{b}}\|_F< \lambda w_{b}\implies \hV=0.
$$

\label{cor:strict}
\end{cor}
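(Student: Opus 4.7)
The plan is to prove the corollary by contradiction using the KKT conditions from Lemma \ref{lem:kkt}. Suppose, for contradiction, that $\hV \neq 0$ for some $b$ for which $\|[\S-\hSigma]_{g_b}\|_F < \lambda w_b$. Then by the first case of Lemma \ref{lem:kkt},
$$
[\S-\hSigma]_{g_b} = \frac{\lambda w_b \hV_{g_b}}{\|\hV\|_F}.
$$

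Next, I would take Frobenius norms of both sides. Since $\hV_{g_b^c}=0$ by feasibility in \eqref{eq:pr}, we have $\|\hV_{g_b}\|_F = \|\hV\|_F$, so the right-hand side has Frobenius norm exactly $\lambda w_b$. This gives $\|[\S-\hSigma]_{g_b}\|_F = \lambda w_b$, contradicting the hypothesis $\|[\S-\hSigma]_{g_b}\|_F < \lambda w_b$.

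There is no real obstacle here; the result is a direct contrapositive of the first case of Lemma \ref{lem:kkt}, and the only small point to verify is that the support constraint $\hV_{g_b^c}=0$ is used to identify $\|\hV_{g_b}\|_F$ with $\|\hV\|_F$ so that the subdifferential equation yields an equality of Frobenius norms rather than an inequality.
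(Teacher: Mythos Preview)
Your proof is correct and is exactly the intended argument: the paper states the result as a corollary of Lemma~\ref{lem:kkt} without further proof, and your contrapositive reading of the first KKT case (together with the support constraint $\hV_{g_b^c}=0$) is precisely how the implication follows.
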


Suppose 
$\supp(\bSigma^*)=g_{B^*}$.
If we can construct a solution $\shV$ to \eqref{eq:pr} that has
$$
\max_{b>B^*} w_{b}^{-1}\|[\S-\hSigma]_{g_{b}}\|_F< \lambda,
$$
then, by Corollary \ref{cor:strict}, we know that $\hV=0$ for all
$b>B^*$ or 
$$
\{b:\hV\neq0\} \subseteq\{1,\ldots,B^*\}.
$$
In fact, the corollary implies that this holds for {\em all}
solutions, not just for our constructed $\shV$.

Consider the restricted problem
$$
\sbV_{b=1}^{B^*}\in\arg\min_{\V}\left\{\half\|\S-\sum_{b=1}^{B^*}\V\|_F^2+\lambda\sum_{b=1}^{B^*}
  w_{b}\|\V\|_F\st\V_{g_{b}^c}=0\text{ for }1\le b\le B^*\right\},
$$
and write $\bbSigma=\sum_{b=1}^{B^*}\bV$.
This restricted problem is of an identical form to \eqref{eq:pr} but
with $M$ replaced by $B^*$.  Thus, Lemma \ref{lem:kkt} applies if we
replace $b=1,\ldots,M$
with $b=1,\ldots,B^*$:  For each $b\le B^*$,
$$
[\S-\bbSigma]_{g_{b}}=\frac{\lambda w_{b} \bV_{g_{b}}}{\|\bV\|_F}\quad\text{ if }\bV\neq0
$$
and 
$
\|[\S-\bbSigma]_{g_{b}}\|_F\le\lambda w_{b}
$
otherwise.

Now, for $b>B^*$, let $\bV=0$.  We show that $\sbV_{b=1}^M$
is a solution to \eqref{eq:pr}.  Clearly, $\sum_{b=1}^M\bV=\sum_{b=1}^{B^*}\bV=\bbSigma$, so by Lemma \ref{lem:kkt} it
simply remains to show that
$$
\|[\S-\bbSigma]_{g_{b}}\|_F\le\lambda w_{b}\text{ for }b>B^*.
$$
For $b>B^*$: 
\begin{align*}
  \|(\S-\bbSigma)_{g_{b}}\|_F^2&=\|(\S-\bbSigma)_{g_{B^*}}\|_F^2+\|\S_{s_{(B^*+1):b}}\|_F^2\\
&\le\lambda^2w_{B^*}^2+\|(\S-\bSigma^*)_{s_{(B^*+1):b}}\|_F^2\\
&\le\lambda^2w_{B^*}^2+|s_{(B^*+1):b}|\cdot\|\S-\bSigma^*\|_\infty^2.
\end{align*}
Now, on the event 
\begin{align}
  \calA_\lambda=\{\|\S-\bSigma^*\|_\infty<\lambda\},\label{eq:Alam}
\end{align}
we have
$$
  \|(\S-\bbSigma)_{g_{b}}\|_F^2<\lambda^2 (w_{B^*}^2+|s_{(B^*+1):b}|)=\lambda^2 w_b^2,
$$
using that $w_b^2=|g_b|=|g_{B^*}|+|s_{(B^*+1):b}|$.
This establishes, by Lemma \ref{lem:kkt}, that $\sbV$ is a solution to
\eqref{eq:pr} and, by Corollary \ref{cor:strict} that
$\{b:\hV\neq0\}\subseteq\{1,\ldots,B^*\}$ for all solutions $\shV$ to
\eqref{eq:pr}.  Given the support constraints of each $\hV$, it
follows that the overall $\hSigma$ has bandwidth no larger than $B^*$.

\begin{prop}
If $w_b=|g_b|^{1/2}$, $\delta=-\infty$, and $M=\diam(G)$, then
$B(\hSigma_\lambda)\le B^*$
on $\calA_\lambda$.
\end{prop}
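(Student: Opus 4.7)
The plan is to show that for every minimizer $\shV$ of the latent reformulation \eqref{eq:pr}, the blocks $\hV[b]$ with $b>B^*$ must all vanish, which will force $\supp(\hSigma_\lambda)\subseteq g_{B^*}$ and hence $B(\hSigma_\lambda)\le B^*$. The key tool is Corollary \ref{cor:strict}: it tells us that whenever the residual satisfies $\|[\S-\hSigma]_{g_b}\|_F < \lambda w_b$ strictly, then $\hV[b]=0$ in \emph{every} solution, not just in any particular one. So it is enough to exhibit a single solution that enjoys this strict inequality simultaneously for all $b>B^*$.

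To build such a solution I would solve the restricted problem in which $\V[b]$ is required to be zero for $b>B^*$, call the minimizers $\sbV_{b=1}^{B^*}$, pad with zeros for $b>B^*$, and set $\bbSigma=\sum_{b=1}^{B^*}\bV[b]$. By construction, the KKT conditions of Lemma \ref{lem:kkt} hold automatically for $b\le B^*$, since the restricted problem has the same structural form as \eqref{eq:pr} with $M$ replaced by $B^*$. Only the dual-feasibility bound $\|(\S-\bbSigma)_{g_b}\|_F\le\lambda w_b$ for $b>B^*$ still needs to be checked; once it is, $\sbV_{b=1}^M$ is confirmed as a legitimate solution and Corollary \ref{cor:strict} immediately delivers the conclusion.

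The core calculation is then a Pythagorean split: for $b>B^*$, partition $g_b$ into $g_{B^*}$ and $s_{(B^*+1):b}$, so that $\|(\S-\bbSigma)_{g_b}\|_F^2$ decomposes into a $g_{B^*}$-piece and an $s_{(B^*+1):b}$-piece. The first piece is bounded by $\lambda^2 w_{B^*}^2$ directly from the restricted-problem KKT conditions. The second piece equals $\|\S_{s_{(B^*+1):b}}\|_F^2$, and because $\bSigma^*$ is $B^*$-banded with respect to $G$ it vanishes off $g_{B^*}$, so this piece equals $\|(\S-\bSigma^*)_{s_{(B^*+1):b}}\|_F^2\le|s_{(B^*+1):b}|\cdot\|\S-\bSigma^*\|_\infty^2$. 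On the event $\calA_\lambda$ the entrywise bound is strict, yielding the overall strict estimate $\lambda^2(w_{B^*}^2+|s_{(B^*+1):b}|)=\lambda^2 |g_b|=\lambda^2 w_b^2$.

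The only subtle ingredient — and the one place the problem's structural choices really enter — is that final equality, which relies on the telescoping identity $|g_b|=|g_{B^*}|+|s_{(B^*+1):b}|$ together with the weight calibration $w_b^2=|g_b|$: the extra slack contributed by the ``new'' variables in $s_{(B^*+1):b}$ exactly absorbs into the growth of $w_b^2$. Any faster or slower weight growth would break this balance, so in effect the proof is a verification that the weights $w_b=|g_b|^{1/2}$ are chosen so as to enable precisely this argument. I do not anticipate any real obstacle beyond making sure this bookkeeping is carried out cleanly.
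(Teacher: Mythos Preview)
Your proposal is correct and follows essentially the same approach as the paper's proof: solve the restricted problem over $b\le B^*$, pad with zeros, verify the KKT conditions of Lemma~\ref{lem:kkt} for all $b$ via the Pythagorean split of $g_b$ into $g_{B^*}$ and $s_{(B^*+1):b}$, and then invoke Corollary~\ref{cor:strict} with the strict inequality on $\calA_\lambda$ to conclude that $\hV[b]=0$ for $b>B^*$ in every solution. You have also correctly isolated the role of the weight choice $w_b=|g_b|^{1/2}$ in making the telescoping identity $|g_b|=|g_{B^*}|+|s_{(B^*+1):b}|$ close the argument.
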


\noindent Having established the above proposition about the bandwidth, it is
straightforward to prove a result about the estimation error. On $\calA_\lambda$,
$$
\|(\hSigma-\bSigma^*)_{g_{B^*}}\|_F\le\|(\S-\bSigma^*)_{g_{B^*}}\|_F+\|(\S-\hSigma)_{g_{B^*}}\|_F\le\|\S-\bSigma^*\|_\infty\cdot
|g_{B^*}|^{1/2}+\lambda w_{B^*}<2\lambda w_{B^*}
$$
and
$\hSigma_{g_{B^*}}=0$.
 Thus,
 \begin{align}
\|\hSigma-\bSigma^*\|_F^2=\|\diag(\S)-\diag(\bSigma^*)\|_F^2+\|(\hSigma-\bSigma^*)_{g_{B^*}}\|_F^2<\lambda^2(p+4w_{B^*}^2).\label{eq:withdiag}
 \end{align}
\begin{prop}
 If $w_b=|g_b|^{1/2}$, $\delta=-\infty$, and $M=\diam(G)$, then
$$
\|\hSigma-\bSigma^*\|_F^2\le\lambda^2(p+4|g_{B(\bSigma^*)}|)
$$
on $\calA_\lambda$.
\end{prop}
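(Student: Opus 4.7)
The plan is to split $\|\hSigma-\bSigma^*\|_F^2$ into three pieces by partitioning the matrix index set into the diagonal, the off-diagonal pairs in $g_{B^*}$, and the remaining off-diagonal pairs in $g_{B^*}^c$. Because the penalty $P(\bSigma;G)$ ignores diagonal entries, every minimizer satisfies $\diag(\hSigma)=\diag(\S)$, so the diagonal contribution is $\|\diag(\S)-\diag(\bSigma^*)\|_F^2$, which on $\calA_\lambda$ is strictly less than $p\lambda^2$ via the $\ell_\infty$ bound $\|\S-\bSigma^*\|_\infty<\lambda$. The off-diagonal contribution on $g_{B^*}^c$ is then immediate from the bandwidth proposition already established: on $\calA_\lambda$ we have $B(\hSigma)\le B^*$, so $\hSigma_{jk}=0$ for all off-diagonal $jk\in g_{B^*}^c$, and since $\supp(\bSigma^*)\subseteq g_{B^*}\cup\{\text{diag}\}$, this term contributes zero.

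The main step is to bound the middle piece $\|(\hSigma-\bSigma^*)_{g_{B^*}}\|_F$. I would insert $\S$ and apply the triangle inequality,
$$
\|(\hSigma-\bSigma^*)_{g_{B^*}}\|_F\le\|(\S-\bSigma^*)_{g_{B^*}}\|_F+\|(\S-\hSigma)_{g_{B^*}}\|_F.
$$
The first term is at most $|g_{B^*}|^{1/2}\|\S-\bSigma^*\|_\infty$, which on $\calA_\lambda$ is strictly less than $\lambda w_{B^*}$ by the weight choice $w_{B^*}=|g_{B^*}|^{1/2}$. The second term is handled by the KKT characterization in Lemma \ref{lem:kkt}: whether the optimal latent piece $\hV[B^*]$ is zero or nonzero, the stationarity condition on the group $g_{B^*}$ forces $\|(\S-\hSigma)_{g_{B^*}}\|_F\le\lambda w_{B^*}$. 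Summing yields $\|(\hSigma-\bSigma^*)_{g_{B^*}}\|_F<2\lambda w_{B^*}$, so the squared contribution of this piece is strictly less than $4\lambda^2|g_{B^*}|$.

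Combining the three pieces gives $\|\hSigma-\bSigma^*\|_F^2<\lambda^2(p+4|g_{B^*}|)$ on $\calA_\lambda$, which is exactly the stated bound since $|g_{B^*}|=|g_{B(\bSigma^*)}|$. The only step that requires care is the KKT-based control of $\|(\S-\hSigma)_{g_{B^*}}\|_F$: one must handle uniformly the equality case in which $\hV[B^*]\neq 0$ and the inequality case in which $\hV[B^*]=0$, but both are already packaged in Lemma \ref{lem:kkt}, so this is bookkeeping rather than a genuine obstacle. Everything else is a direct application of the event $\calA_\lambda$ and the sparsity pattern inherited from the bandwidth proposition.
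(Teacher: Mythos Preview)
Your proposal is correct and follows essentially the same route as the paper: both arguments use $\diag(\hSigma)=\diag(\S)$ for the diagonal piece, the bandwidth proposition to kill the $g_{B^*}^c$ contribution, and then the triangle inequality $\|(\hSigma-\bSigma^*)_{g_{B^*}}\|_F\le\|(\S-\bSigma^*)_{g_{B^*}}\|_F+\|(\S-\hSigma)_{g_{B^*}}\|_F$ combined with the KKT bound $\|(\S-\hSigma)_{g_{B^*}}\|_F\le\lambda w_{B^*}$ from Lemma~\ref{lem:kkt}. The only cosmetic difference is that the paper writes the decomposition as two terms (diagonal and $g_{B^*}$) after noting the vanishing outside $g_{B^*}$, whereas you make the three-way split explicit.
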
  

Under Assumptions \ref{ass:distribution} and
\ref{ass:scaling} of the main paper,
there exists a constant $c>0$ for which 
\begin{align}
  P(\calA_\lambda)\ge1-c/\max\{p,n\}
\label{eq:Alam-prob}
\end{align}
 if we take
$\lambda=x\sqrt{\log\max\{p,n\}/n}$ with $x$ sufficiently large.  This
follows directly from bound (A.4.1) of Theorem A.1 of \citet{Bien15supp}.  Theorem
\ref{thm:ggb} of the main paper follows.
\end{proof}

\section{Proof of Theorem \ref{thm:bandwidth-recovery} of main
  paper (bandwidth recovery)}
\label{supp-app:bandwidth-recovery}

We prove a result that is a bit more general than Theorem
\ref{thm:bandwidth-recovery} of the main paper. 
\begin{theorem}
Under the minimum signal condition 
$$
\min_{b\le B^*}\|\bSigma^*_{g_b}\|_F/|g_b|^{1/2} > 2\lambda,
$$
and the conditions of Theorem \ref{thm:ggb} of the main paper,
$B(\hSigma)=B(\bSigma^*)$ with probability at least $1-c/\max\{p,n\}$.
\label{thm:supp-bandwidth}
\end{theorem}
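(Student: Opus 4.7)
The first half of the proof is already in hand: the analysis of Theorem~\ref{thm:ggb} shows that, on the concentration event $\calA_\lambda=\{\|\S-\bSigma^*\|_\infty<\lambda\}$ (which has probability at least $1-c/\max\{p,n\}$ for $x$ sufficiently large in $\lambda=x\sqrt{\log\max\{p,n\}/n}$), every solution of the latent group lasso has $\hV[b]=0$ for $b>B^*$, giving $B(\hSigma_\lambda)\le B^*$. My plan is therefore to prove the matching lower bound $B(\hSigma_\lambda)\ge B^*$ on the same event.

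I would argue by contradiction. Assume $B(\hSigma)=B_0<B^*$, so $\supp(\hSigma)\subseteq g_{B_0}$. A short optimality argument first shows $\hV[b]=0$ for every $b>B_0$: any portion of such a $\hV[b]$ lying outside $g_{B_0}$ must cancel against other $\hV[b']$ with $b'>B_0$ and can be zeroed without changing $\hSigma$ (strictly decreasing the penalty), while any portion inside $g_{B_0}$ can be absorbed into $\hV[B_0]$, whose weight $w_{B_0}=|g_{B_0}|^{1/2}$ is strictly smaller than $w_b$. Now invoke Lemma~\ref{lem:kkt}: at $b=B_0$ (where $\hV[B_0]\ne 0$, since otherwise $\hSigma=0$ which the signal condition already excludes) we have the equality $\|[\S-\hSigma]_{g_{B_0}}\|_F=\lambda w_{B_0}$, while at $b=B^*$ we have $\|[\S-\hSigma]_{g_{B^*}}\|_F\le\lambda w_{B^*}$. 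Subtracting the squared norms (using $g_{B_0}\subseteq g_{B^*}$) yields
\begin{equation*}
\|[\S-\hSigma]_{g_{B^*}\setminus g_{B_0}}\|_F^2\le\lambda^2(w_{B^*}^2-w_{B_0}^2)=\lambda^2(|g_{B^*}|-|g_{B_0}|).
\end{equation*}
On $\calA_\lambda$ we also have $\|(\S-\bSigma^*)_{g_{B^*}\setminus g_{B_0}}\|_F<\lambda\sqrt{|g_{B^*}|-|g_{B_0}|}$, and since $\hSigma$ vanishes on $g_{B^*}\setminus g_{B_0}$, the triangle inequality produces
\begin{equation*}
\|\bSigma^*_{g_{B^*}\setminus g_{B_0}}\|_F<2\lambda\sqrt{|g_{B^*}|-|g_{B_0}|}.
\end{equation*}

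The hard part is converting this KKT-driven upper bound on the \emph{missing} signal into a contradiction with the minimum signal hypothesis. My plan is to combine the Pythagorean split $\|\bSigma^*_{g_{B^*}}\|_F^2=\|\bSigma^*_{g_{B_0}}\|_F^2+\|\bSigma^*_{g_{B^*}\setminus g_{B_0}}\|_F^2$ with the hypothesis $\|\bSigma^*_{g_b}\|_F>2\lambda|g_b|^{1/2}$ applied jointly at levels $b=B^*$ and $b=B_0$: since the RMS of $\bSigma^*$ on every nested $g_b$ exceeds $2\lambda$, the fraction of signal that can be stowed inside the smaller group $g_{B_0}$ cannot be large enough to keep the RMS on the shell $g_{B^*}\setminus g_{B_0}$ below $2\lambda$, contradicting the displayed strict inequality. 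If needed, I would sharpen this by repeating the KKT subtraction at each pair $(B_0,b)$ with $b\in(B_0,B^*]$, obtaining $\|\bSigma^*_{g_b\setminus g_{B_0}}\|_F<2\lambda\sqrt{|g_b|-|g_{B_0}|}$ and chaining these bounds with the signal inequalities along the ladder of groups, which is where the nested structure $g_{B_0}\subseteq g_{B_0+1}\subseteq\cdots\subseteq g_{B^*}$ of the seed graph does the essential work. The probability calibration $1-c/\max\{p,n\}$ is inherited directly from $\calA_\lambda$ via the same sub-Gaussian concentration used in the proof of Theorem~\ref{thm:ggb}.
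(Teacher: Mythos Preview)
The paper's route is far more direct than your contradiction argument. Rather than assuming $B(\hSigma)=B_0<B^*$ and deriving shell inequalities, it works forward: for every $b\le B^*$ the KKT conditions of Lemma~\ref{lem:kkt} give $\|(\S-\hSigma)_{g_b}\|_F\le\lambda w_b$ (this holds whether or not $\hV$ vanishes), and on $\calA_\lambda$ one also has $\|(\S-\bSigma^*)_{g_b}\|_F<\lambda w_b$. A single triangle inequality then yields $\|(\hSigma-\bSigma^*)_{g_b}\|_F<2\lambda w_b$, and the reverse triangle inequality against the signal hypothesis gives $\|\hSigma_{g_b}\|_F>0$ for each $b\le B^*$, which the paper asserts is sufficient. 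None of your decomposition argument for $\hV=0$ when $b>B_0$, nor the subtraction of squared KKT norms, appears.

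More importantly, the step you correctly flag as ``the hard part'' does not close. The shell bound $\|\bSigma^*_{g_{B^*}\setminus g_{B_0}}\|_F<2\lambda\sqrt{|g_{B^*}|-|g_{B_0}|}$ is \emph{not} contradicted by the hypothesis $\|\bSigma^*_{g_b}\|_F>2\lambda|g_b|^{1/2}$ for all $b\le B^*$: the hypothesis controls RMS over nested groups but says nothing about RMS over shells. If most of the Frobenius mass of $\bSigma^*$ sits inside $g_{B_0}$, every group-level RMS can exceed $2\lambda$ while the shell RMS is arbitrarily small (e.g., on the three-node path with $|g_1|=4$, $|g_2|=6$, take $\bSigma^*_{12}=\bSigma^*_{23}=10\lambda$ and $\bSigma^*_{13}=\lambda/2$). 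Your Pythagorean split merely reproduces the hypothesis at level $B_0$: combining $\|\bSigma^*_{g_{B^*}}\|_F^2>4\lambda^2|g_{B^*}|$ with the shell bound yields $\|\bSigma^*_{g_{B_0}}\|_F^2>4\lambda^2|g_{B_0}|$, which was already assumed; the ladder sharpening is the same non-contradiction at other levels. Your shell argument \emph{would} produce a contradiction under the stronger elementwise hypothesis of Corollary~\ref{cor:elem-beta-min}, since then every entry in $g_{B^*}\setminus g_{B_0}$ exceeds $2\lambda$ in absolute value, forcing $\|\bSigma^*_{g_{B^*}\setminus g_{B_0}}\|_F>2\lambda\sqrt{|g_{B^*}|-|g_{B_0}|}$; but under the group-level hypothesis stated here the plan, as written, cannot be completed.
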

\begin{proof}
On the event $\calA_\lambda$ defined in \eqref{eq:Alam}, for $b\le B^*$,
\begin{align*}
 \|(\hSigma-\bSigma^*)_{g_b}\|_F&\le\|(\S-\bSigma^*)_{g_b}\|_F+\|(\S-\hSigma)_{g_b}\|_F\\
&<\lambda |g_b|^{1/2}+\|(\S-\hSigma)_{g_b}\|_F\\
&\le\lambda |g_b|^{1/2}+\lambda w_b=2\lambda w_b.
\end{align*}
Thus,
$$
\max_{b\le B^*}w_b^{-1}\|(\hSigma-\bSigma^*)_{g_b}\|_F<2\lambda.
$$
 
We already know that $B(\hSigma)\le B^*$, so it suffices to
show that
$$
\min_{b\le B^*}\|\hSigma_{g_b}\|_F/w_b>0,
$$  
which follows since
\begin{align*}
  \|\hSigma_{g_b}\|_F/w_b\ge
  \|\bSigma^*_{g_b}\|_F/w_b-\|(\hSigma-\bSigma^*)_{g_b}\|_F/w_b>  2\lambda-2\lambda=0.
\end{align*}
The result then follows from \eqref{eq:Alam-prob}.
\end{proof}
\begin{cor}
  Under the minimum signal condition 
$$
\min_{jk\in g_{B^*}}|\bSigma^*_{jk}| > 2\lambda,
$$
and the conditions of Theorem \ref{thm:ggb} of the main paper, $B(\hSigma)=B(\bSigma^*)$
with probability at least $1-c/\max\{p,n\}$.
\label{cor:elem-beta-min}
\end{cor}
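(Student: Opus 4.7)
The plan is to deduce Corollary~\ref{cor:elem-beta-min} directly from Theorem~\ref{thm:supp-bandwidth} by verifying that the element-wise minimum signal condition implies the group-wise minimum signal condition required there. Since Theorem~\ref{thm:supp-bandwidth} already provides the desired high-probability bandwidth recovery under the hypothesis $\min_{b\le B^*}\|\bSigma^*_{g_b}\|_F/|g_b|^{1/2} > 2\lambda$, no further probabilistic analysis is needed.

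Concretely, I would proceed as follows. First, fix any $b \le B^*$. Because of the nested structure $g_1 \subseteq g_2 \subseteq \cdots \subseteq g_{B^*}$, every pair $jk \in g_b$ satisfies $jk \in g_{B^*}$, so the element-wise hypothesis gives $|\bSigma^*_{jk}| > 2\lambda$ for all such pairs. Then I would compute
\begin{align*}
\|\bSigma^*_{g_b}\|_F^2 \;=\; \sum_{jk \in g_b} (\bSigma^*_{jk})^2 \;>\; (2\lambda)^2 |g_b|,
\end{align*}
so that $\|\bSigma^*_{g_b}\|_F / |g_b|^{1/2} > 2\lambda$. Taking the minimum over $b \le B^*$ yields the hypothesis of Theorem~\ref{thm:supp-bandwidth}.

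Finally, I would invoke Theorem~\ref{thm:supp-bandwidth} with the same choice of $\lambda = x\sqrt{\log \max\{p,n\}/n}$ and the other parameters inherited from Theorem~\ref{thm:ggb}, to conclude that $B(\hSigma) = B(\bSigma^*)$ on the high-probability event $\calA_\lambda$, which has probability at least $1 - c/\max\{p,n\}$ by \eqref{eq:Alam-prob}.

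There is no real obstacle here: the corollary is essentially an observation that the element-wise bound is stronger than (and uniformly implies) the averaged Frobenius bound used in Theorem~\ref{thm:supp-bandwidth}. The only thing to be careful about is correctly using the nested containment $g_b \subseteq g_{B^*}$ for $b \le B^*$, which is immediate from the definition $g_b = \{jk : 1\le d_G(j,k)\le b\}$.
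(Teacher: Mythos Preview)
Your proposal is correct and follows essentially the same approach as the paper: verify that the element-wise lower bound $|\bSigma^*_{jk}|>2\lambda$ for $jk\in g_{B^*}$ implies $\|\bSigma^*_{g_b}\|_F^2/|g_b|>4\lambda^2$ for each $b\le B^*$, then invoke Theorem~\ref{thm:supp-bandwidth}. The only difference is that you make the containment $g_b\subseteq g_{B^*}$ explicit, which the paper leaves implicit.
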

\begin{proof}
  For any $b\le B^*$,
$$
\|\bSigma^*_{g_b}\|_F^2/|g_b|=
\sum_{jk\in g_{b}}[\bSigma^*_{jk}]^2/|g_b|>\sum_{jk\in g_{b}}4\lambda^2/|g_b|=4\lambda^2.
$$
Thus, the conditions of Theorem \ref{thm:supp-bandwidth} hold.
\end{proof}

\section{Proof of Theorem \ref{thm:slowrate} (Frobenius error of local bandwidths estimator)}
\label{supp-sec:slowrate}

\begin{proof}
  
Let $\tSigma:=\tSigma_\lambda$ be the minimizer of
Equation \eqref{eq:ggb-vb} of the main paper.
By definition,
$$
\half\|\S-\tSigma\|_F^2+\lambda \tilde P(\tSigma;G)\le \half\|\S-\bSigma^*\|_F^2+\lambda \tilde P(\bSigma^*;G).
$$
Since
$$
\|\S-\tSigma\|_F^2=\|\S-\bSigma^*\|_F^2+\|\tSigma-\bSigma^*\|_F^2-2\langle\S-\bSigma^*,\tSigma-\bSigma^*\rangle,
$$
we have
\begin{align}
  \half\|\tSigma-\bSigma^*\|_F^2+\lambda\tilde P(\tSigma;G)\le
  \langle\S-\bSigma^*,\tSigma-\bSigma^*\rangle+\lambda\tilde
  P(\bSigma^*;G).\label{eq:innerprod}
\end{align}
By Lemma 3 of \citet{Obozinski11}, the dual norm is $\tilde P^*({\bf B};G)=\max_{jb}w_{jb}^{-1}\|{\bf
B}_{g_{jb}}\|_F
$, and so
\begin{align}
\langle\S-\bSigma^*,\tSigma-\bSigma^*\rangle\le\lambda^2p+\tilde
  P^*(\S-\bSigma^*;G)\tilde P(\tSigma-\bSigma^*;G).
  \label{eq:innerwithdiag}
\end{align}
Define the index sets $\I=\{jb: 1\le b\le M_j\}$ and $\I^*=\{jb: 1\le b\le B_j^*\}$.

Observe that with $w_{jb}=|g_{jb}|^{1/2}$
$$
\max_{jb\in\I}w_{jb}^{-1}\|{\bf
B}_{g_{jb}}\|_F\le\|{\bf B}\|_\infty.
$$
It follows that
$$
\langle\S-\bSigma^*,\tSigma-\bSigma^*\rangle<\lambda^2p+\lambda\tilde P(\tSigma-\bSigma^*;G),
$$
as long as $\|\S-\bSigma^*\|_\infty< \lambda$, that is on the event
$\calA_\lambda$, as defined in \eqref{eq:Alam}.
Thus, combining this with \eqref{eq:innerprod}, 
$$
\half\|\tSigma-\bSigma^*\|_F^2+\lambda\tilde P(\tSigma;G)\le \left[\lambda^2p+\lambda\tilde P(\tSigma-\bSigma^*;G)\right]+\lambda\tilde P(\bSigma^*;G).
$$
holds on  $\calA_\lambda$.  Using the triangle inequality ($\tilde P$
is a norm),
$$
\half\|\tSigma-\bSigma^*\|_F^2\le\lambda^2p+ \lambda\left[\tilde P(\tSigma-\bSigma^*;G)-\tilde P(\tSigma;G)+\tilde P(\bSigma^*;G)\right]\le\lambda^2p+2\lambda\tilde P(\bSigma^*;G).
$$
By definition of $\tilde P$,
\begin{align}
  \tilde P(\bSigma^*;G)\le \sum_{jb\in\I}w_{jb}\|\V[jb]\|_F\label{eq:tildeP}
\end{align}
for any choice of $\sV[jb]$ such that $\V[jb]_{g_{jb}^c}=0$ and
$\bSigma^*=\sum_{jb}\V[jb]$.  

Consider taking
$\V[j'b']_{jk}=0$, for all $jk\in\{1,\ldots,p\}^2$ and $j'b'\in\I$, unless
\begin{enumerate}[(i)]
\item $d_G(j,k)\le\min\{B^*_j,B^*_k\}$, in which case we take
$
\V[jB_j^*]_{jk}=\V[jB_j^*]_{kj}=\V[kB_k^*]_{jk}=\V[kB_k^*]_{kj}=\half\bSigma_{jk}^*.
$
\item $B_j^*<d_G(j,k)\le B_k^*$, in which case we take
$
\V[kB_k^*]_{jk}=\V[kB_k^*]_{kj}=\bSigma_{jk}^*.
$
\item $B_k^*<d_G(j,k)\le B_j^*$, in which case we take
$
\V[jB_j^*]_{jk}=\V[jB_j^*]_{kj}=\bSigma_{jk}^*.
$
\end{enumerate}

We verify that this choice of $\sV[jb]$ satisfies the necessary
constraints.  Clearly, $\V[jb]=0$ for $b\neq B_j^*$.  If $jk\not\in
g_{jB_j^*}$, then $d_G(j,k)>B_j^*$, which excludes cases (i) and (ii),
the only two cases in which $\V[jB_j^*]_{jk}$ (and $\V[jB_j^*]_{kj}$)
could be nonzero.  Thus, $\V[jB_j^*]_{g_{jB_j^*}^c}=0$.  

We next verify that $\sum_{j'b'\in\I}\V[j'b']=\bSigma^*$.  In case
(i),
$$
\sum_{j'b'\in\I}[\V[j'b']]_{jk}=\V[jB_j^*]_{jk}+\V[kB_k^*]_{jk}=\bSigma_{jk}^*.
$$
In case (ii), 
$$
\sum_{j'b'\in\I}[\V[j'b']]_{jk}=\V[kB_k^*]_{jk}=\bSigma_{jk}^*.
$$
In case (iii), 
$$
\sum_{j'b'\in\I}[\V[j'b']]_{jk}=\V[jB_j^*]_{jk}=\bSigma_{jk}^*.
$$
Now, we have that
$|\V[jB_j^*]_{jk}|\le |\bSigma^*_{jk}|$, which implies that
$\|\V[jB_j^*]\|_F\le \|\bSigma^*_{g_{jB_j^*}}\|_F$. Thus,

$$
\sum_{jb\in\I}w_{jb}\|\V[jb]\|_F=\sum_{j}w_{jB_j^*}\|\V[jB_j^*]\|_F\le
\sum_{j}w_{jB_j^*}\|\bSigma^*_{g_{jB_j^*}}\|_F\le
\|\bSigma^*\|_\infty\sum_{j}|g_{jB_j^*}|.
$$
Now, each $jk\in S^*$ appears only in $g_{jB^*_j}$ or $g_{kB^*_k}$ (or
potentially both).  Thus, by \eqref{eq:tildeP}, $\sum_{j}|g_{jB^*_j}|\le 2|S^*|$.  It follows
that $\tilde P(\bSigma^*;G)\le2\|\bSigma^*\|_\infty|S^*|$, and so (on $\calA_\lambda$)
$$
\|\tSigma-\bSigma^*\|_F^2\le \lambda^2p +4\|\bSigma^*\|_\infty|S^*|\lambda.
$$
 Assumptions \ref{ass:distribution} and
\ref{ass:scaling} of the main paper are used (as in Appendix
\ref{supp-app:proof-ggb}) to show that $\calA_\lambda$ holds with the desired probability.
\end{proof}

\section{Proofs about GGB on sample correlation matrix}
\label{sec:proof-theorem-refggb}

We begin by establishing a concentration inequality for the
correlation matrix that will be used in the proofs of both of these theorems.  This result is similar to Lemma A.2 of
\citet{liu2017tiger} except that we assume sub-Gaussian rather than
Gaussian data.  It is also similar to Lemma D.5 of
\citet{sun2017graphical} except that we assume that $\S$ has been
centered by the sample mean.

\begin{lem}
 Suppose that Assumptions \ref{ass:distribution},
 \ref{ass:scaling}, and \ref{ass:bounded} hold.  There exists a constant $x$ (i.e., a value not depending on $n$ or $p$) such that for all $n$ and $p$ for which $\log\{p,n\}/n<1/x^2$,
\begin{align*}
\P(\|\hcorr - \corr\|_\infty> x\sqrt{\log(\max\{p,n\})/n})
&\le 16/ \max\{p,n\}.
\end{align*}
\label{lem:cor}
\end{lem}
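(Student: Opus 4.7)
The plan is to reduce the correlation-matrix concentration inequality to the covariance concentration inequality already cited from \citet{Bien15supp}. Let $\calE_\lambda = \{\|\S - \bSigma^*\|_\infty \le \lambda\}$ for $\lambda = x_0 \sqrt{\log(\max\{p,n\})/n}$ with $x_0$ chosen large enough that (A.4.1) of \citet{Bien15supp} gives $\P(\calE_\lambda^c) \le c/\max\{p,n\}$. On $\calE_\lambda$ we in particular have $|\S_{jj} - \bSigma^*_{jj}| \le \lambda$ for every $j$. By Assumption \ref{ass:bounded} we have $\bSigma^*_{jj} \ge \kappa$, so provided $\log(\max\{p,n\})/n$ is small enough (in particular $\lambda \le \kappa/2$, which is where the hypothesis $\log(\max\{p,n\})/n < 1/x^2$ is used), we get $\S_{jj} \in [\kappa/2, \bar\kappa + \kappa/2]$ on $\calE_\lambda$.

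The next step is a deterministic decomposition of $\hcorr_{jk} - \corr_{jk}$ on the event $\calE_\lambda$. Write
\begin{align*}
\hcorr_{jk} - \corr_{jk}
&= \frac{\S_{jk} - \bSigma^*_{jk}}{\hpro}
 + \bSigma^*_{jk}\!\left(\frac{1}{\hpro} - \frac{1}{\pro}\right).
\end{align*}
The first term is bounded by $\lambda/\sqrt{\kappa^2/4} = 2\lambda/\kappa$ on $\calE_\lambda$. For the second term, use $|\bSigma^*_{jk}| \le (\bSigma^*_{jj}\bSigma^*_{kk})^{1/2} \le \bar\kappa$ (Cauchy--Schwarz together with Assumption \ref{ass:distribution}) and apply the elementary inequality $|1/\sqrt{ab} - 1/\sqrt{cd}| \le C(|a-c| + |b-d|)$, valid whenever $a,b,c,d$ are all bounded below and above by fixed positive constants, with the constants determined by $\kappa$ and $\bar\kappa$. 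This yields a bound of the form $C' \lambda$ on the second term as well. Combining, on $\calE_\lambda$,
\begin{align*}
\|\hcorr - \corr\|_\infty \le C''\lambda,
\end{align*}
for a constant $C''$ depending only on $\kappa$ and $\bar\kappa$. Choosing $x = C'' x_0$ and absorbing constants gives the claim with probability $\ge 1 - 16/\max\{p,n\}$ (the extra slack in the constant $16$ versus $c$ accommodates the small additional events used in any explicit derivation of the diagonal concentration).

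The main obstacle is making the perturbation step in the second term clean. The cleanest route is to factor
\begin{align*}
\frac{1}{\hpro} - \frac{1}{\pro}
&= \frac{\pro - \hpro}{\hpro\cdot\pro}
\end{align*}
and then use $|\sqrt{ab} - \sqrt{cd}| \le \tfrac{1}{2\sqrt{\min\{ab,cd\}}}(|a-c|\max\{b,d\} + |b-d|\max\{a,c\})$, which is the only slightly fiddly piece but follows from the mean value theorem applied to $(x,y) \mapsto \sqrt{xy}$ with arguments confined to $[\kappa/2, \bar\kappa + \kappa/2]$. Everything else is bookkeeping: the sub-Gaussian tail bound for $\|\S - \bSigma^*\|_\infty$ is already quoted in the paper, and the restriction on $\log(\max\{p,n\})/n$ exists precisely to keep $\S_{jj}$ bounded away from zero uniformly in $j$.
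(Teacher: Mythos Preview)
Your argument is correct, but it differs from the paper's route. The paper does not use your additive decomposition
\[
\hcorr_{jk}-\corr_{jk}=\frac{\S_{jk}-\bSigma^*_{jk}}{\hpro}+\bSigma^*_{jk}\Bigl(\tfrac{1}{\hpro}-\tfrac{1}{\pro}\Bigr).
\]
Instead it works at the level of events: defining $\cA_j=\{|\S_{jj}-\bSigma^*_{jj}|\le\bSigma^*_{jj}/2\}$, it shows that on $\cA_j\cap\cA_k$ one has the multiplicative control $\tfrac12\pro\le\hpro\le\tfrac32\pro$, and from this derives the set inclusion
\[
\{\|\hcorr-\corr\|_\infty>t\}\subseteq\{\|\S-\bSigma^*\|_\infty>(\kappa/2)\min\{t,1\}\},
\]
after which the covariance concentration bound is applied directly. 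Your approach instead fixes the single event $\calE_\lambda$, establishes deterministic two-sided bounds on the $\S_{jj}$, and then does a standard perturbation analysis of each term. The paper's route is a bit slicker in that it avoids the mean-value-theorem bookkeeping for $|\sqrt{ab}-\sqrt{cd}|$ and yields a tidier constant (depending on $\kappa$ only through the factor $\kappa/2$, with $\bar\kappa$ entering only via the covariance tail bound); your route is more elementary and transparent, and makes the dependence on both $\kappa$ and $\bar\kappa$ explicit. Either way the reduction to $\|\S-\bSigma^*\|_\infty$ is the essential content, and both arguments land there.
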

\begin{proof}
The following proof is based on the proof of Lemma A.2 of
\citet{liu2017tiger}.  Certain details have been fleshed out and some
modifications have been made to accommodate the differing assumptions.

Let $\cA_j=\{|\S_{jj}-\bSigma^*_{jj}|\le
\bSigma^*_{jj}/2\}$.
Observing that $\cA_j=\{1/2\le\S_{jj}/\bSigma^*_{jj}\le 3/2\}$, it
follows that
\begin{align}
\cA_j\cap \cA_k&\implies (1/2)^2\le
\frac{\S_{jj}\S_{kk}}{\bSigma^*_{jj}\bSigma^*_{kk}}\le(3/2)^2\nonumber\\
&\implies 1/2\le
\left(\frac{\S_{jj}\S_{kk}}{\bSigma^*_{jj}\bSigma^*_{kk}}\right)^{1/2}\le
3/2\nonumber\\
&\implies (1/2)\pro_{jk}\le\hpro_{jk}\le
(3/2) \pro_{jk}\nonumber\\
&\implies 
|\hpro_{jk}-\pro_{jk}|\le
  (1/2)\pro_{jk}.
\label{eq:pro}
\end{align}
Now,
\begin{align*}
 \{|\hcorr_{jk} - \corr_{jk}|>t\} &=  \{|\S_{jk} - \bSigma^*_{jk}|>t\hpro\}\\
&\subseteq  \{|\S_{jk} - \bSigma^*_{jk}|>t\hpro\}\cap(\cA_j\cap\cA_k)\bigcup(\cA_j\cap\cA_k)^c.
\end{align*}
By \eqref{eq:pro} and the triangle inequality,
$$
\{|\S_{jk} - \bSigma^*_{jk}|>t\hpro\}\cap(\cA_i\cap\cA_j)\subseteq\{|\S_{jk} - \bSigma^*_{jk}|>(t/2)\pro\},
$$
so
$$
 \{|\hcorr_{jk} - \corr_{jk}|>t\}\subseteq  \{|\S_{jk} - \bSigma^*_{jk}|>(t/2)\pro\}\bigcup(\cA_j\cap\cA_k)^c.
$$
Now,
\begin{align*}
  \{\max_{j\neq k}|\hcorr_{jk} - \corr_{jk}|>t\}&\subseteq \bigcup_{j\neq
    k}\left[\{|\S_{jk} - \bSigma^*_{jk}|>(t/2)\pro\}\cup\cA_j^c\cup\cA_k^c\right]\\
&= \bigcup_{j\neq
    k}\{|\S_{jk} - \bSigma^*_{jk}|>(t/2)\pro\}\cup \bigcup_{j}\cA_j^c\\
&\subseteq \bigcup_{j\neq
    k}\{|\S_{jk} - \bSigma^*_{jk}|>(t/2)\kappa\}\cup \bigcup_{j}\cA_j^c\\
&\subseteq\{\max_{j\neq k}|\S_{jk} - \bSigma^*_{jk}|>(t/2)\kappa\}\cup \{\max_j |\S_{jj} - \bSigma^*_{jj}|>(1/2)\kappa\}\\
&\subseteq\{\|\S - \bSigma^*\|_\infty>(\kappa/2)\min\{t,1\}\}.
\end{align*}
Since $\hcorr_{jj}=1=\corr_{jj}$, $\max_{j\neq k}|\hcorr_{jk} -
\corr_{jk}|=\|\hcorr-\corr\|_\infty$ and thus we have established that
$$
\P(\|\hcorr-\corr\|_\infty>t)\le \P(\|\S - \bSigma^*\|_\infty>(\kappa/2)\min\{t,1\}).
$$

By Lemma A.1 of \citet{Bien15supp} (with a typo corrected), there exist constants $c_1,c_2$
such that
$$
\P(\|\S-\bSigma^*\|_\infty > \epsilon)\le 2p^2e^{-c_2n(\epsilon/\bar\kappa)^2}+8pe^{-c_1n \epsilon/\bar\kappa}
$$
for any $0<\epsilon<2\bar\kappa$.  This constraint on $\epsilon$ implies
that $(\epsilon/\bar\kappa)^2< 2 (\epsilon/\bar\kappa)$ and so
$$
2p^2e^{-c_2n(\epsilon/\bar\kappa)^2}+8pe^{-c_1n \epsilon/\bar\kappa}<2p^2e^{-c_2n(\epsilon/\bar\kappa)^2}+8pe^{-c_1n(\epsilon/\bar\kappa)^2/2}.
$$
Taking $c_3=\min\{c_2,c_1/2\}$,
\begin{align*}
\P(\|\S-\bSigma^*\|_\infty > \epsilon)&\le (2p^2+8p)e^{-c_3n(\epsilon/\bar\kappa)^2}\le 16p^2e^{-c_3n(\epsilon/\bar\kappa)^2}.
\end{align*}
Observing that $\epsilon=(\kappa/2)\min\{t,1\}<\bar\kappa<2\bar\kappa$,
we have that
$$
\P(\|\hcorr-\corr\|_\infty>t)\le 16p^2e^{-c_3n\kappa^2/(2\bar\kappa)^2\min\{t^2,1\}}.
$$
We take $t=x\sqrt{\log(\max\{p,n\})/n}$ for $x=12(\bar\kappa/\kappa)^2/c_3$. By Assumption \ref{ass:bounded},
$\kappa$ and $\bar\kappa$ are treated as constants.
If $x^2\log(\max\{p,n\})/n$ then $\min\{t^2,1\}=t^2$ and 
$$
\P(\|\hcorr-\corr\|_\infty>t)\le 16p^2e^{-c_3\kappa^2/(2\bar\kappa)^2x^2 \log(\max\{p,n\})}\le 16\max\{p,n\}^{2-c_3\kappa^2/(2\bar\kappa)^2x^2}.
$$
Thus, we just need that $2-c_3\kappa^2/(2\bar\kappa)^2x^2\le -1 $,
which can be ensured by choosing
$$
x= 12(\bar\kappa/\kappa)^2/c_3.
$$
\end{proof}

We define the event 
\begin{align}
\mathcal B_\lambda=\{\|\hcorr-\corr\|_\infty<\lambda\}\label{eq:Blam}.
\end{align}
Lemma \ref{lem:cor} establishes that there exist constants $x$ and $c$
for which choosing $\lambda= x\sqrt{\frac{\log(\max\{p,n\})}{n}}$
ensures that
$\P(\mathcal B_\lambda)\ge1-c/\max\{n,p\}$.

\subsection{Proof of Theorem \ref{thm:ggb-corr} (Frobenius error of
  global GGB on correlation matrix)}

Let $\hSigma_\lambda(\hcorr)$ be the minimizer of
Equation \eqref{eq:ggb} of the main paper where $\S$ is
replaced by $\hcorr$.

\begin{proof}[Proof of Theorem \ref{thm:ggb-corr}]
  The proof is nearly identical to that of Theorem \ref{thm:ggb}
  given in Section \ref{supp-app:bandwidth-recovery}.  We take $\hSigma$ to
  be $\hSigma_\lambda(\hcorr)$, we replace $\S$ with $\hcorr$, and we
  replace $\calA_\lambda$ with $\mathcal B_\lambda$.  Instead of
  \eqref{eq:withdiag}, we get
$$\|\hcorr-\corr\|_F^2=\|(\hSigma-\bSigma^*)_{g_{B^*}}\|_F^2<4\lambda^2w_{B^*}^2,$$
    which leads to the stated result.
    \end{proof}

\subsection{Proof of Theorem \ref{thm:slowrate-corr} (Frobenius error of
  local GGB on correlation matrix)}

Let $\tSigma_\lambda(\hcorr)$ be the minimizer of
Equation \eqref{eq:ggb-vb} of the main paper where $\S$ is
replaced by $\hcorr$.

\begin{proof}[Proof of Theorem \ref{thm:slowrate-corr}]

The proof is nearly identical to
that of Theorem \ref{thm:slowrate} given in Section
\ref{supp-sec:slowrate}.  We take $\tSigma$ to be
$\tSigma_\lambda(\hcorr)$ and replace $\S$ with $\hcorr$.
Equation \ref{eq:innerwithdiag} becomes instead
$$\langle\hcorr-\corr,\tSigma-\corr\rangle\le\tilde
  P^*(\hcorr-\corr;G)\tilde P(\tSigma-\corr;G).
$$ 
Following the same steps as in that proof, we note that on the set
$\mathcal B_\lambda$, defined in \eqref{eq:Blam},
$$
\|\tSigma-\bSigma^*\|_F^2\le 4\|\bSigma^*\|_\infty|S^*|\lambda.
$$
\end{proof}

\section{Traveling salesman problem}
\label{supp-sec:tsp}

Let $C$ be a $p\times p$ matrix of costs, with $C_{jk}$ being the cost
of traveling from node $j$ to node $k$.  The traveling salesman
problem (TSP) seeks a loop $v_1\to v_2\to\dots v_p\to v_1$ through all the
nodes such that $\sum_{\ell=1}^{p-1}C_{v_\ell v_{\ell+1}}+C_{v_{p}v_1}$ is minimal.

In our context, we have a graph $G$ and wish to find an ordering of
the nodes that is strongly informed by $G$.  In particular, we would
like this path to travel along edges of $G$ as much as possible and
when this is not possible and the path must ``fly'' between two nodes
that are not connected by a node, that it take the shortest flight
possible.  We therefore take
$$
C_{jk}=d_G(j,k)-1.
$$
We use the R package {\tt TSP} \citep{tsp}, which provides an
interface to the {\tt Concorde} solver \citep{concorde}, and Figure
\ref{fig:tsp} shows the resulting paths.
\begin{figure}
  \centering
  \includegraphics[width=0.45\linewidth]{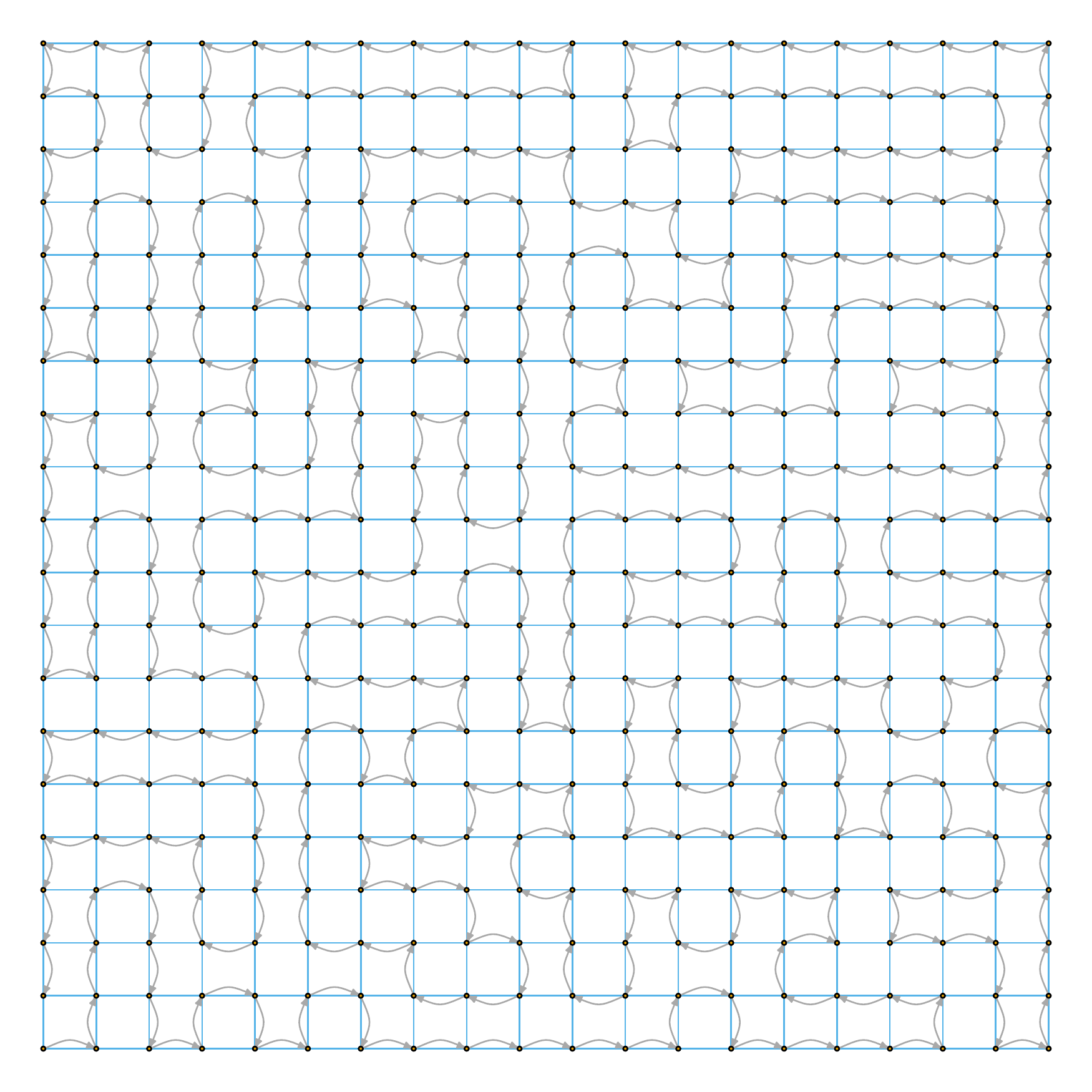}
  \includegraphics[width=0.45\linewidth]{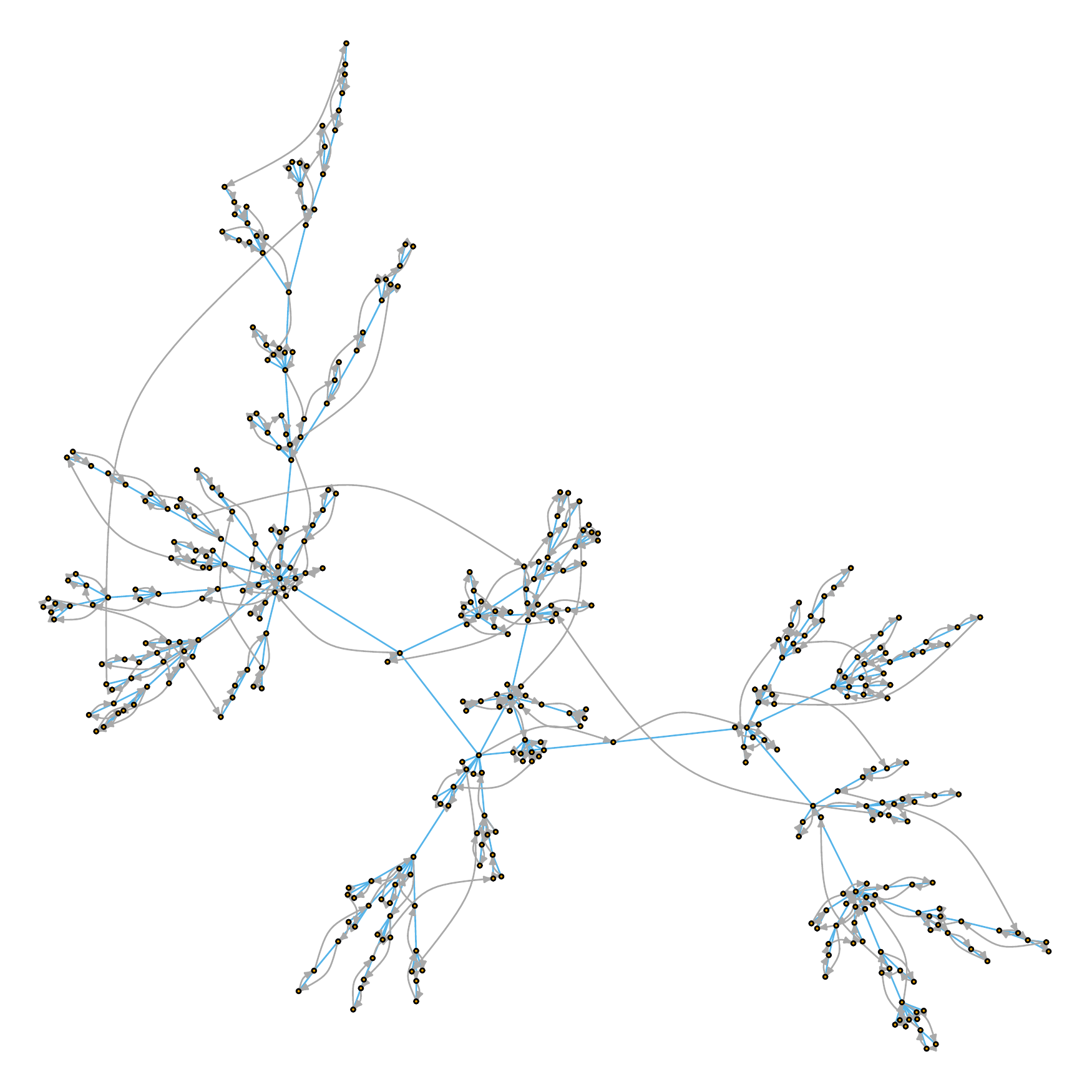}
  \caption{TSP solutions used for the two graphs used in the simulation studies}
  \label{fig:tsp}
\end{figure}

\section{An empirical study of positive semidefiniteness}
\label{supp-sec:psd}

For each of the models considered in Section
\ref{sec:lattice-versus-scale} of the main paper, we record the proportion of the time
each method is positive semidefinite.  In particular, this average is
over both the 200 simulation replicates and the values of the tuning parameter of
each method.  Table \ref{supp-tab:psd} shows that throughout this study the
banding methods (GGB and hierband) are always observed to give
positive semidefinite estimates.  This is not the case for the
soft-thresholding estimator.

\begin{table}

\caption{\label{supp-tab:psd}A comparison of propor PSD (averaged over
  a grid of tuning parameters and 200 replicates).}
\centering
\scriptsize
\begin{tabular}[t]{l|l|l|l|l}
\hline
  & GGB - global & GGB - local & Soft thresholding & TSP + hierband\\
\hline
2-d lattice (20 by 20), b = 4 & 1.000 (0.000) & 1.000 (0.000) & 0.349 (0.008) & 1.000 (0.000)\\
\hline
2-d lattice (20 by 20) w/ var-bw & 1.000 (0.000) & 1.000 (0.000) & 0.625 (0.008) & 1.000 (0.000)\\
\hline
Scale free (p = 400), b = 4 & 1.000 (0.000) & 1.000 (0.000) & 0.998 (0.001) & 1.000 (0.000)\\
\hline
Scale free (p = 400) w/ var-bw & 1.000 (0.000) & 1.000 (0.000) & 1.000 (0.000) & 1.000 (0.000)\\
\hline
\end{tabular}\end{table}

\section{A higher dimensional example}
\label{sec:higher-dim}

In Section \ref{sec:lattice-versus-scale}, we study four models in
which $p=400$ and $n=300$.  The covariance matrix has $p(p+1)/2\approx80,000$
parameters.  Thus, the number of parameters to be estimated greatly
exceeds the number of samples.  Nonetheless, we can ask what happens
if $p$ is even larger than this.  We revisit the example of the
2-d lattice, increasing the side length from 20 to 50 so that
$p=2,500$.  In this case the covariance matrix has
$p(p+1)/2\approx3,000,000$ parameters.  We focus in particular on the
GGB methods and soft thresholding.  Figure \ref{fig:largerp-roc}
provides a side-by-side comparison of the ROC curves (five replicates
are shown per method) when one
increases from $p=400$ to $p=2,500$.  The curves in the two plots are
different although this is difficult to discern by eye.  This suggests
that we are effectively in the same regime in these two simulated
scenarios.

\begin{figure}
  \centering
  \includegraphics[width=0.8\linewidth]{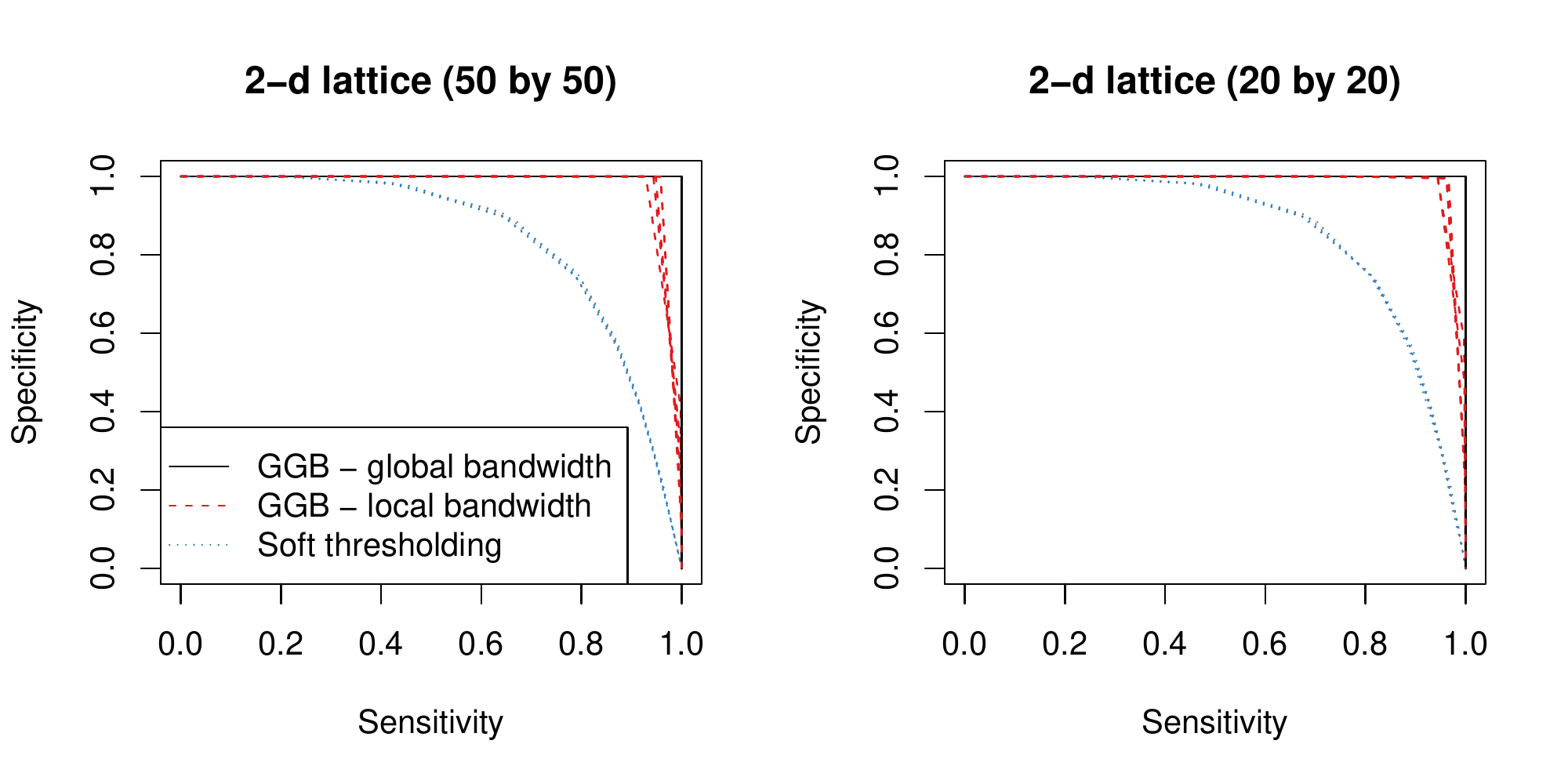}
  \caption{ROC curves for three methods when $p=400$ and $p=2,500$.}
  \label{fig:largerp-roc}
\end{figure}

\bibliography{refs}

\end{document}